\documentclass[11pt]{article}

\usepackage{fullpage}
\usepackage{amsmath,amsfonts,amsthm,xspace,hyperref,graphicx}
\usepackage{endnotes}
\usepackage{color}
\usepackage{dsfont}
\usepackage{kpfonts}

\usepackage{mathabx}
\usepackage{thm-restate}
\usepackage{authblk}

\usepackage{amssymb,latexsym}

\usepackage{titlesec}

\newtheorem{theorem}{Theorem}[section]
\newtheorem*{theorem*}{Theorem}
\newtheorem{proposition}[theorem]{Proposition}
\newtheorem{lemma}[theorem]{Lemma}
\newtheorem{claim}[theorem]{Claim}
\newtheorem{fact}[theorem]{Fact}

\newtheorem{definition}[theorem]{Definition}
\newtheorem*{definition*}{Definition}
\theoremstyle{definition}

\newtheorem{remark}[theorem]{Remark}

\newcommand{\beq}{\begin{eqnarray}}
\newcommand{\eeq}{\end{eqnarray}}

\setlength{\parskip}{3pt}
\titlespacing*{\paragraph}{0pt}{5pt}{3pt}
\titlespacing*{\section}{0pt}{7pt}{4pt}
\titlespacing*{\subsection}{0pt}{2pt}{2pt}
\titlespacing*{\subsubsection}{0pt}{2pt}{2pt}

\newcommand{\ket}[1]{|#1\rangle}
\newcommand{\bra}[1]{\langle#1|}

\newcommand{\Id}{\ensuremath{\mathop{\rm Id}\nolimits}}

\DeclareMathOperator{\Tr}{\mathrm{Tr}}
\newcommand{\C}{\ensuremath{\mathbb{C}}}
\newcommand{\N}{\ensuremath{\mathbb{N}}}

\newcommand{\R}{\ensuremath{\mathbb{R}}}

\DeclareMathOperator*{\Ex}{ \mathbb E}
\newcommand{\wtilde}{\widetilde}

\newcommand{\Xint}{X_{\textit{int}}}
\newcommand{\Yint}{Y_{\textit{int}}}
\newcommand{\Gint}{G_{\textit{int}} }

\newcommand{\rhohalf}{\rho^{1/2}}

\DeclareMathOperator{\poly}{poly}

\DeclareMathOperator{\valst}{\textsc{Val}^*}
\DeclareMathOperator{\val}{\textsc{Val}}

\newcommand{\eps}{\varepsilon}

\titleformat{\subsection}[block]
{\large \bfseries}
{\thesubsection}{7pt}{}[]

\titleformat{\section}[block]
{\Large \bfseries}
{\thesection}{7pt}{}[]
\bibliographystyle{alpha}

\interfootnotelinepenalty=10000

\begin{document}

\title{Parallel repetition via fortification: analytic view and \\ the quantum case}
\author[1]{Mohammad Bavarian \thanks{bavarian@mit.edu.  Work was supported by  National Science Foundation   grants CCF-0939370 and  CCF-1420956.}}

\author[2]{Thomas Vidick\thanks{vidick@cms.caltech.edu. Work was partially supported by the IQIM, an NSF Physics Frontiers Center (NFS Grant PHY-1125565) with support of the Gordon and Betty Moore Foundation (GBMF-12500028).
}}
\author[1]{Henry Yuen\thanks{hyuen@mit.edu. Work was supported by Simons Foundation grant \#360893, and National Science Foundation Grants 1122374 and 1218547. Work partially conducted while visiting the IQIM at Caltech.}}
\affil[1]{Massachusetts Institute of Technology, Cambridge, MA, USA}
\affil[2]{Department of Computing and Mathematical Sciences, California Institute of Technology, Pasadena, CA, USA }
\date{\today}

\maketitle
\begin{abstract}


In a recent work, Moshkovitz [FOCS '14] presented a transformation on two-player games called ``fortification'', and gave an elementary proof of an (exponential decay) parallel repetition theorem for fortified two-player projection games. In this paper, we give an \emph{analytic reformulation} of Moshkovitz's fortification framework, which was originally cast in combinatorial terms. This reformulation allows us to expand the scope of the fortification method to new settings.

First, we show \emph{any} game (not just projection games) can be fortified, and give a simple proof of parallel repetition for general fortified games. Then, we prove parallel repetition and fortification theorems for games with players sharing quantum entanglement, as well as games with more than two players. This gives a new gap amplification method for general games in the quantum and multiplayer settings, which has recently received much interest.

An important component of our work is a variant of the fortification transformation, called ``ordered fortification", that preserves the entangled value of a game. The original fortification of Moshkovitz does not in general preserve the entangled value of a game, and this was a barrier to extending the fortification framework to the quantum setting.

\end{abstract}

\section{Introduction}

A central concept in theoretical computer science and quantum information is that of a \emph{two-player one-round game}. A two-player game $G$ is specified by question sets $X$ and $Y$, answer sets $A$ and $B$, a distribution $\mu$ over pairs of questions, and a verification predicate $V:A \times  B\times X\times Y\rightarrow \{0,1\}$. The game is played between two cooperating (but non-communicating) players and a referee. The referee samples $(x,y)\in X\times Y$ according to $\mu$ and sends $x$ and $y$ to each player, who provide answers $a\in A$ and $b\in B$ respectively. The players win the game if their answers satisfy the predicate $V(a,b,x,y)$. 

Games arise naturally in settings ranging from hardness of approximation~\cite{hastad2001} and interactive proof systems~\cite{ben1988,fortnow1988} to the study of Bell inequalities and non-locality in quantum physics~\cite{chsh_paper, cleve2004consequences}. Depending on the context it is natural to consider players with access to different resources. In particular, in this work we distinguish between the \textsl{classical value} $\val(G)$ of a game, defined as the maximum winning probability of players allowed to produce their answers using private and  shared randomness, and the \textsl{entangled value} $\val^*(G)$, for which the players may use shared entanglement as well .

An important operation on games and the main focus of this work is that of \emph{repeated tensor product} or \emph{parallel repetition}. This operation takes a game $G$ and a parameter $m$ and outputs a new game $G^{\otimes m}$ in which $m$ independent instances of $G$ are simultaneously played with the two players: the referee samples $m$ independent questions $\{(x_i, y_i)\}_{i=1}^{m}$ from $G$, sends $(x_1, \ldots, x_m)$ to the first and $(y_1, \ldots, y_m)$ to the second player, and checks their corresponding answers $(a_1, \ldots, a_m)$ and $(b_1, \ldots, b_m)$ using the predicate $V=\prod_{i=1}^m V(a_i, b_i, x_i, y_i)$. Parallel repetition is often used in complexity theory in order to perform some form of \emph{amplification}, such as amplifying the completeness-soundness gap of a proof system. A fundamental question that arises in this context is how the value of a repeated game $G^{\otimes m}$ relates to the value of the original game $G$ and the number of repetitions $m$.

The behavior of the game value under parallel repetition can in general be quite subtle~\cite{feige2002error, feige2007understanding,  raz2011counterexample}. In a celebrated paper, Raz showed that if $\val(G) < 1$, then $\val(G^{\otimes m})$ goes to $0$ exponentially fast in $m$~\cite{raz1998parallel}. Even with later simplifications and improvements to the proof (e.g.~\cite{holenstein2007parallel, rao2011parallel, braverman_garg}), Raz's parallel repetition theorem remains a substantial technical result.

Recently, Moshkovitz \cite{mosh2014} introduced a simple yet powerful framework for parallel repetition, called \emph{parallel repetition via fortification}. In this framework, a game $G$ is transformed through an operation called ``fortification'' to a new game $G'$. This new game $G'$ is equivalent to $G$ in that $\val(G) = \val(G')$, but then Moshkovitz shows that behavior of the value of \emph{fortified} games under parallel repetition is much simpler than the general case, and avoids many of the subtleties encountered in the general case. The main benefits of fortified games are two-fold: first, their behavior under parallel repetition is much simpler than the general case, and second, all games can be easily fortified. Thus for nearly all intents and purposes, it suffices to focus on the parallel repetition of fortified games.

Despite its attractive features, the fortification framework~\cite{mosh2014} has some limitations; for instance it is only applicable to  the restricted (though very important) setting of classical two-player projection games. 
In this work, we continue the study of the fortification approach to parallel repetition and try to expand its scope to wider classes of games.

Previously, an attempt to address the limitation of fortification to projection games was made in~\cite[Lemma 1.9]{BSVV}, but this was only partially successful.\footnote{The difficulty there was that fortification increases the alphabet size of the game considerably, which in turn increased the additive error of the parallel repetition to the extent that for general two-prover games the whole approach seemed to entirely break down. We get around this difficulty by showing that for the kinds of games (concatenated games) that arise from the fortification procedure it is possible to establish a parallel repetition theorem where the additive error only depends on the alphabet size of the original, rather than the fortified game. See Subsection \ref{sec:techniques_intro} for more on this.} Here, using a slightly modified form  of the framework (see the discussion of our analytic formulation below), we are able to extend fortification to all \emph{classical games} (i.e.~games that are not projection games, and involve any number of players), as well as the challenging setting of \emph{entangled games}.

More precisely, the following is a summary of our main contributions:

\begin{itemize}
\item 
\textbf{Analytic formulation of fortification.} The framework of parallel repetition by fortification was originally cast in combinatorial terms; Moshkovitz's definition of fortified games, which we describe below in detail, involves a guarantee on the value of every sufficiently large rectangular subgame of a game. In our analytic reformulation, fortified games are defined in terms of \emph{substrategies}, which one can think of as randomized strategies for the game where the probability that the players output an answer may be less than $1$. This definition behaves much more ``smoothly'', allowing us to generalize them to the entangled and multiplayer settings. 

\item \textbf{Fortification of general classical games and  games with more than two players.} Next, we show how to fortify a general $k$-player game $G$, for any $k \geq 2$. We show that for any two fortified general classical games $G'$ and $H'$, $\val(G' \otimes H') \approx \val(G') \cdot \val(H')$. Together this implies new gap amplification results for general (as opposed to projection) two-player and multiplayer classical games. 

\item \textbf{An entangled-value preserving variant of concatenation.} A major obstacle in extending the fortification framework to the quantum setting is that concatenation, the main ingredient of the original fortification results, does not in general preserve the entangled value. That is, if $G'$ is the fortification of $G$, it doesn't generally hold that $\valst(G') = \valst(G)$ (even though $\val(G') = \val(G)$). This is problematic for obtaining gap amplification results: if $\valst(G) = 1$, then $\valst(G^{\otimes n}) = 1$, but $\valst({G'}^{\otimes n})$ could be exponentially small! 
 
To resolve this issue, we augment the ordinary concatenation procedure of  \cite{mosh2014} by giving the players some auxiliary advice input (see Definition \ref{def:G_OF_intro}) which helps in keeping the entangled value unchanged. Using this, we define  a variant of the fortification transformation which we call \emph{ordered fortification}. As desired, in addition to preserving the classical value, this transformation also preserves the entangled value, which is essential for the completeness of our gap amplification result.

\item \textbf{Fortification of games with entangled players.} We show that for a general two-player game $G$, its ordered-fortification $G_{OF}$ is a two-player game such that $\val^*(G_{OF}) = \val^*(G)$, and is also quantumly fortified. We then prove that for any two quantumly fortified games $G'$ and $H'$, $\valst(G' \otimes H') \approx \valst(G') \cdot \valst(H')$. Together this implies a new general gap amplification method for entangled two-player games. This (see Theorem \ref{thm:main_fortification}) is the most technically challenging component of this work.


\end{itemize}

Let us note that our extensions of the fortification approach, as described in the last three items above, are ultimately enabled by the analytic viewpoint described in point 1.  In order to describe the main ideas behind these results, we first briefly recall the combinatorial framework from~\cite{mosh2014}.

\paragraph{The combinatorial framework.} Let $G$  be a two-player game with question sets $X, Y$ and acceptance predicate $V$. For $S\subseteq X$ and $T\subseteq Y$, the \emph{subgame} $G_{S\times T}$ is defined as the game where the referee selects $(x,y)\in X\times Y$ according to $\mu$ conditioned on $x\in S, y\in T$ and checks the players' answers according to the same predicate $V$ (the referee accepts automatically if $\mu(S\times T)=0$). A game $G$ is said to be $(\eps,\delta)$-\emph{combinatorially fortified}\footnote{We shall refer to this notion as  combinatorial fortification to distinguish it from the distinct (though closely related) notion of \emph{analytic fortification} we primarily use throughout the paper; see Definition \ref{def:fortification}.} if  
\begin{equation}\label{eq:old_fortified}
  \val(G_{S\times T})\leq \val(G)+\eps,\qquad \forall \, S\subseteq X,  \, T\subseteq Y, \mbox{ s.t. } \mu(S \times T) \geq \delta.
\end{equation}
The main insight underlying \cite{mosh2014} is that  games satisfying (\ref{eq:old_fortified}) also satisfy a strong form of parallel repetition (up to some number of rounds depending on $\eps$, $\delta$, and the alphabet size of $G$).  This motivates the following approach to parallel repetition: Given a game $G$, Moshkovitz transforms the game $G\rightarrow G'$ such that $\val(G')\approx\val(G)$ and $G'$ is $(\eps, \delta)$-combinatorially fortified for an appropriate choice of $(\eps, \delta)$. Since fortified games satisfy a strong form of parallel repetition, one expects 
\begin{equation}
	\val( {G'}^{\otimes m})\approx \val(G')^m\approx \val(G)^m.
\end{equation}
Indeed, by appropriately choosing the parameters $(\eps, \delta)$,~ \cite{mosh2014} can show that the full procedure 
\begin{equation}\label{eq:fort_diagram}
G\longrightarrow G' \longrightarrow {G'}^{\otimes m}	
\end{equation}
amounts to a size-efficient method of \emph{gap amplification}. That is, we have
\begin{equation}\label{eq:gap_amplification}
\begin{array}{ccc}
\val(G)\geq c \quad \Rightarrow \quad &\val({G'}^{\otimes m})\gtrsim c^m \\
	\val(G)\leq s \quad \Rightarrow \quad &\val({G'}^{\otimes m}) \lesssim s^m
\end{array},
\end{equation}
where we refer to the first condition as completeness and the second as soundness.  The gap amplification procedure  of Moshkovitz $G\rightarrow G'\rightarrow {G'}^{\otimes m}$ from \eqref{eq:fort_diagram} has three components: (i) a preprocessing step (biregularization), (ii) fortification, (iii) parallel repetition for fortified games. 

The goal of the preprocessing step --  the simplest step of the three --  is to make the game \emph{biregular} (a game $G$ is called biregular if the marginals of questions on both Alice and Bob sides are uniform), since it is typically easier to analyze the fortification procedure for such games.  The second step is \emph{fortification}, which is the main technical ingredient of the whole approach. It is achieved by ``concatenating'' the game (see Section~\ref{sec:techniques_intro} below) with appropriate bipartite pseudorandom graphs.  The third step $G'\rightarrow {G'}^{\otimes m}$ is the parallel repetition of fortified games, which as observed by \cite{mosh2014} is considerably simpler to analyze than the general (non-fortified) games.


\subsection{Results and techniques}\label{sec:techniques_intro}

The main result of our work is the extension of the fortification framework to general classical games (with any number of players) and two-player entangled games. On the way to these results we prove new results on all three components of the fortification framework: (i) biregularization, (ii) fortification, and (iii) parallel repetition. In this subsection, we discuss some of these results in detail. 


\paragraph{Parallel repetition.}
A main contribution of \cite{mosh2014} was the realization that the two-player projection fortified games satisfy a strong form of parallel repetition, up to an additive error. This additive error depended on the parameters of fortification as well as the alphabet size of the fortified game. In this work, we prove an improved parallel repetition theorem (Theorem \ref{thm:PR2_multiround}) which has the same dependance in the parameters of fortification, but instead of the alphabet size of the resulting fortified game, it only depends on the alphabet size of the original game (which  has exponentially smaller alphabet size). This new parallel repetition theorem is crucial for extending the fortification framework to the setting of general (as opposed to projection) two-player games. 

Let us remark that the reason why alphabet blow-up of fortification does not cause an issue for projection games is because for projection games it suffices to only fortify one side of the game (by working with so-called ``square projection" version of the game). As a result there is no alphabet blow-up for the ``unfortified'' side, which allows the arguments of \cite{mosh2014,BSVV} to go through. This one-sided fortification does not work for general games, which is why we need Theorem \ref{thm:PR2_multiround}.

\paragraph{Fortification.} 
We start with a definition. Let  $G = (X\times Y,A\times B,\mu,V)$ be a game, and $M$ and $P$  two bipartite graphs over vertex sets $(X',X)$ and $(Y',Y)$ respectively. For each $x\in X$ or $x'\in X'$ let $N(x) \subseteq X'$ and $N(x')\subseteq X$ denote the set of neighbors of $x$ and $x'$, respectively (similarly for any $y, y'$).
\begin{definition}[Concatenated game~\cite{mosh2014}]\label{def:concat_intro}
In the concatenated game $G'=(M\circ G\circ P)$, the referee selects questions $(x,y)$ according to $\mu$, and independently selects a random neighbor $x'$ for $x$ using $M$, and $y'$ for $y$ using $P$. The players receive questions $x'$ and $y'$ and respond with assignments $a':N(x')\rightarrow A$ and $b':N(y')\rightarrow B$ respectively. The players win if $V(a'(x), b'(y),x,y)=1$. 
\end{definition}
Our first two main results show how, both in the classical and quantum settings, any game can be fortified by concatenating it with bipartite graphs $M$ and $P$ with sufficiently good spectral expansion.\footnote{Even though explaining these results in full requires the definition of analytically fortified games, which we introduce only later in Section~\ref{sec:anal_fort}, the analogy with the notion of combinatorially fortified games from \eqref{eq:old_fortified} should still be sufficient to understand the basic ideas.} (See Section~\ref{sec:expanders} for the definition of spectral expanders, and Section~\ref{sec:anal_fort} for the notion of weak fortification.)

\begin{theorem}[Main-classical]\label{thm:classical_fortification}
Let $G$ be a biregular game and $M$ and $P$ two bipartite $\lambda$-spectral expanders. If $\lambda\leq \frac{\eps}{2}\sqrt{\frac{\delta}{2}}$, then  the concatenated game $G'=(M\circ G\circ P)$ is $(\eps,\delta)$-weakly fortified against classical substrategies.
\end{theorem}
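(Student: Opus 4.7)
The plan is to bound the value-to-mass ratio of any classical substrategy $\sigma$ in $G' = M \circ G \circ P$ of mass at least $\delta$ by $\val(G)+\eps$, via a spectral decomposition that exploits expansion. By convexity, $\sigma$ is specified by engagement sets $S\subseteq X'$, $T\subseteq Y'$ together with assignment functions $f_A:S \to A^{N(\cdot)}$, $f_B: T\to B^{N(\cdot)}$. Defining $\phi_A(x) := \Pr_{x'\sim N(x)}[x'\in S]$ and $\phi_B(y):= \Pr_{y'\sim N(y)}[y'\in T]$, one checks by Fubini that $|\sigma| = \Es{(x,y)\sim\mu}[\phi_A(x)\phi_B(y)]$ and $\val(\sigma) = \Es{(x,y)\sim\mu}[\phi_A(x)\phi_B(y)\,w(x,y)]$, where $w(x,y) \in [0,1]$ is the conditional win probability of the unfolded randomized substrategy in $G$ (Alice: given $x$, sample $x'\sim N(x)$, if $x' \in S$ play $f_A(x')(x)$; analogously for Bob) given that both players engage on $(x,y)$.

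The main step is to decompose $\phi_A = \alpha + \phi_A^\perp$, where $\alpha = |S|/|X'|$ and $\Es{\mu_X}[\phi_A^\perp]=0$, and similarly $\phi_B = \beta + \phi_B^\perp$. Biregularity makes $\mu_X,\mu_Y$ uniform, and $\phi_A$ (resp.\ $\phi_B$) is the image of $\mathbf{1}_S$ (resp.\ $\mathbf{1}_T$) under the bistochastic operator associated with $M$ (resp.\ $P$). The $\lambda$-spectral expansion of $M$ yields $\|\phi_A^\perp\|_{L^2(\mu_X)} \leq \lambda\sqrt{\alpha(1-\alpha)} \leq \lambda\sqrt{\alpha}$, and symmetrically $\|\phi_B^\perp\|_{L^2(\mu_Y)} \leq \lambda\sqrt{\beta}$. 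Substituting into $\val(\sigma)$ yields a leading term $\alpha\beta\cdot\Es{\mu}[w]$ plus three cross terms. The leading term is at most $\alpha\beta\cdot\val(G)$, because $\Es{\mu}[w]$ is the value in $G$ of the randomized full strategy in which Alice samples $x'\sim N(x)$ and plays $f_A(x')(x)$ (with arbitrary extension of $f_A$ outside $S$), and analogously for Bob. Each cross term is bounded via Cauchy--Schwarz in $L^2(\mu)$ (using $w\leq 1$ and the uniform marginals): the single-sided cross terms are bounded by $\alpha\lambda\sqrt{\beta}$ and $\beta\lambda\sqrt{\alpha}$, each at most $\lambda\sqrt{\alpha\beta}$, and the double cross term by $\lambda^2\sqrt{\alpha\beta}$. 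An identical computation for $|\sigma|$ (with $w\equiv 1$) yields $|\sigma| = \alpha\beta \pm O(\lambda^2\sqrt{\alpha\beta})$.

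Putting this together, $\val(\sigma) \leq \alpha\beta\,\val(G) + O(\lambda\sqrt{\alpha\beta})$ and $|\sigma| \geq \alpha\beta(1 - O(\lambda^2))$, so $\val(\sigma)/|\sigma| \leq \val(G) + O(\lambda/\sqrt{\alpha\beta}) \leq \val(G) + O(\lambda/\sqrt{\delta})$, which is at most $\val(G) + \eps$ under the hypothesis $\lambda \leq \tfrac{\eps}{2}\sqrt{\delta/2}$. The main obstacle I anticipate is the Cauchy--Schwarz estimate for the double cross term $\Es{\mu}[\phi_A^\perp(x)\phi_B^\perp(y)w(x,y)]$: because $\mu$ couples $x$ and $y$, to factor this into a product of $L^2$ norms of $\phi_A^\perp$ and $\phi_B^\perp$ one must appeal to biregularity and carefully verify that the bivariate weight $w(x,y)$ contributes only its $L^\infty$-norm, rather than a factor growing with the alphabet or neighborhood sizes; this is what allows the spectral expansion bound to be applied as a black box, and it is the specific structural reason the theorem hypothesizes biregularity of $G$ together with a two-sided spectral (rather than merely edge-expander) guarantee on $M$ and $P$.
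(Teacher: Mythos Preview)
Your approach is essentially the paper's: both write $\val(G',f,g)=\Ex_\mu[f(x)g(y)\,w(x,y)]$ with $w$ the normalized (conditional) win rate, bound $\Ex_\mu[w]\leq\val(G)$, and control the fluctuation of the mass via the spectral expansion of $M,P$. The only cosmetic difference is the pivot: the paper uses $\gamma=\Ex_\mu[f(x)g(y)]$ and the triangle inequality $\Ex[fg\,w]\leq\gamma\Ex[w]+\Ex|fg-\gamma|$, whereas you decompose $\phi_A=\alpha+\phi_A^\perp$, $\phi_B=\beta+\phi_B^\perp$ and expand. Your convexity reduction to deterministic ``engagement'' substrategies is valid (the inequality to be proved is affine in each of $f,g$), though the paper does not need it.

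There is one genuine slip. Your justification that $\Ex_\mu[w]\leq\val(G)$ is wrong: the strategy you describe---sample $x'\sim N(x)$ uniformly and play $f_A(x')(x)$ with $f_A$ extended arbitrarily outside $S$---does \emph{not} have success probability $\Ex_\mu[w]$, since $w(x,y)$ is conditioned on $\{x'\in S,\,y'\in T\}$. The correct witness is the complete strategy in which Alice, given $x$, samples $x'$ uniformly from $N(x)\cap S$ (and plays arbitrarily if that set is empty); this has value exactly $\Ex_\mu[w]$. Equivalently, in the paper's language, $(x,a)\mapsto f(x,a)/f(x)$ and $(y,b)\mapsto g(y,b)/g(y)$ are valid complete strategies for $G$. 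Finally, your anticipated obstacle with the double cross term is not one: Cauchy--Schwarz in $L^2(\mu)$ plus $|w|\leq 1$ and the uniform marginals give $|\Ex_\mu[\phi_A^\perp\phi_B^\perp w]|\leq\|\phi_A^\perp\|_{L^2(\mu_X)}\|\phi_B^\perp\|_{L^2(\mu_Y)}\leq\lambda^2\sqrt{\alpha\beta}$ directly.
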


\begin{theorem}[Main-quantum]\label{thm:main_fortification}
Let $G$ be a biregular game and $M$ and $P$ two bipartite $\lambda$-spectral expanders. If $\lambda\leq \frac{\eps^2\delta}{56}$, then  the concatenated game $G'=(M\circ G\circ P)$ is $(\eps,\delta)$-weakly fortified against entangled strategies. 
\end{theorem}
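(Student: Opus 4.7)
My plan is to prove that for every entangled substrategy $\mathcal{S}'$ of $G'$ whose weight $|\mathcal{S}'|$ is at least $\delta$, one has $\val^*(G',\mathcal{S}')\le|\mathcal{S}'|\cdot\val^*(G)+\eps\cdot|\mathcal{S}'|$ (or the analogous inequality from the analytic definition of weak fortification in Section~\ref{sec:anal_fort}).

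\emph{Step 1 (pullback through concatenation).} Given POVM families $\{A_{a'}^{x'}\}$ on Alice and $\{B_{b'}^{y'}\}$ on Bob realizing $\mathcal{S}'$ on a shared state $|\psi\rangle$, I would first construct a natural substrategy $\tilde{\mathcal{S}}$ for $G$ by averaging each side over its expander neighborhoods:
\[
\tilde A_a^x=\Ex_{x'\sim M(x)}\!\!\sum_{a':\,a'(x)=a}A_{a'}^{x'},\qquad \tilde B_b^y=\Ex_{y'\sim P(y)}\!\!\sum_{b':\,b'(y)=b}B_{b'}^{y'}.
\]
Biregularity of $G$, combined with the random-neighbor step in the definition of $G'$ (Definition~\ref{def:concat_intro}), implies that the joint law of $(x,y,x',y')$ may be generated either top-down (sample $(x,y)\sim\mu$, then $x'\sim M(x)$, $y'\sim P(y)$) or bottom-up (sample $(x',y')$ first), so I obtain $\val^*(G,\tilde{\mathcal{S}})=\val^*(G',\mathcal{S}')$ and $|\tilde{\mathcal{S}}|=|\mathcal{S}'|$. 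If the inequality $\val^*(G,\tilde{\mathcal{S}})\le|\tilde{\mathcal{S}}|\cdot\val^*(G)$ held for arbitrary substrategies of $G$ we would be done, but that would be tantamount to $G$ being fortified; the expander property of $M,P$ must be used next to show that $\tilde{\mathcal{S}}$ is close to a substrategy for which the factorized bound does hold.

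\emph{Step 2 (quantum expander mixing and symmetrization).} The core technical input will be a bipartite \emph{quantum} expander-mixing estimate of the form
\[
\Bigl|\Ex_{(x,y)\sim\mu}\Ex_{x'\sim M(x)}\Ex_{y'\sim P(y)}\langle\psi|T(x')\otimes U(y')|\psi\rangle-\Ex_{x'\in X'}\Ex_{y'\in Y'}\langle\psi|T(x')\otimes U(y')|\psi\rangle\Bigr|=O(\lambda),
\]
valid for any uniformly norm-bounded operator-valued functions $T:X'\to\pos{\mathcal{H}_A}$, $U:Y'\to\pos{\mathcal{H}_B}$ and any bipartite state $|\psi\rangle$. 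I would apply this, with $T$ and $U$ instantiated from the POVM weights and the predicate $V$, inside the value expression for $\mathcal{S}'$, to replace the expander-correlated sampling of $(x,x')$ and $(y,y')$ by \emph{independent} uniform sampling on $X'$ and $Y'$ at additive cost $O(\lambda)$. In the resulting symmetrized expression the Alice-marginal weight factorizes cleanly from the Bob-marginal weight, so renormalizing by $|\mathcal{S}'|$ extends it to a full entangled strategy for $G$, whose value is at most $\val^*(G)$. Tracking constants yields $\val^*(G',\mathcal{S}')\le|\mathcal{S}'|\cdot\val^*(G)+O(\lambda/|\mathcal{S}'|)$, which combined with $|\mathcal{S}'|\ge\delta$ and $\lambda\le\eps^2\delta/56$ gives the theorem.

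\emph{Main obstacle.} The chief difficulty, entirely absent from the proof of Theorem~\ref{thm:classical_fortification}, is that $\mathcal{S}'$ cannot be replaced by a pair of support sets to which the combinatorial bipartite expander mixing lemma could be applied directly: Alice's and Bob's POVMs act nontrivially on their halves of $|\psi\rangle$ and do not in general commute. The symmetrization in Step~2 must therefore be done at the level of operator-valued functions on $X'$ and $Y'$, and the natural way to pass from a spectral bound on $M$ or $P$ to a usable inequality involves one Cauchy--Schwarz step per side. Each such step loses a square root, so the final bound degrades quadratically in $\eps$; this is precisely what forces the hypothesis $\lambda\le\eps^2\delta/56$, in contrast to the milder $\lambda\le(\eps/2)\sqrt{\delta/2}$ of the classical case.
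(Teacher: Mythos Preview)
Your Step~1 is correct and is exactly what the paper does: the induced operators $\tilde A_x^a,\tilde B_y^b$ form a substrategy for $G$ with the same value and weight as $\mathcal S'$.

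The gap is in Step~2, specifically in the sentence ``renormalizing by $|\mathcal S'|$ extends it to a full entangled strategy for $G$.'' In the classical proof the renormalization is \emph{pointwise}: one replaces $f(x,a)$ by $f(x,a)/f(x)$, and the expander is then used to show that the prefactor $f(x)g(y)$ is close to its average $\gamma$. Quantumly there is no scalar division that turns a sub-POVM $\{A_x^a\}_a$ with $\sum_a A_x^a=A_x\le\Id$ into a POVM; dividing every operator by the same number $|\mathcal S'|$ does not make $\sum_a A_x^a$ equal to $\Id$. The correct analogue is operator conjugation, $\widehat{A_x^a}=U_xA_x^{-1/2}A_x^aA_x^{-1/2}U_x^\dagger$, but now the ``state'' on which these act is no longer $\ket\psi$: it becomes the $(x,y)$-dependent object $K_{xy}\propto U_xA_x^{1/2}\rho^{1/2}B_y^{1/2}V_y^\dagger$. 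To compare with $\valst(G)$ one must replace $K_{xy}$ by a single $K$ independent of $(x,y)$, and controlling $\Ex_{(x,y)}\|K_{xy}-K\|_F^2$ is where the real work is. This requires an operator-level expander bound (the paper's Claim~7.4, a Frobenius-norm inequality $\Ex_x\Tr_\rho((A_x-A)^2)\le 2\lambda^2\Ex_{x'}\Tr_\rho(A_{x'}^2)$, not merely a scalar mixing lemma of the form you wrote) together with the Powers--St{\o}rmer/Kittaneh inequality $\Tr((X-Y)^4)\le\Tr((X^2-Y^2)^2)$ to pass from closeness of $A_x$ to $A$ to closeness of their square roots. Your proposed ``quantum expander mixing'' for scalar quantities $\langle\psi|T(x')\otimes U(y')|\psi\rangle$ would at best control $\Ex_{(x,y)}\Tr(A_x\rho^{1/2}B_y\rho^{1/2})$ versus its average, which is only the easy first term in the paper's~(7.4); the hard second term---bounding the rescaled expression by $\valst(G)$---needs the square-root machinery you have not supplied.
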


We stress that  both in the quantum and classical settings the procedure used to fortify a game is precisely the same, i.e.~concatenation with spectral expanders, and the only difference is in the resulting parameters. Despite the similarities, the proof of Theorem \ref{thm:main_fortification} is significantly more involved, requiring several new ideas and substantial matrix analytic arguements.

Next we discuss a distinctively quantum phenomenon which makes the construction of a full quantum gap amplification theorem -- quantum analogue of \eqref{eq:gap_amplification} -- considerably more difficult. As it turns out, even though Theorem \ref{thm:main_fortification} is sufficient to prove the soundness case of the gap amplification theorem,  the concatenation procedure used in the process can undermine the completeness condition (i.e.~$\valst({G'}^{\otimes m}) \gtrsim \valst(G)^m$ in general fails to hold).

The issue is as follows: let $G$ be a game and $G'=(M\circ G\circ P)$ be a concatenated version of $G$. Classically we have $\val(G')=\val(G)$. Quantumly, even though we still have $\valst(G')\leq \valst(G)$ the other direction in general fails: we would have liked to argue that the players in $G'$ are able to utilize the strategy in $G$ to achieve the same success probability in the concatenated game, but this seems impossible: having received $x'\in X'$ and $y'\in Y'$, the players have access to lists $N(x')\subseteq X$ and $N(y')\subseteq Y$ that they know contain the true questions of the referee, i.e.~$x^*\in N(x')$, $y^*\in N(y')$. The players would like to apply their optimal strategy in $G$ to each and every $(x,y)\in N(x')\times N(y')$ simultaneously, but this is in general impossible in the quantum setting.\footnote{This is because the measurement operators of different questions do not in general commute which prevents Alice (say) to obtain simultaneous answers for all questions in $N(x')$. As a further illustration of this issue, see Section \ref{sec:concat_prelim} for an example of a game where $\valst(G')< \valst(G)$. }

Note that the same issue does not arises classically because the optimal strategy in $G$ can be taken to be a deterministic one, and the players in $G'$ can use the same labeling suggested by the optimal strategy in $G$ to give labels to all of $N(x')$ and $N(y')$ simultaneously. This strategy however relies on the fact that classically different questions have a  simultaneous labeling, a fact which certainly has no quantum analogue.

 We resolve the above issue using a novel entangled value-preserving variant of fortification which we call \emph{ordered fortification}. The basic idea for ordered fortification is to give the players some extra advice information which helps in preserving the entangled value. 
 
  Let $G$  be a game and $G'=(M\circ G\circ P)$ be a concatenated version of $G$. There is an extra parameter $l$ in the construction defined as $l=\max\left\{ \max_{x'\in X'} |N(x')|, \max_{y'\in Y'} |N(y')| \right\}$.
\begin{definition}[Ordered concatenation]\label{def:G_OF_intro}
Let $G$ and  $G'$ be as above. In $G'_{OF}$, the referee samples $(x,y)$ according to $G$ and picks random neighbors $x'\sim N(x)$ and $y'\sim N(y)$ independently. She then also picks two random injective maps $r_{x'}:N(x')\rightarrow [l]$ and $s_{y'}:N(y')\rightarrow [l]$ conditioned on $s_{x'}(x)= r_{y'}(y)$. The referee sends $x'$ and $r_{x'}$ to the first player, and $y'$ and $s_{y'}$ to the second and accepts if the players' answers $a':N(x')\rightarrow A$ and $b':N(y')\rightarrow B$ satisfy  $V(a'(x), b'(y), x,y)=1$.
\end{definition}
Here the crucial point is that  $r_{x'}$ and $s_{y'}$ are correlated. They give  matching labels to true questions $x$ and $y$. To achieve the same winning probability as in $G$, the players in $G'_{OF}$ will share $l$ copies of the state $|\psi\rangle$ from the optimal strategy in $G$. For each $x^*\in N(x')$ with label $i=r_{x'}(x^*)$, the first player will apply the optimal $G$-strategy for $x$ to the $i^{\textit{th}}$ copy of $|\psi\rangle$ (similarly for the second player). The fact that $r_{x'}(x)= s_{y'}(y)$ ensures that for the true questions $x$ and $y$ the players apply the optimal $G$ strategies to the same copy of $|\psi\rangle$, and hence are able to win with exactly the same winning probability as in $G$. 

Of course, the crucial part here is that even though the auxiliary information in $r_{x'}$ and $s_{y'}$ is helpful to the players for replicating the winning probability of  $G$, it should not be ``too helpful". In particular, we need to still be able to prove that $G'_{OF}$ is fortified with appropriate parameters. This point is established by the following theorem.

\begin{theorem}[Main-ordered fortification]\label{thm:fort_OF_main}
Let $G$ be a game and $M$ and $P$ be two bipartite graphs as above. Let $G'_{OF}$ be constructed from $G$ and $G'=(M\circ G\circ P)$ as in Definition \ref{def:G_OF_intro}. Then, we have 
\[ \valst(G'_{OF})= \valst(G).\]
Furthermore if $M$ and $P$ are $\lambda$-spectral expanders and $\lambda\leq \frac{\eps^2\delta}{56}$, then $G'_{OF}$ is also $(\eps, \delta)$ weakly fortified.
\end{theorem}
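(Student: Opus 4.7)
The two assertions of the theorem are handled separately.

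\emph{Value preservation.} For $\valst(G'_{OF}) \geq \valst(G)$, I use the construction indicated in the discussion preceding the theorem. Alice and Bob share $l$ copies of the optimal entangled state $\ket{\psi}$ for $G$. On receiving $(x', r_{x'})$, Alice applies the optimal $G$-POVM for each $x^* \in N(x')$ to the $r_{x'}(x^*)$-th copy of her half of $\ket{\psi}$, recording the outcomes as $a' : N(x') \to A$; Bob acts symmetrically. Injectivity of $r_{x'}$ guarantees that these measurements act on disjoint factors and commute, so the joint measurement is well-defined. Because $r_{x'}(x) = s_{y'}(y)$, the outputs $a'(x)$ and $b'(y)$ come from measuring the same copy of $\ket{\psi}$ with the optimal $G$-POVMs for $(x,y)$, giving exactly the winning distribution of $G$. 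For the reverse inequality, given any strategy for $G'_{OF}$, Alice and Bob simulate it inside $G$ via shared randomness $i \in [l]$: Alice, given $x$, samples $x' \sim N(x)$ and a uniform injection $r_{x'}$ conditioned on $r_{x'}(x) = i$, then applies the $G'_{OF}$ Alice-POVM to output $a'(x)$; Bob mirrors. Since the marginal of $r_{x'}(x)$ under the $G'_{OF}$ referee's conditional distribution is uniform on $[l]$, the joint distribution of $(x, y, x', y', r_{x'}, s_{y'})$ in this simulation coincides with that of the $G'_{OF}$ referee, so the winning probabilities match.

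\emph{Weak fortification.} The plan is to reduce to Theorem~\ref{thm:main_fortification} applied to the underlying concatenation $G' = M \circ G \circ P$. Starting from an entangled $G'_{OF}$ substrategy $S_{OF}$ of weight $\geq \delta$, the natural move is to average Alice's operators over the uniform injection $r_{x'}$ (and likewise for Bob), producing a $G'$ substrategy whose weight is preserved. The subtlety is that the $G'_{OF}$ value of $S_{OF}$ is an expectation conditioned on the correlation event $\{r_{x'}(x) = s_{y'}(y)\}$, which has probability only $1/l$ under independent injections, so a naïve averaging relates the two values by an unusable multiplicative factor of $l$. To avoid this loss, I would decompose the $G'_{OF}$ referee's distribution by the shared label $i := r_{x'}(x) = s_{y'}(y)$ (uniform on $[l]$): conditional on $i$, the two injections become independent subject to fixing their image at the true question. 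Treating $i$ as a public coin and carrying the injection labels into the substrategy operators, I expect one can run the matrix-analytic argument of Theorem~\ref{thm:main_fortification} in parallel over the $l$ slices, with the spectral bounds on $M, P$ simultaneously controlling the concatenation averaging over $(x', y')$ and the injection averaging, which preserves the threshold $\lambda \leq \eps^2 \delta/56$.

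The chief obstacle I anticipate is the mismatch between the player-side information (the injections $r_{x'}, s_{y'}$) and the game-side conditioning (on the unknown true questions $x, y$). Reconciling these two views requires a symmetrization argument that exploits the uniformity of the injection distribution together with the fact that the substrategy operators depend only on the injections, not on which vertex in $N(x')$ or $N(y')$ is sent to a given label. This symmetrization is where the specific form of the ordered-fortification construction (as opposed to simply supplying independent uniform injections to the players) plays an essential role, and is the step I expect to require the most technical care.
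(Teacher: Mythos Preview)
Your value-preservation argument is correct. The $\geq$ direction matches the paper's Proposition~\ref{prop:value_preserve_entangled}. For $\leq$ you give a direct simulation via shared randomness; the paper instead observes (Proposition~\ref{prop:modified_frotified_concat}) that $G'_{OF}$ is \emph{itself} a concatenated game, $G'_{OF} = \tilde{M} \circ G^{\oplus l} \circ \tilde{P}$ for explicit bipartite graphs $\tilde{M},\tilde{P}$ built from $M,P$ and the injection structure (Definition~\ref{def:modified_fortified}), and then invokes the general bound $\valst(G'_{OF}) \leq \valst(G^{\oplus l}) = \valst(G)$ from Proposition~\ref{prop:val_inequality_classical_quantum}. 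Your route is a valid alternative.

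For weak fortification, however, you are missing this same structural observation, and it is precisely what makes the paper's proof short. Once one recognizes $G'_{OF} = \tilde{M} \circ G^{\oplus l} \circ \tilde{P}$, Theorem~\ref{thm:main_fortification} applies to this concatenated game \emph{as a black box}: the only remaining work is to bound the spectral expansion of $\tilde{M}$ and $\tilde{P}$. The paper does this in Claim~\ref{claim:SVs_modified_fortified} by an explicit eigenvalue computation, writing $A_{\tilde{M}}^\top A_{\tilde{M}} - \tfrac{1}{d}\Id$ as a tensor product $\tilde{B}\otimes\tilde{J}$ (with $\tilde{B}$ coming from $M$ and $\tilde{J}$ from the injections) and reading off $\lambda_{\tilde{M}} \leq \max\{\lambda_M, 1/\sqrt{d_M-1}\}$. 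No part of the matrix-analytic argument of Section~\ref{sec:quantum_generalization} is reopened.

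Your proposed route---decomposing over the shared label $i$ and running the proof of Theorem~\ref{thm:main_fortification} ``in parallel over the $l$ slices''---is in fact groping toward the same structure: the pair $(x,i)$ is exactly the inner-game question of $G^{\oplus l}$, and your ``injection averaging'' is the action of the graph $\tilde{M}$. But because you have not recognized this as a concatenation, you anticipate a delicate symmetrization to reconcile the player-side injections with the referee-side conditioning on the unknown true question. That reconciliation is automatic once the concatenated-game viewpoint is adopted, and the entire fortification claim collapses to a single spectral bound on $\tilde{M},\tilde{P}$.
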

We prove Theorem \ref{thm:fort_OF_main} in Section \ref{sec:modified_fortification} using a spectral argument that reduces it to Theorem \ref{thm:main_fortification}. Beside the above, we also prove a simple multiplayer fortification in Section \ref{sec:classical} for classical games. It may be possible to adapt the proofs of Theorem~\ref{thm:main_fortification} and Theorem \ref{thm:multiplyaer_classical} to obtain a \emph{multiplayer fortification theorem} for entangled games. Although plausible, some further technical issues arise in this case which we do not pursue here. 

\paragraph{Biregularization.} As already mentioned, biregularization is a minor (but necessary) step in the fortification framework. Our biregularization lemmas are presented in Subsection \ref{sec:bireg_intro} and are proved in Appendix \ref{app:bireg}. In terms of final statement, our biregularization lemmas are incomparable with those of \cite{mosh2014,BSVV}. For example, in the case of graphical games, we prove a biregularization lemma which preserves the value exactly but has a cubic blow-up in the number of questions, whereas the biregularization lemmas from~\cite{BSVV, mosh2014} had a nearly linear blow-up but only preserved value up to an additive error. Moreover, in this work we prove biregularization for all games whereas \cite{BSVV, mosh2014} only considered graphical games. (See Subsection \ref{sec:bireg_intro} for definitions.)


\subsection{Related work}\label{sec:related_work}
The main result underlying the present work is Moshkovitz~\cite{mosh2014}, where the framework of parallel repetition via fortification was first introduced. Some simplifications and corrections to the work of Moshkovitz appeared in Bhangale et al.~\cite{BSVV}. In particular, an important contribution of \cite{BSVV} was the clarification of the best bounds possible in classical fortification theorems~\cite[Appendix C]{BSVV}. Going back, the general idea of modifying the game in order to facilitate  its analysis under parallel repetition originates from the work of Feige and Kilian~\cite{feige2000two} who introduced the confuse/miss-match style repetition of games. The Feige-Kilian type parallel repetition was later extended by Kempe and Vidick~\cite{kempe2011parallel} to the quantum setting allowing them to obtain the first general parallel repetition theorem for quantum games.\footnote{Their transformation did not preserve the entangled value, and hence did not lead to a fully general hardness amplification method for entangled games (being restricted to the case where the completeness holds with classical strategies). Our Theorem~\ref{thm:main_fortification}, without the improvement of Theorem~\ref{thm:fort_OF_main}, is applicable to a similar setting.} 

Another important set of ideas underlying our work is related to the analytic approach to parallel repetition pioneered by Dinur and Steurer \cite{DS14}, further extended by Dinur et al.~\cite{DinurSV14}. Our analytic reformulation of fortification framework is very much inspired by the ideas in these works.

Yet another different stream of work (more distantly related to the present work) follows the original ideas of Raz and Holenstein \cite{raz1998parallel, holenstein2007parallel} by taking a more information theoretic approach to quantum parallel repetition. The first results in this direction were obtained by Chailloux and Scarpa~\cite{ChaillouxS14} and  Jain et al.~\cite{JainPY14} who prove exponential-decay parallel repetition results for free two-player games. Their analysis, as well as the follow-up work of Chung et al.~\cite{chung2015parallel}, provided the basis for the recent work of the authors~\cite{BVY15} who obtained hardness amplification method in  great generality by introducing and analyzing the parallel repetition of a class of games called \emph{anchored games}. The final hardness amplification results obtained here through fortification are incomparable to that of \cite{BVY15}: fortification allows for a somewhat faster rate of decay in some regimes,\footnote{For example, for general two-player games the fortification approach gives a nearly perfect decay in terms of number of repetitions and question size blow-up, whereas~\cite{BVY15} and other information theoretic results have a $(1-\eps^2)^{\Omega(m)}$ type behavior which is weaker than fortification when $m\ll \eps^{-1}$.} yet it suffers from a much larger blow-up in terms of alphabet size.


Turning to the multiplayer setting, very little was known prior to the present work and \cite{BVY15}. It is folklore that free games with any number of players satisfy a parallel repetition theorem, and this was explicitly proved in both classical and quantum settings in~\cite{chung2015parallel}. Multiplayer parallel repetition has been studied in the setting of \emph{non-signaling strategies}, a superset of entangled strategies which allows the players to generate any correlations that do not imply communication. Buhrman et al.~\cite{buhrman2013parallel} show that the non-signaling value of a game $G$ with any number of players decays exponentially under parallel repetition, with a rate of decay that depends on the entire description of the game $G$. Arnon-Friedman et al.~\cite{arnon2014non} and Lancien and Winter~\cite{lancien2015parallel} achieve similar results using a different technique based on ``de Finetti reductions''. An interesting fact about the latter work~\cite{lancien2015parallel} is the use of a notion of sub-no-signalling strategies which seems related to our notion of quantum/classical substrategies.
\subsection{Organization}
In Section \ref{sec:prelim}, we introduce some basic definitions and notation including the notion of substrategies, induced strategies, and  some other basic results and definitions that are used throughout the paper. In Section \ref{sec:fort_frame}, we  complete the presentation of our main results (which was started in the introduction), discuss the parameters of the final gap amplification results, and present the formal definition of analytically fortified games. 

The remaining sections contain proof of the main theorems. Our parallel repetition theorem is proved in Section \ref{sec:concat}. Theorems \ref{thm:classical_fortification} and \ref{thm:main_fortification} are proved in Sections \ref{sec:classical} and  \ref{sec:quantum_generalization}, respectively. The reduction from Theorem \ref{thm:fort_OF_main} to Theorem \ref{thm:main_fortification} is given in Section \ref{sec:modified_fortification}.  The biregularization lemmas are proved in Appendix \ref{app:bireg}. We conclude by some open problems in Section \ref{sec:open_problems}.


\section{Preliminaries}\label{sec:prelim}
Given a distribution $\mu$,  by $z\sim \mu$ we mean that the random variable $z$ is distributed according to $\mu$. For a set $S$, by $z\sim S$ we mean $z\sim U_S$ where $U_S$ is the uniform distribution over $S$. For Hermitian matrices $A,B$ we write $A \geq B$ if and only if $A-B$ is positive semidefinite. 

Given two games  $G_1$ and $G_2$, we define the tensor product game $G_1\otimes G_2$  as the game where the referee selects two pairs of questions  $(x_1,y_1)$ and $(x_2, y_2)$ independently according to $G_1$ and $G_2$, sends $(x_1, x_2)$ and $(y_1,y_2)$ to Alice and Bob respectively, and checks their answers $(a_1, a'_2)$ and $(b_1, b_2)$ according to the product predicate $\prod_{i=1}^{2} V(a_i,b_i, x_i, y_i)$. 

\subsection{Game value and strategies}
The main goal of this section is to introduce the notion of classical and quantum \emph{substrategies} which replace the notion of \emph{subgames} from \cite{mosh2014, BSVV}. As subgames were central in the \emph{combinatorial} framework of~\cite{mosh2014}, substrategies are similarly central to our analytic framework.

Let $G$ be a game with question sets $X,Y$, answer sets $A,B$, predicate $V$, and question distribution $\mu$ on $X\times Y$. 
\begin{definition}[Classical substrategies]\label{def:classical_sub}
Let $G = (X\times Y,A \times B, \mu,V)$ be a two-player game. A \emph{classical substrategy} is given by $(f,g)$ where $f:X\times A\rightarrow [0,1], g:Y\times B\rightarrow [0,1]$ satisfy
\begin{equation*} 
\forall x\in X, \, f(x):= \sum_{a} f(x,a)\leq 1, \qquad   \forall y\in Y,\, g(y):=  \sum_{b} g(y,b)\leq 1.
\end{equation*}
We call $(f,g)$ a  \emph{``complete strategy"} (sometimes simply \emph{strategy}) if equality holds in all above inequalities, i.e. $f(x)=g(y)=1$  for all $x,y$.
\end{definition}
\begin{definition} Given a substrategy $(f,g)$,  the value of $G$ with respect to $(f,g)$ is given by 
\begin{equation} \label{eq:val_general_fcn} \val(G,f,g):=\Ex_{(x,y)\sim \mu} \sum_{a\in A, b\in B} V(a,b,x,y) \, f(x,a)\cdot g(y,b).	
\end{equation}
The \emph{classical value} of $G$ is 
 \begin{equation}\label{eq:val_anal}
	 \val(G):= \sup_{f,g}\, \val(G,f,g),
\end{equation}
where the supremum is taken over all complete strategies $f,g$. 
\end{definition} 
We note that the definition given by \eqref{eq:val_anal} can be easily seen to be equivalent to the more traditional definition of the classical value, i.e.~
\begin{equation}\label{eq:val_comb}
	 \val(G):= \max_{\substack{p:X\rightarrow A \\ q:Y\rightarrow B}} \,\Ex_{(x,y)\sim \mu} V(p(x), q(y), x,y),
\end{equation}
because any strategy $f:X\times A\rightarrow [0,1], g:Y\times B\rightarrow [0,1]$ can be written as convex combination of a collection of strategies of $\{0,1\}$ valued strategies; on the other hand, taking supremum over $f,g$ which are $\{0,1\}$ valued is precisely equivalent to \eqref{eq:val_comb}.

Next, we extend the above notions to the quantum setting. 

\begin{definition}[Quantum substrategies]
	Let $G = (X \times Y, A \times B,\mu,V)$ be a two-player game. A \emph{quantum} (or \emph{entangled}) \emph{substrategy} for $G$ is a tuple $(\ket{\psi},\{A_x^a\},\{B_y^b\})$ defined by an integer $d\in \N$, a unit vector $|\psi\rangle \in \C^{d\times d}$ and sets of positive semi-definite matrices $\{A_{x}^{a}\}_{x\in X, a\in A}, \{B_{y}^b\}_{y\in Y, b\in B}$ over $\C^d$ satisfying
\begin{equation}
	 \forall x\in X, \,  A_x:=\sum_{a} A_{x}^a \leq \Id, \qquad   \forall y\in Y,\, B_y:= \sum_{b} B_y^b\leq \Id.
\end{equation}
 If $A_x = B_y = \Id$ for every $x,y$ the quantum substrategy is called a \emph{``complete strategy"} (sometimes simply \emph{strategy}). 
\end{definition}
\begin{definition}
Given  a quantum substrategy $(\ket{\psi},\{A_x^a\},\{B_y^b\})$,  the  value of $G$ with respect to $(\ket{\psi},\{A_x^a\},\{B_y^b\})$ is given by
\[
	\valst(G,\ket{\psi},\{A_{x}^{a}\},\{B_{y}^b\}) = \Ex_{(x,y) \sim \mu} \sum_{a,b} V(a,b,x,y) \bra{\psi} A_x^a \otimes B_y^b \ket{\psi}.
\]
The \emph{entangled value} of $G$ is defined as
\begin{equation}
\valst(G)= \sup_{|\psi\rangle, \{A_x^{a}\}, \{B_y^b\}} \valst(G, |\psi\rangle, \{A_x^a\}, \{B_y^b\}),	
\end{equation}
where the supremum is taken over all complete strategies $(\ket{\psi},\{A_x^a\},\{B_y^b\})$.
\end{definition}

\subsection{Concatenated games}\label{sec:concat_prelim}

Let  $G = (X\times Y,A\times B,\mu,V)$ be a game, and $M$ and $P$  two bipartite graphs over vertex sets $(X',X)$ and $(Y',Y)$ respectively. For each $x\in X$ or $x'\in X'$ let $N(x) \subseteq X'$ and $N(x')\subseteq X$ denote the set of neighbors of $x$ and $x'$, respectively (similarly for any $y, y'$). Recall the definition of Concatenated Games from the introduction.
\begin{definition*}[Definition \ref{def:concat_intro} restated]
In the concatenated game $G'=(M\circ G\circ P)$, the referee selects questions $(x,y)$ according to $\mu$, and independently selects a random neighbor $x'$ for $x$ using $M$, and $y'$ for $y$ using $P$. The players receive questions $x'$ and $y'$ and respond by assignments $a':N(x')\rightarrow A$ and $b':N(y')\rightarrow B$ respectively. The players win if $V(a'(x), b'(y),x,y)=1$. 

For a concatenated game $G'=(M\circ G\circ P)$, we refer to $G'$ as the \emph{outer game} and to $G$ as the \emph{inner game}.
\end{definition*}

Let $G'=(M\circ G\circ P)$ be a concatenated game. Let $d_{X'}=\max_{x'\in X'} |N(x')|$, $d_{Y'}=\max_{y'\in B'}|N(y')|$. Then,  the alphabet  of the concatenated game is given by  $A'= A^{d_{X'}}$, $B'= B^{d_{Y'}}$.  Similarly, it is easy to see that the distribution  $\mu'$ of questions in $G'$ is given by
$\mu'(x',y')= \Ex_{(x,y)\sim \mu} \frac{1_{x'\in N(x)} }{|N(x)|}\cdot \frac{1_{y'\in N(y)}}{ |N(y)|}.$

\begin{definition}
Let $G'=(M\circ G\circ P)$ be a concatenated game. To any pair of substrategies $(f,g)$ for $G'$ we associate the \emph{induced substrategy}\footnote{Note the slight (but convenient) abuse of notation due to the use of the same letter to represent a substrategy and the corresponding induced substrategy. The more accurate but  more cumbersome way of denoting the induced strategies in in~\cite{DS14}'s language would have been $Mf$ and $Pg$.}
\begin{equation}
f(x,a):= \Ex_{x'\sim N(x)} \sum_{a': a'(x)= a} f(x',a'), \qquad 	g(y,b):= \Ex_{y'\sim N(y)} \sum_{b': b'(y)= b} f(y',b').
\end{equation}
Similarly, given an entangled substrategy $(\ket{\psi},\{A_{x'}^{a'}\},\{B_{y'}^{b'}\})$ for  $G'$, we define the \emph{induced substrategy} as 
\begin{equation} A_{x}^a := \Ex_{x'\sim N(x)} \sum_{a'(x)= a} A_{x'}^{a'}, \qquad 	B_y^b:= \Ex_{y'\sim N(y)} \sum_{b'(y)= b} B_{y'}^{b'}.
\label{eq:projected_operators}
\end{equation}
\end{definition}

Intuitively, an induced strategy is a strategy for the inner game in which the players proceed as follows:  given question $x\in X$, $y\in Y$ and a strategy $(f,g)$ for the outer game,  the players select two random neighbors of their questions $x'\in N(x), y'\in N(y)$ independently, and play according to the labeling of $x,y$ suggested by  $(f,g)$ at $x'$ and $y'$.  

The following simple proposition will play an important role throughout the paper.
\begin{proposition}\label{prop:val_inequality_classical_quantum}
	Let $G'=(M\circ G\circ P)$ be a concatenated game.	The value of any classical strategy $(f,g)$ (resp. quantum strategy $(\ket{\psi},\{A_{x'}^{a'}\},\{B_{y'}^{b'}\})$) in the outer game $G'$ is equal to the value of the induced strategy  in the inner game $G$:
	\begin{equation}\label{eq:induced-val}
		\val(G',f,g) = 	\val(G,f,g)\qquad\text{and}\qquad	\valst(G',\ket{\psi},\{A_{x'}^{a'}\},\{B_{y'}^{b'}\}) = \valst(G,\ket{\psi},\{A_x^a\},\{B_y^b\}).
\end{equation}
As a consequence, 
	\begin{equation}\label{eq:induced-val-lower}
	\val(G')\leq \val(G), \qquad \text{and}\qquad \valst(G')\leq \valst(G).
	\end{equation}
	Furthermore, 
\begin{equation}\label{eq:induced-val-class} \val(G')=\val(G).
\end{equation}
\end{proposition}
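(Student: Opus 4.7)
My plan is to prove the two equalities in \eqref{eq:induced-val} by a direct expansion of the game value. First I would expand $\val(G',f,g)$ according to the sampling process of $G'$: an expectation over $(x,y)\sim\mu$ followed by independent draws $x'\sim N(x)$ and $y'\sim N(y)$, then sums over $a':N(x')\to A$ and $b':N(y')\to B$ weighted by $V(a'(x),b'(y),x,y)\,f(x',a')\,g(y',b')$. The key step is to rearrange these sums by grouping $a'$ according to the value $a=a'(x)$ and $b'$ according to $b=b'(y)$; the conditional inner sums of $f(x',a')$ and $g(y',b')$ collapse to exactly the induced $f(x,a)$ and $g(y,b)$ as defined in the paragraph preceding \eqref{eq:projected_operators}, yielding $\val(G,f,g)$ on the inner game evaluated at the induced substrategy. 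The quantum identity follows from the same bookkeeping applied to the bilinear form $\bra{\psi}A_{x'}^{a'}\otimes B_{y'}^{b'}\ket{\psi}$: by linearity the grouped sums package the operators into the induced POVM elements of \eqref{eq:projected_operators}, and the identity drops out immediately.

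Next, for the consequence \eqref{eq:induced-val-lower}, I would check that the induction map preserves completeness: if $\sum_{a'}f(x',a')=1$ for every $x'$, then $\sum_{a}f(x,a)=\Ex_{x'\sim N(x)}\sum_{a'}f(x',a')=1$, and analogously $\sum_{a}A_x^a=\Ex_{x'\sim N(x)}\sum_{a'}A_{x'}^{a'}=\Id$ in the quantum case (with positivity of each $A_x^a$ inherited from the originals). Hence the induced object associated to any complete outer strategy is itself a complete strategy for $G$, so \eqref{eq:induced-val} bounds the outer value by $\val(G)$ (resp.\ $\valst(G)$); taking the supremum over complete strategies for $G'$ yields \eqref{eq:induced-val-lower}.

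To upgrade the classical inequality to the equality \eqref{eq:induced-val-class}, I would use the standard fact that $\val(G)$ is attained by a deterministic strategy $p:X\to A$, $q:Y\to B$, and lift it to a deterministic outer strategy for $G'$ by setting $a'(w):=p(w)$ for every $w\in N(x')$ and $b'(w):=q(w)$ for every $w\in N(y')$, for each $x'\in X'$ and $y'\in Y'$ independently. Since $a'(x)=p(x)$ and $b'(y)=q(y)$ regardless of the auxiliary choices of $x'$ and $y'$, the value of the lifted strategy in $G'$ equals $\Ex_{(x,y)\sim\mu}V(p(x),q(y),x,y)=\val(G)$, and together with \eqref{eq:induced-val-lower} this gives the equality.

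There is no serious technical obstacle: the proposition amounts to unpacking the definitions and a clean grouping of terms. The only subtle point worth emphasizing — and the reason only an inequality is asserted for $\valst$ — is that the classical lifting argument in the last step has no quantum analogue. It would require a single measurement on Alice's register to simultaneously output $p(w)$ for every $w\in N(x')$, but the POVMs $\{A_w^a\}_a$ for distinct $w$'s need not commute. This observation foreshadows precisely why the authors must introduce the ordered-concatenation construction later in the paper in order to preserve $\valst$.
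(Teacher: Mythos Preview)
Your proposal is correct and follows essentially the same approach as the paper: a direct expansion and regrouping for \eqref{eq:induced-val}, the observation that induced strategies of complete strategies are complete for \eqref{eq:induced-val-lower}, and the deterministic lifting for \eqref{eq:induced-val-class}. If anything, you are slightly more explicit than the paper in verifying that the induced strategy is complete, and your closing remark on the failure of the quantum lifting is exactly the point the paper makes immediately after its proof.
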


\begin{proof}
The first equality in~\eqref{eq:induced-val} follows from linearity of expectation and the definition of induced strategies as
\begin{align*}
	\val(G',f,g) &=\Ex_{(x,y)\sim \mu}\Ex_{x'\sim N(x)} \Ex_{y'\sim N(y)} \sum_{a'\in A', b'\in B'} V(a'(x), b'(y), x,y) \, f(x',a')\cdot g(y',b')\\
&=\Ex_{(x,y)\sim \mu}  \sum_{a\in A, b\in B} V(a, b, x,y) \, f(x,a)\cdot g(y,b)\\
&=	\val(G,f,g).
\end{align*}
The second equality is proved similarly. 
The two inequalities~\eqref{eq:induced-val-lower} follow directly from~\eqref{eq:induced-val}. To show~\eqref{eq:induced-val-class} it remains to show that $\val(G')\geq \val(G)$. Consider an optimal deterministic strategy for $G$ 	given by $p:X\rightarrow A$ and $q:Y\rightarrow B$. For any $x'\in X'$, $y'\in Y'$  define $a':N(x')\rightarrow A$ according to $p$ and $b':N(y')\rightarrow B$ according to $q$. It is easy to see that this achieves the same value in $G'$ as $(p,q)$ did in $G$. 
\end{proof}
As mentioned in the introduction, the quantum analogue of~\eqref{eq:induced-val-class} does not hold in general. For example, consider the case that  $M$ and $P$ are complete bipartite graphs. In this case, the players playing $G'=(M\circ G \circ P)$ need to provide a labeling to all vertices in $X$ and $Y$ simultaneously. But this is essentially just a classical strategy as the labelings for $X,Y$ are now fixed. Hence, $\valst(G')=\val(G)$, the classical value, which in many cases could be much smaller than $\valst(G)$.

 \subsection{Biregularization}\label{sec:bireg_intro}
As in \cite{mosh2014, BSVV} we prove our fortification theorems for the special class of \emph{biregular games}.

 \begin{definition}
A two-prover game $G=(X\times Y,A\times B,\mu,V)$ is called biregular if the marginals of $\mu$ on $X$ and $Y$ are both uniform. 
\end{definition}

The following lemma justifies that for our purposes we may always assume a game is biregular.

\begin{lemma}[Biregularization lemma]\label{lem:biregular}
Let $G=(X\times Y,A\times B,\mu,V)$ be a two-prover game and $\tau\in (0,1)$ a fixed constant. There exists an efficient algorithm that given $G$ produces a biregular game  $\Gint$ with question sets $\Xint$ and $\Yint$ of cardinality at most 
\begin{equation}
|\Xint |\leq \frac{ 8  |X|^2   |Y|}{\tau}, \qquad |\Yint |\leq \frac{8 |X||Y|^2}{\tau},	
\end{equation}
the same answer alphabet size as $G$, and value satisfying 
\begin{equation}\label{eq:value_bireg}
	\val(G)\leq \val(G_{int})\leq \val(G)+\tau, \qquad \valst(G)\leq \valst(G_{int}) \leq \valst(G)+\tau.
\end{equation}
\end{lemma}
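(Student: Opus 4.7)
The plan is to build $\Gint$ by replicating each question in proportion to its marginal probability so that the question distribution on the enlarged alphabet becomes biregular. Because $\mu_X,\mu_Y$ may be irrational, exact proportional replication is impossible, so I will approximate $\mu$ from below by a rational sub-distribution and pad the construction with a small auto-accept component to absorb the rounding error. Concretely, fix an integer $N$ of order $|X||Y|/\tau$, set $a_{xy}=\lfloor N\mu(x,y)\rfloor$, $\tilde\mu(x,y)=a_{xy}/N$, and $\delta=1-\sum_{x,y}\tilde\mu(x,y)\le |X||Y|/N\le\tau$. Let $\Xint$ consist of $b_x:=\sum_y a_{xy}$ ``real'' copies of each $x$, together with $N\delta$ dummy vertices $\bot_X^i$, and define $\Yint$ symmetrically (with $c_y:=\sum_x a_{xy}$ real copies of each $y$ plus $N\delta$ dummies). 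Set
\[
\mu_{\textit{int}}((x,k),(y,l))=\frac{a_{xy}}{N b_x c_y},\qquad \mu_{\textit{int}}(\bot_X^i,\bot_Y^j)=\frac{1}{N^2\delta},
\]
with zero mass on mixed real-dummy pairs. A direct computation verifies each vertex has marginal $1/N$, so $|\Xint|=|\Yint|=N$, comfortably within the asserted bounds. I let the acceptance predicate agree with $V$ on real-real pairs (projected to $X\times Y$) and auto-accept (output $1$) whenever either question is a dummy.

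For the value analysis I will follow the template of Proposition~\ref{prop:val_inequality_classical_quantum}. Given any classical strategy $(f,g)$ (resp.\ quantum strategy $(\ket\psi,\{A_{\tilde x}^a\},\{B_{\tilde y}^b\})$) for $\Gint$, the real-real contribution to $\val(\Gint,f,g)$ equals $\val(\tilde G,\bar f,\bar g)$, where $(\bar f,\bar g)$ (resp.\ $(\bar A,\bar B)$) is the induced strategy obtained by averaging over copies of each question; and the dummy-dummy contribution equals exactly $\delta$, because $\sum_{a,b}f(\bot_X^i,a)g(\bot_Y^j,b)=1$ in the classical case (resp.\ $\sum_{a,b}\bra\psi A_{\bot_X^i}^a\otimes B_{\bot_Y^j}^b\ket\psi=1$ in the quantum case, since the POVMs sum to $\Id$). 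Taking suprema over complete strategies,
\[
\val(\Gint)=\val(\tilde G)+\delta,\qquad \valst(\Gint)=\valst(\tilde G)+\delta.
\]

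It remains to sandwich the values of $\tilde G$ against those of $G$. Since $\tilde\mu\le\mu$ pointwise and $V\le 1$, any strategy achieves value at most $\val(G)$ in $\tilde G$, giving $\val(\tilde G)\le\val(G)$; conversely, plugging an optimal $G$-strategy into $\tilde G$ loses at most $\sum_{x,y}(\mu-\tilde\mu)(x,y)=\delta$ of value, so $\val(\tilde G)\ge\val(G)-\delta$, and the same holds for $\valst$. Combining, $\val(G)\le\val(\Gint)\le\val(G)+\tau$, as desired. The main subtle point is obtaining the strict inequality $\val(G)\le\val(\Gint)$ rather than merely two-sided approximation: this is precisely the role of the auto-accept dummies, which ``refund'' the players the $\delta$ winning probability lost to the rational rounding. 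The quantum case is handled identically; the key identity $\sum_{a,b}\bra\psi A_\bot^a\otimes B_\bot^b\ket\psi=1$ parallels the classical $\sum_{a,b}f(\bot,a)g(\bot,b)=1$ because POVMs sum to identity just as probability distributions sum to one.
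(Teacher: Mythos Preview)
Your proof is correct and is essentially the same construction as the paper's, presented in a single step rather than two. The paper first rounds $\mu$ down to a rational distribution (adding a single auto-accepting null question pair $(x_{nul},y_{nul})$ to absorb the deficit) to obtain a graphical game $\tilde G$, and then applies its graphical biregularization lemma, which replicates each vertex $x$ into $d_x$ copies via concatenation; unwinding this, the null vertex gets replicated into exactly your $N\delta$ dummies and each real vertex $x$ into your $b_x$ copies, yielding the same $\Gint$ you build directly. Your one-shot presentation is slightly more self-contained, while the paper's phrasing as a concatenation $G_{int}=M_{int}\circ \tilde G\circ P_{int}$ ties the construction into the framework used elsewhere in the paper.
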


Note that \eqref{eq:value_bireg} implies that applying the Biregularization Lemma to a game never decreases  its value, and hence the procedure is completeness preserving. 

A widely used class of games in applications are so-called \emph{graphical games}, for which we can get an improved biregularization result that does not require any approximation factor $\tau$.  

\begin{definition}\label{def:graphical_games}
A graphical game $G$ is a game where the questions are given by choosing an edge of a bipartite graph uniformly at random (i.e.~$E\subseteq X\times Y$ and $\mu(x,y)=\frac{1}{|E|}$ if $(x,y)\in E$ and $\mu(x,y)=0$ otherwise). The predicate and the answers do not have any restrictions.
\end{definition}
\begin{lemma}[Biregularization lemma, graphical case]\label{lem:biregular_graphical}
Suppose $G$ is two-prover graphical game with $E$ edges between $(X,Y)$. There exists an efficient algorithm that given $G$ produces a biregular game  $\Gint$ with question sets $\Xint$ and $\Yint$ bounded by 
\begin{equation}\label{eq:prelim_1}
	 |\Xint |\leq |E| \cdot |X|\leq |X|^2 |Y|, \qquad |\Yint |\leq |E|\cdot |Y|\leq |X| |Y|^2,
\end{equation}
the same answer alphabet size as $G$, and the value satisfying 
\begin{equation}
	\val(G)= \val(G_{int}) \qquad \valst(G)= \valst(G_{int}).
	\end{equation}
\end{lemma}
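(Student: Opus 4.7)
The plan is to biregularize $G$ by replacing each vertex with a number of ``copies'' chosen so that the induced distribution on each side becomes exactly uniform, while the inner structure of the game is preserved. Concretely, I define
\[
\Xint \;=\; \bigl\{ (x,i) \,:\, x \in X,\; 1 \leq i \leq |X|\cdot \deg(x)\bigr\}, \qquad \Yint \;=\; \bigl\{ (y,j) \,:\, y \in Y,\; 1 \leq j \leq |Y|\cdot \deg(y)\bigr\},
\]
so that $|\Xint| = |X|\sum_x \deg(x) = |X|\cdot|E|$ and analogously $|\Yint| = |Y|\cdot|E|$, matching \eqref{eq:prelim_1}. In $\Gint$ the referee samples an edge $(x,y) \in E$ uniformly (as in $G$), then independently samples $i \in [|X|\deg(x)]$ and $j \in [|Y|\deg(y)]$ uniformly, and sends $(x,i)$ to Alice and $(y,j)$ to Bob. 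The answer sets remain $A$ and $B$, and the verification predicate simply ignores the auxiliary indices: $V_{int}((x,i),(y,j),a,b) = V(x,y,a,b)$.

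The biregularity check is immediate: the marginal probability that Alice receives $(x,i)$ equals $\frac{\deg(x)}{|E|} \cdot \frac{1}{|X|\deg(x)} = \frac{1}{|\Xint|}$, and similarly on Bob's side, so both marginals are uniform.

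The remaining step is the value equivalence in both the classical and entangled cases, which I prove by exhibiting value-preserving maps in each direction. In one direction, any (sub)strategy $(f,g)$ for $G$ lifts to $\Gint$ by setting $\wtilde f(x,i,a) = f(x,a)$ and $\wtilde g(y,j,b) = g(y,b)$ (and analogously $\wtilde A_{(x,i)}^a = A_x^a$, $\wtilde B_{(y,j)}^b = B_y^b$ in the entangled case); the formula~\eqref{eq:val_general_fcn} and its quantum analogue give exactly $\val(\Gint,\wtilde f,\wtilde g) = \val(G,f,g)$. In the reverse direction, given a strategy for $\Gint$ I average over the auxiliary indices: set $\bar f(x,a) = \Ex_{i}\, f(x,i,a)$ and $\bar g(y,b) = \Ex_{j}\, g(y,j,b)$ in the classical case, and $\bar A_x^a = \Ex_i A_{(x,i)}^a$, $\bar B_y^b = \Ex_j B_{(y,j)}^b$ in the quantum case. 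These averages remain valid (sub)strategies since the POVM constraint $\sum_a \bar A_x^a = \Ex_i \sum_a A_{(x,i)}^a = \Id$ is preserved by convex combinations, and a direct computation using linearity shows the resulting value in $G$ equals the original value in $\Gint$. Taking suprema over complete strategies gives both $\val(G) = \val(\Gint)$ and $\valst(G) = \valst(\Gint)$.

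The only delicate point I anticipate is book-keeping in the entangled case, namely verifying that the averaging $\bar A_x^a = \Ex_i A_{(x,i)}^a$ on Alice's side (and analogously on Bob's) produces a quantum strategy whose bilinear form against $\ket{\psi}$ reproduces the value of the $\Gint$-strategy against the same state. Since Alice's and Bob's averages are taken on independent copies of independent registers, the averaging commutes with the expectation over questions and can be pulled outside the inner product $\langle \psi | A_x^a \otimes B_y^b | \psi \rangle$, so the calculation goes through with no change in the shared state $\ket{\psi}$.
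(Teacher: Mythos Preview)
Your proof is correct and follows essentially the same idea as the paper's: replace each vertex $x$ by a number of copies proportional to its degree so that the marginal becomes uniform, and then show value preservation by lifting in one direction and averaging over the auxiliary index in the other. The paper phrases this as a concatenation $G_{int}=M_{int}\circ G\circ P_{int}$ (so that the averaging step is exactly the ``induced strategy'' of Proposition~\ref{prop:val_inequality_classical_quantum}), but the content is identical to your direct computation.

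One minor remark: the extra factor of $|X|$ in your number of copies is unnecessary. Taking only $\deg(x)$ copies of each $x$ already gives the uniform marginal $\frac{\deg(x)}{|E|}\cdot\frac{1}{\deg(x)}=\frac{1}{|E|}$, which is how the paper achieves $|\Xint|=|E|$ rather than $|E|\cdot|X|$. Your construction still satisfies the stated bound~\eqref{eq:prelim_1}, so the lemma is proved either way, but you may as well drop the redundant factor.
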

\begin{remark}In the above, we can allow for multiple edges across vertices of $G$. In this case $E$ must be taken as a multi-set and the bound $ |E| \leq |X||Y|$ used in~\eqref{eq:prelim_1} must be suitably modified.
\end{remark}

Interestingly, our technique for proving the biregularization lemmas is concatenation itself! This is done in Appendix \ref{app:bireg}.

\subsection{Expanders}\label{sec:expanders}

The method used in~\cite{mosh2014, BSVV}  for fortifying a game is concatenation with sufficient pseudorandom bipartite graphs. This is done using  extractors in \cite{mosh2014} whereas expanders are employed in \cite{BSVV}.\footnote{The two approaches however lead to essentially to similar parameters (e.g.~$\lambda=O(\eps\sqrt{\delta})$ to get $(\eps,\delta)$-fortified graph where $\lambda$ is the second largest singular value of normalized adjacency matrix of the concatenating graph.); moreover, in the classical setting the approaches are in fact are more or less equivalent. See \cite{BSVV} for more.} Here we  follow the latter approach and use expanders. 

Let $M = (X' \times X, E)$ be a bipartite graph. For $x\in X$ let $N(x)\subseteq X'$ denote the set of neighbors of $x$ and similarly for $x'\in X'$. We shall work with graphs that are $X$-regular, i.e.~$d=|N(x)|$ for all $x\in X$. Define distributions $\mu$ and $\mu'$ on $X$ and $X'$ via
\[ \mu(x)=\frac{1}{|X|}, \qquad \mu'(x')= \frac{|N(x')|}{d}.\]
for all $x\in X$ and $x'\in X'$. Note that $\mu'(x')$ is the probability of obtaining $x'$ by sampling $x\sim \mu$ and taking a random neighbor of (according $M$) $x$ . Let $\mathcal M$ be the following normalized adjacency matrix of $M$
$$
	\mathcal M(x,x') = 
\left\{
	\begin{array}{ll}
		\frac{1}{d} \cdot \sqrt{\frac{\mu(x)}{\mu'(x')}}  & \mbox{if } x' \in N(x) \\
			0		& \mbox{otherwise}
	\end{array}
\right.
$$
We usually view $\mathcal M$ as an operator from $\ell_2(X')$ to $\ell_2(X)$. Note that when $M$ is a biregular expander we get the simpler definition $\mathcal M(x,x') = \frac{1}{d} \sqrt{\frac{|X'|}{|X|}}$ for $x'\in N(x)$, and $0$ otherwise.
\begin{definition}
A bipartite graph $M$ is called a \emph{$\lambda$-spectral expander} if the second-largest singular value of $\mathcal M$ is at most $\lambda$. 
\end{definition}
A simple useful proposition for us is the following:
\begin{proposition}
\label{prop:expander}
	Let  $M = (X' \times X, E)$ be a bipartite $\lambda$-spectral expander. For $f: X' \to \R$ and $x\in X$ let $f(x) = \Ex_{x' \sim N(x)} f(x')$, and $\bar{f} =\Ex_{x'\sim \mu'} f(x')= \Ex_{x \sim \mu} f(x)$. Then
\begin{equation}\label{eq:expander_prop}
		\Ex_{x \sim \mu } (f(x) - \bar{f} )^2 \leq \lambda^2 \Ex_{x'\sim \mu'} (f(x') - \bar{f})^2.
\end{equation}

\end{proposition}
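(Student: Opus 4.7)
The plan is to recognize Proposition~\ref{prop:expander} as the standard ``expander mixing lemma in Poincaré form'' for the Markov operator associated to $M$, and then translate the spectral hypothesis from $\mathcal M$ (acting between the unweighted $\ell_2$ spaces) to the averaging operator (acting between the weighted $L_2$ spaces on which $\mu, \mu'$ are natural). Concretely, I would introduce the operator $T: L_2(X',\mu') \to L_2(X,\mu)$ defined by $(Tg)(x) = \Es{x'\sim N(x)} g(x')$, so that the conclusion we want is exactly
\begin{equation*}
\|Tf - \bar f\|_{L_2(X,\mu)}^2 \;\le\; \lambda^2 \,\|f-\bar f\|_{L_2(X',\mu')}^2,
\end{equation*}
where $\bar f$ is treated as a constant function on either side.

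The first step is the isometric identification between $T$ and $\mathcal M$. Let $U_X : L_2(X,\mu) \to \ell_2(X)$ be the unitary $(U_Xh)(x) = \sqrt{\mu(x)}\,h(x)$ and similarly $U_{X'}$. A direct computation from the definition of $\mathcal M$ (and of $T$, using $\mu'(x') = |N(x')|/(d|X|)$) gives $U_X\, T\, U_{X'}^{-1} = \mathcal M$. Hence $T$ and $\mathcal M$ have the same singular values, so the assumption that $M$ is a $\lambda$-spectral expander means that $T$ has its second-largest singular value at most $\lambda$.

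Next, I would identify the top singular vector explicitly: the constant function $\mathbf{1}$ satisfies $T\mathbf{1} = \mathbf{1}$ trivially, and also $T^*\mathbf{1} = \mathbf{1}$ since $\mathbf{1}$ is stochastic from both sides (the adjoint in the weighted spaces is just averaging over $N(x')$, which again yields a constant). Therefore $\mathbf{1}$ is a left-and-right top singular vector with singular value $1$, and $T$ maps the orthogonal complement of constants in $L_2(X',\mu')$ into the orthogonal complement of constants in $L_2(X,\mu)$ with operator norm at most $\lambda$.

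Finally I would decompose $f = \bar f \cdot \mathbf{1} + (f - \bar f \cdot \mathbf{1})$, observe this is an orthogonal decomposition in $L_2(X',\mu')$ because $\bar f = \langle \mathbf{1}, f\rangle_{L_2(X',\mu')}$, and apply $T$ to get $Tf - \bar f \cdot \mathbf{1} = T(f - \bar f \cdot \mathbf{1})$. Taking squared norms in $L_2(X,\mu)$ and using the spectral bound on the orthogonal complement yields exactly \eqref{eq:expander_prop}. There is essentially no hard step here; the only thing to be careful about is the compatibility between the inner products and the singular-value convention, which is why the isometric identification in the first step is the part to set up cleanly. The stochasticity $T^*\mathbf{1} = \mathbf{1}$ (ensuring the constant function is also a left singular vector) uses $X$-regularity of $M$ together with the defining choice of $\mu'$ as the push-forward of $\mu$, and is the only place the hypothesis on $\mu,\mu'$ actually enters.
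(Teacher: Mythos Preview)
Your proof is correct and is essentially the same argument as the paper's: the paper works directly in $\ell_2(X)$ and $\ell_2(X')$ with the vectors $q_X(x)=\sqrt{\mu(x)}\,f(x)$, $q_{X'}(x')=\sqrt{\mu'(x')}\,f(x')$ and $p_\mu(x)=\sqrt{\mu(x)}$, $p_{\mu'}(x')=\sqrt{\mu'(x')}$, verifies $\mathcal M q_{X'}=q_X$ and $\mathcal M p_{\mu'}=p_\mu$, and then uses orthogonality of $q_{X'}-\bar f\,p_{\mu'}$ to $p_{\mu'}$ to apply the second-singular-value bound---which is precisely your isometric identification $U_X\,T\,U_{X'}^{-1}=\mathcal M$ unwound at the level of individual vectors. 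Your phrasing via the averaging operator $T$ and the unitary conjugation is a bit more conceptual, but the content is identical.
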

\begin{proof}
Let $p_\mu \in \R^{X}, p_{\mu'}\in \R^{X'}$  to unit vectors defined as $p_{\mu}(x):=\sqrt{\mu(x)}$ and $p_{\mu'}(x'):=\sqrt{\mu(x')}$.  Let$q_X(x):= \sqrt{\mu(x)} \, f(x)$, $q_{X'}(x'):= \sqrt{\mu(x')}\, f(x')$.

First, observe that $\mathcal M \, p_{\mu'} = p_{\mu}$ and $\mathcal M^{t} \, p_{\mu'}= p_{\mu}$. It follows that $(p_{\mu}, p_{\mu'})$ form a pair of singular vectors of $\mathcal M$. Moreover, it is easy to see\footnote{e.g. by appealing to the Perron-Frobenius theorem.} that these are  top singular vectors which shows that $\| \mathcal M\|_{op}=1$. Now notice that 
\begin{equation}
\Ex_{x'\leftarrow \mu'} (f(x')-\bar{f})^2= \sum_{x'} (\sqrt{\mu(x')} f(x')- \bar{f} \sqrt{\mu(x')})^2= \|q_{X'}- \bar{f} p_{\mu'}\|_{2}^2,
\end{equation}
Second, observe
\begin{equation}
	\mathcal M \, q_{X'}= q_{X}.
\end{equation}
As such, \eqref{eq:expander_prop} precisely corresponds to 
\begin{equation}
\|q_X- \bar{f} \, p_{\mu} \|_2^2= \| \mathcal M\, (q_{X'}-\bar{f} \, p_{\mu'})	\|_{2}^2 \leq \lambda^2\cdot \| q_{X'}-\bar{f} \, p_{\mu'}\|_2^2. 
\end{equation}
The claim follows by noting the orthogonality property
\begin{equation}
\langle q_{X'}- \bar{f} \, p_{\mu'}, p_{\mu'}\rangle = \sum_{x'} \mu(x') f(x')- \bar{f}=0.
\end{equation}
\end{proof}


\section{Fortification Framework}\label{sec:fort_frame}

This section introduces the fortification framework. We define the notion of analytically fortified games and recall our main parallel repetition  and fortification theorems. We end by a discussion of the parameters of the resulting gap amplification results.
\subsection{Analytical fortification}\label{sec:anal_fort}
We distinguish between two variants of the notion of fortified games which we call \emph{weakly fortified games} and \emph{strongly fortified games}. Although the difference between the two may seem minor, this difference is in fact quite important in the quantum case. 
\begin{definition}[Fortified games]\label{def:fortification}
Let $\eps, \delta \in [0,1]$. A concatenated game $G'=(M\circ G\circ P)$ is called  weakly $(\eps,\delta)$- fortified against classical substragies if for any substrategy $f,g$ we have
\begin{equation}\label{eq:weakly_fortified_classical}
	\val(G',f,g)\leq (\val(G)+\eps)\cdot \Ex_{(x,y)\sim \mu} f(x)\, g(y) +\delta.
\end{equation}
Similarly, we define $G'$ to be weakly $(\eps,\delta)$-fortified against entangled substrategies if for any substrategy $\{A_{x'}^{a'}\}, \{B_{y'}^{b'}\}$ we have 
\begin{equation}\label{eq:weakly_fortified_quantum}
	\valst(G', \{A_{x'}^{a'}\},\{ B_{y'}^{b'}\})\leq (\valst(G)+\eps)\cdot \Ex_{(x,y)\sim \mu} \langle \psi| A_{x}\otimes B_y|\psi\rangle+\delta.
\end{equation}
If furthermore $\val(G)$ (resp.~$\valst(G)$) can be replaced by $\val(G')$, (resp.~$\valst(G')$) in the above then the game is called ``strongly fortified" against classical (resp.~quantum) substrategies.
\end{definition}
Note that our main results, Theorems \ref{thm:classical_fortification} and \ref{thm:main_fortification}, show how any game can be (weakly) fortified by concatenating it with good-enough spectral expanders.

Two remarks regarding the above definition are in order: 
\begin{itemize}
\item  Using~\eqref{eq:induced-val-lower}, we see that strong fortification implies weak fortification, as expected from the terminaology. 
\item From~\eqref{eq:induced-val-class} it follows that the two notions in fact coincide in the case of classical fortification, but this is no longer the case for quantum fortification.
\end{itemize}

Our notion of  fortified games  and that of~\cite{mosh2014,BSVV} are closely related. Essentially, in Definition~\ref{def:fortification} we have replaced the the condition for all $\delta$-large rectangles in~\eqref{eq:old_fortified} with a smoother condition. In terms of a precise relation, we can show the following. 
\begin{claim}\label{lem:easy}
Every $(\eps, \eps\delta)$ strongly fortified game is also $(2\eps,\delta)$ combinatorially fortified.
\end{claim}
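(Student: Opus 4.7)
Let $H$ be an $(\eps,\eps\delta)$-strongly fortified game, and fix subsets $S\subseteq X$, $T\subseteq Y$ with $\mu(S\times T)\geq \delta$. I need to exhibit a classical substrategy $(f,g)$ for $H$ whose value witnesses the value of the subgame $H_{S\times T}$, and then invoke the strong fortification inequality.

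Let $(p,q)$ be an optimal deterministic strategy for the subgame $H_{S\times T}$, so that $\val(H_{S\times T})=\mu(S\times T)^{-1}\,\Ex_{(x,y)\sim\mu}\,\mathbf{1}_{x\in S}\mathbf{1}_{y\in T}V(p(x),q(y),x,y)$. Lift this to a substrategy on all of $H$ by setting $f(x,a)=\mathbf{1}_{x\in S}\mathbf{1}_{a=p(x)}$ and $g(y,b)=\mathbf{1}_{y\in T}\mathbf{1}_{b=q(y)}$. This is indeed a valid substrategy since $f(x)=\mathbf{1}_{x\in S}\leq 1$ and similarly for $g$, and by construction
\[
\val(H,f,g)\;=\;\mu(S\times T)\cdot \val(H_{S\times T}),\qquad \Ex_{(x,y)\sim\mu}f(x)g(y)\;=\;\mu(S\times T).
\]

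Applying the strong fortification inequality \eqref{eq:weakly_fortified_classical} (with $\val(G)$ replaced by $\val(H)$ per the definition of strong fortification) with parameters $(\eps,\eps\delta)$ gives
\[
\mu(S\times T)\cdot \val(H_{S\times T})\;\leq\; \bigl(\val(H)+\eps\bigr)\,\mu(S\times T)+\eps\delta.
\]
Dividing through by $\mu(S\times T)$ and using $\mu(S\times T)\geq \delta$, the additive term becomes at most $\eps\delta/\delta=\eps$, yielding $\val(H_{S\times T})\leq \val(H)+2\eps$, which is exactly the $(2\eps,\delta)$-combinatorial fortification condition \eqref{eq:old_fortified}.

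There is no real obstacle here; the content of the claim is essentially a change of viewpoint. The only thing to watch is the choice to lift a deterministic strategy on $S\times T$ to a substrategy that is identically zero outside the rectangle, and to track how the normalization factor $\mu(S\times T)$ interacts with the additive slack $\eps\delta$ — precisely the reason the hypothesis asks for slack $\eps\delta$ rather than $\delta$. Note that the same argument works verbatim in the quantum case (replacing $(p,q)$ by an optimal entangled strategy on $H_{S\times T}$ and setting the operators to zero outside $S$, $T$), though this is not needed for the statement of the claim.
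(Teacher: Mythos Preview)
Your proof is correct and follows essentially the same approach as the paper: both construct the indicator substrategy $f(x,a)=\mathbf{1}_{x\in S}\mathbf{1}_{a=p(x)}$, $g(y,b)=\mathbf{1}_{y\in T}\mathbf{1}_{b=q(y)}$ from a strategy $(p,q)$ on the rectangle, observe that $\val(H,f,g)=\mu(S\times T)\cdot\val(H_{S\times T},p,q)$ and $\Ex_{(x,y)\sim\mu}f(x)g(y)=\mu(S\times T)$, apply the strong fortification inequality, and divide by $\mu(S\times T)\geq\delta$. The only cosmetic difference is that you fix $(p,q)$ to be optimal up front, while the paper bounds $\val(H_{S\times T},p,q)$ for arbitrary $(p,q)$; this makes no difference to the argument.
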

\begin{proof} Consider a subgame given by $S \subseteq X, T \subseteq Y$ in $G$. To every strategy $(p,q)$ for $G_{S\times T}$, i.e., $p: S \to A$, $q: T \to B$, we can associate a natural substrategy $(f,g)$  by
\begin{equation}
f(x,a)= \begin{cases} 1 \qquad \text{if} \, \,   x\in S \wedge 	p(x)=a,\\ 
 0 \qquad \text{otherwise}	
 \end{cases}, \qquad
 g(y,b)= \begin{cases} 1 \qquad \text{if} \, \,   y\in T \wedge 	q(y)=b,\\ 
 0 \qquad \text{otherwise}	
 \end{cases}. 
\end{equation}
Then one can easily see
\begin{equation}\label{eq:subgame_vs_subst}
\val(G, f,g)= \val(G_{S\times T}, p,q) \cdot \mu(S\times T).\footnote{The term $\mu(S\times T)=\Ex_{(x,y)\sim \mu} f(x)g(y)$ is a natural scaling parameter playing an important role in our discussion as a measure of the  ``largeness" of a subgame or a substrategy. }
\end{equation}
Now assuming that rectangle $S\times T$ is $\delta$-large, i.e.~$\mu(S\times T)\geq \delta$, and since $G$ is fortified against classical substrategies, we have 
\begin{align}
\val(G_{S\times T}, p,q) &= \frac{\val(G,f,g)}{\mu(S \times T)} \\
						&\leq (\val(G) + \eps) \cdot \frac{\Ex_{(x,y)\sim \mu} f(x)g(y)}{\mu(S \times T)} + \frac{\delta \eps}{\mu(S \times T)} \\
						&\leq \val(G) + 2\eps
\end{align}
where in the second inequality we used  $\mu(S \times T) = \Ex_{(x,y)\sim \mu} f(x)g(y)$.
\end{proof}
We note that in Lemma \ref{lem:easy}, the reverse implication does not hold and the notion of analytically fortified  game is strictly stronger. In what follows, in the rare occasion when we call a game fortified (without specifying weak or strong) we mean strongly fortified.


\subsection{Parallel repetition of fortified games}
\label{sec:fortified-rep}
Using the definition of fortified games, it is straightforward to prove  the following parallel repetition theorem. 

\begin{theorem}[Basic parallel repetition]\label{thm:PR1}
	Let $G'_2$ be a  $(\eps,\delta)$-fortified game against classical substrategies. Then for any game $G'_1$ we have  
	\begin{equation}\label{eq:PR1-1}
	\val(G'_1\otimes G'_2)\leq (\val(G'_2)+\eps)\cdot \val(G'_1)	+ \delta\cdot |\Sigma_{G'_1}|,
\end{equation}
	where $\Sigma_{G'_1}$ is the total answer alphabet size (i.e.~the product of Alice and Bob's alphabets) of $G'_1$. 
\end{theorem}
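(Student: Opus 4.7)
The plan is to slice an optimal strategy for $G_1' \otimes G_2'$ into a family of substrategies for $G_2'$ indexed by the transcripts of the $G_1'$-coordinate, apply the fortification hypothesis to each slice, and then re-assemble the resulting marginals into a bona fide complete strategy for $G_1'$.

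More concretely, let $(P,Q)$ be a (deterministic or more generally complete) strategy for $G_1' \otimes G_2'$, where $P((x_1,x_2),(a_1,a_2))$ and $Q((y_1,y_2),(b_1,b_2))$ are the probabilities of the two players returning the corresponding answers. For every $(x_1,a_1)$ I would define a substrategy for $G_2'$ on Alice's side by
\[
f_{x_1,a_1}(x_2,a_2) := P\bigl((x_1,x_2),(a_1,a_2)\bigr),
\]
and analogously $g_{y_1,b_1}$ on Bob's side. These really are substrategies in the sense of Definition~\ref{def:classical_sub}, because $\sum_{a_2} f_{x_1,a_1}(x_2,a_2) \le \sum_{a_1',a_2'} P((x_1,x_2),(a_1',a_2')) = 1$. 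Expanding $\val(G_1'\otimes G_2', P, Q)$ by linearity and grouping terms gives
\[
\val(G_1'\otimes G_2', P, Q) \;=\; \sum_{a_1,b_1} \Ex_{(x_1,y_1)\sim \mu_1} V_1(a_1,b_1,x_1,y_1)\cdot \val\!\bigl(G_2',\, f_{x_1,a_1},\, g_{y_1,b_1}\bigr).
\]

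Now I would apply the $(\eps,\delta)$-fortification hypothesis~\eqref{eq:weakly_fortified_classical} to each inner value, obtaining the upper bound $(\val(G_2')+\eps)\cdot \Ex_{(x_2,y_2)\sim \mu_2} f_{x_1,a_1}(x_2)\, g_{y_1,b_1}(y_2) + \delta$. Substituting back, the $\delta$ contribution telescopes into $\delta \cdot \sum_{a_1,b_1} \Ex_{(x_1,y_1)} V_1(a_1,b_1,x_1,y_1) \le \delta\cdot |A_1|\,|B_1| = \delta\cdot |\Sigma_{G_1'}|$ because $V_1\in\{0,1\}$. For the main term, the key observation is that the marginals $F(x_1,a_1) := \Ex_{x_2} f_{x_1,a_1}(x_2)$ and $G(y_1,b_1):=\Ex_{y_1} g_{y_1,b_1}(y_2)$ assemble into complete strategies for $G_1'$ (summing over $a_1$, resp.~$b_1$, yields $1$), and the remaining expectation is precisely $\val(G_1', F, G) \le \val(G_1')$.

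This yields the claimed inequality after taking a supremum over $(P,Q)$. There is no substantial obstacle here: the whole point of Definition~\ref{def:fortification} is to make this reduction essentially mechanical, and the only minor bookkeeping is the verification that $(F,G)$ are indeed complete strategies for $G_1'$ (which follows immediately from $(P,Q)$ being complete for the product game) and that the error term scales with the alphabet size $|\Sigma_{G_1'}|$ rather than something larger (which follows from $V_1\le 1$). The argument never uses the structure of $G_1'$ beyond its alphabet size, which is exactly why this framework will later iterate cleanly to the multi-round version stated as Theorem~\ref{thm:PR2_multiround}.
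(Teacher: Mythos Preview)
Your approach is essentially the paper's: slice by the $G'_1$-transcript, apply fortification to each slice, and reassemble. The error-term bookkeeping (yielding $\delta\cdot|\Sigma_{G'_1}|$) is correct, and the paper proceeds exactly this way, deducing Theorem~\ref{thm:PR1} as the $t=2$ case of Proposition~\ref{prop:main_PR_prop} with the trivial concatenation.

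There is, however, one genuine slip in your treatment of the main term. After applying fortification, the main term is
\[
(\val(G'_2)+\eps)\sum_{a_1,b_1}\Ex_{(x_1,y_1)\sim\mu_1} V_1(a_1,b_1,x_1,y_1)\,\Ex_{(x_2,y_2)\sim\mu_2} f_{x_1,a_1}(x_2)\,g_{y_1,b_1}(y_2).
\]
You claim this equals $(\val(G'_2)+\eps)\cdot\val(G'_1,F,G)$ with $F(x_1,a_1)=\Ex_{x_2}f_{x_1,a_1}(x_2)$ and $G(y_1,b_1)=\Ex_{y_2}g_{y_1,b_1}(y_2)$. But this identity would require the joint expectation $\Ex_{(x_2,y_2)\sim\mu_2}$ to factor as $\Ex_{x_2}\cdot\Ex_{y_2}$, which fails whenever $\mu_2$ is not a product distribution. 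In fact the correlated expectation can be \emph{larger} than the product of marginals, so the error goes the wrong way.

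The fix is immediate and is exactly what the paper does: swap the order and condition on $(x_2,y_2)$ first. For every fixed $(x_2,y_2)$ in the support of $\mu_2$, the maps $(x_1,a_1)\mapsto f_{x_1,a_1}(x_2)$ and $(y_1,b_1)\mapsto g_{y_1,b_1}(y_2)$ are complete strategies for $G'_1$ (as you already verified), so the inner sum is at most $\val(G'_1)$; now average over $(x_2,y_2)\sim\mu_2$. No averaged strategy $(F,G)$ is needed.
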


We prove this theorem  in Section \ref{sec:concat} by adapting the proof of the analogous theorems in~\cite{mosh2014, BSVV} to the analytic setting. Unfortunately, while this theorem exemplifies the main idea behind our results, it is not directly useful for  applications. The reason for this is that the fortification procedure $G\rightarrow G'$ via concatenation  induces a  large blow-up in the alphabet size, $|\Sigma_{G'}|\approx |\Sigma_{G}|^D$, where $D=\frac{1}{\eps \sqrt{\delta}}$ is the degree of the expander graph chosen. As one iterates the repetition procedure $m$ times, the blow-up due to the additive term in~\eqref{eq:PR1-1} will be of order $\delta|\Sigma_{G'}|^{m-1} $. But typically $|\Sigma_{G'}|^{m-1}\gg |\Sigma_{G}|^{(m-1)/\sqrt{\delta}}$, leading to a term larger than $1$ and rendering the theorem useless.  

We resolve this problem by  proving an improved repetition theorem which exploits the fact that $G'$ takes the form of a concatenated game, whose  inner game $G$ has a much smaller alphabet. 

\begin{theorem}\label{thm:PR2_multiround}
Let $G'$ be a concatenated game, with inner game $G$, that is $(\eps,\delta)$-weakly fortified  against classical substrategies. If $\delta\cdot (m-1)\cdot |\Sigma_{G}|^{m-1}\leq \eta $ then 
\begin{equation}\label{eq:PR_multiround_classical}
\val( {G'}^{\otimes m} )\leq (\val(G)+\eps)^m + \eta.
\end{equation}
Similarly, if $G'$ is $(\eps,\delta)$ weakly-fortified against entangled substrategies and  $\delta\cdot (m-1)\cdot |\Sigma_{G}|^{m-1}\leq \eta $ then
\begin{equation}\label{eq:PR_multiround_quantum}
	\valst( {G'}^{\otimes m} )\leq (\valst(G)+\eps)^m + \eta.
\end{equation}

\end{theorem}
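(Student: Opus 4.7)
The plan is to proceed by induction on $m$, establishing at each stage the recursive estimate
\[
V_m \;\leq\; (\val(G)+\eps)\cdot V_{m-1} \;+\; \delta\,|\Sigma_G|^{m-1}, \qquad V_m := \val({G'}^{\otimes m}),
\]
together with its entangled analogue. Unrolling this recursion and using $\val(G)+\eps \le 1$ to bound the resulting geometric sum will give $V_m \le (\val(G)+\eps)^m + \delta(m-1)|\Sigma_G|^{m-1}$, which under the hypothesis $\delta(m-1)|\Sigma_G|^{m-1} \le \eta$ is exactly the theorem. The base case $m=1$ is immediate from Proposition~\ref{prop:val_inequality_classical_quantum}, which gives $\val(G')=\val(G)$ classically and $\valst(G')\le\valst(G)$ quantumly.

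For the classical induction step, fix a strategy $(F,G)$ on ${G'}^{\otimes m}$ and expand its value as a sum over inner answer tuples $(\vec a,\vec b)\in A^m\times B^m$ via the induced substrategy on $G^{\otimes m}$. The key move is, for each partial inner tuple $(\vec x^{<m},\vec a^{<m})$, to define a conditional substrategy on a single copy of $G'$ (on the $m$-th coordinate) by
\[
\hat F_{\vec x^{<m},\vec a^{<m}}(x'_m,a'_m) \;:=\; \Ex_{\vec{x'}^{<m}\sim \vec x^{<m}}\,\sum_{\vec{a'}^{<m}\,:\,a'_i(x_i)=a_i\ \forall i<m} F(\vec{x'}^{<m},x'_m,\vec{a'}^{<m},a'_m),
\]
and analogously for $\hat G_{\vec y^{<m},\vec b^{<m}}$; that these are valid substrategies is a short direct check. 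The crucial gain is that we condition on \emph{inner} rather than outer partial answers: there are only $|\Sigma_G|^{m-1}$ such branches, as opposed to the much larger $|\Sigma_{G'}|^{m-1}$ that a direct application of Theorem~\ref{thm:PR1} would incur. Applying the weak fortification hypothesis to each conditional pair $(\hat F_{\vec x^{<m},\vec a^{<m}},\hat G_{\vec y^{<m},\vec b^{<m}})$ and integrating the resulting inequality against the weights $\prod_{i<m} V_i(a_i,b_i,x_i,y_i)$, the leading $(\val(G)+\eps)$ term will reassemble (again via Proposition~\ref{prop:val_inequality_classical_quantum}) into the value of a legitimate substrategy on ${G'}^{\otimes(m-1)}$, which is thus bounded by $V_{m-1}$, while the additive $\delta$ contributes at most $\delta\sum_{\vec a^{<m},\vec b^{<m}}\prod_{i<m}V_i \le \delta\,|\Sigma_G|^{m-1}$ since $\sum_{a_i,b_i} V_i(a_i,b_i,x_i,y_i)\le |A||B|$ for each $i$.

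The entangled bound~\eqref{eq:PR_multiround_quantum} runs along identical lines with sub-POVMs replacing conditional functions: from the joint POVM $\{A^{\vec{a'}}_{\vec{x'}}\}$ on Alice's side we form, for each $(\vec x^{<m},\vec a^{<m})$, the operators
\[
\hat A^{\,a'_m}_{\vec x^{<m},\vec a^{<m};\,x'_m} \;:=\; \Ex_{\vec{x'}^{<m}\sim \vec x^{<m}}\;\sum_{\vec{a'}^{<m}\,:\,a'_i(x_i)=a_i\ \forall i<m} A^{\vec{a'}^{<m},\,a'_m}_{\vec{x'}^{<m},\,x'_m},
\]
which are PSD averages of PSD operators and whose marginal over $a'_m$ is bounded above by $\Id$ (since it is dominated by the original completeness relation for $\{A^{\vec{a'}}_{\vec{x'}}\}$). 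These therefore assemble into a valid entangled substrategy on a single copy of $G'$, to which the entangled weak fortification hypothesis applies with the same branch count $|\Sigma_G|^{m-1}$. The main technical point requiring care is verifying that after reassembling the $(\valst(G)+\eps)$ contributions across all branches with weights $\prod_{i<m} V_i$, what remains is genuinely the value of an entangled substrategy on ${G'}^{\otimes(m-1)}$ (and so bounded by $V_{m-1}$), rather than something larger; this is a PSD-preserving rearrangement of operators, and so the induction closes exactly as in the classical case.
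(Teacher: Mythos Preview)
Your proposal is correct and follows essentially the same route as the paper. The paper packages the inductive step as a standalone Proposition~\ref{prop:main_PR_prop} (stated for a general product $G'_1\otimes\cdots\otimes G'_t$ of possibly different concatenated games, with fortification applied to the last factor), and then derives Theorem~\ref{thm:PR2_multiround} by specializing to $G'_i=G'$ and inducting exactly as you describe; the core mechanism---projecting the first $m-1$ coordinates to the inner game, treating the result as a substrategy on the $m$-th copy of $G'$, applying weak fortification, and then recognizing the leading term as the value of a bona fide strategy on ${G'}^{\otimes(m-1)}$ via Proposition~\ref{prop:val_inequality_classical_quantum}---is identical to yours.
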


The main advantage of Theorem \ref{thm:PR2_multiround} compared to Theorem \ref{thm:PR1} is in the additive error, which is now is in terms $|\Sigma_G|$ rather than $|\Sigma_{G'}|$. What is important here is that the size of $|\Sigma_G|$ is independent of the fortification parameters $(\eps,\delta)$ whereas $|\Sigma_{G'}|$ grows exponentially as $\delta$ decreases. Let us also note that Theorem \ref{thm:PR2_multiround} is  quite general, and in particular applies to the multiplayer case.

\subsection{Gap amplification} Having stated our main parallel repetition, fortification, and biregularization theorems, all the main components of gap amplification are finally in place. Indeed, using $\val(G)=\val(G')$ Theorem~\ref{thm:PR2_multiround} implies our final gap amplification for the classical value. This matches the parameters of main results of \cite{mosh2014,BSVV} and extends it to more general settings.

Since quantumly we could have $\valst(G')< \valst(G)$, from~(\ref{eq:PR_multiround_quantum}) we cannot obtain
\begin{equation}\label{eq:PR_multiround_quantum_strong}
		\valst( {G'}^{\otimes m} )\leq (\valst(G')+\eps)^m + \eta.
\end{equation}
However, Theorem \ref{thm:main_fortification} and Theorem \ref{thm:PR2_multiround} are still sufficient to prove a gap amplification theorem for the case where the completeness holds against classical players and the soundness against the quantum ones.\footnote{E.g. as was the case in \cite{ito2012,vidick2013}.} To obtain a fully quantum gap amplification however, we need to appeal to the notion of \emph{ordered fortification} which, as we discussed, is a entangled-value preserving variant of the ordinary fortification. 
\begin{theorem*}[Theorem \ref{thm:fort_OF_main} restated]
Let $G$ be a game and $M$ and $P$ be two bipartite graphs as above. Let $G'_{OF}$ be constructed from $G$ and $G'=(M\circ G\circ P)$ as in Definition \ref{def:G_OF_intro}. Then, we have 
\[ \valst(G'_{OF})= \valst(G).\]
Furthermore if $M$ and $P$ are $\lambda$-spectral expanders and $\lambda\leq \frac{\eps^2\delta}{56}$, then $G'_{OF}$ is also $(\eps, \delta)$ weakly fortified.
\end{theorem*}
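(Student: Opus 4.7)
The plan has two parts: first I would establish the value-preservation claim $\valst(G'_{OF}) = \valst(G)$ by direct strategy constructions in both directions, and then reduce the weak-fortification claim to Theorem~\ref{thm:main_fortification} by recognizing $G'_{OF}$ as a standard concatenation of an augmented inner game with modified expander graphs.

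For the easy direction $\valst(G'_{OF}) \geq \valst(G)$, I take a near-optimal entangled strategy $(\ket{\psi},\{A_x^a\},\{B_y^b\})$ for $G$ and have the players share the tensor product $\ket{\psi}^{\otimes l}$. On input $(x', r_{x'})$, Alice produces $a' : N(x') \to A$ by measuring the $r_{x'}(x)$-th copy of $\ket{\psi}$ with $\{A_x^a\}$ for each $x \in N(x')$; injectivity of $r_{x'}$ guarantees these measurements act on distinct tensor factors and hence commute, so $a'$ is well-defined. Bob plays symmetrically using $s_{y'}$. The conditioning $r_{x'}(x^*)=s_{y'}(y^*)$ on the true questions ensures both players measure the \emph{same} copy for their winning answers, so the success probability matches $\valst(G,\text{strategy})$. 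Conversely, any $G'_{OF}$-strategy yields a $G$-strategy as follows: on $(x,y)$ the players use shared randomness to pick $i \in [l]$ uniformly, each locally samples a random neighbor and a uniformly random injection sending $x$ (resp.~$y$) to $i$, applies the $G'_{OF}$-strategy, and reads off their answer at $x$ (resp.~$y$). This reproduces the $G'_{OF}$ distribution exactly, giving $\valst(G) \geq \valst(G'_{OF},\text{strategy})$.

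The crux is the weak-fortification claim, and the key structural observation is that $G'_{OF}$ is itself a standard concatenated game $\tilde M \circ \tilde G \circ \tilde P$. Here $\tilde G$ is the augmented inner game with question set $X \times Y \times [l]$, distribution $\tilde\mu(x,y,i) = \mu(x,y)/l$, and the same predicate and answer alphabet as $G$ (independent of $i$); and the bipartite graph $\tilde M$ connects $(x,i) \in X \times [l]$ to $(x',r)$ iff $x \in N(x')$ and $r(x) = i$, with $\tilde P$ defined analogously. Factoring the $G'_{OF}$ sampling recipe as ``draw $(x,y) \sim \mu$ and $i \in [l]$ uniformly, then independently sample neighbors and injections sending $x,y$ to $i$'' confirms that concatenation $\tilde M \circ \tilde G \circ \tilde P$ matches $G'_{OF}$. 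Moreover $\tilde G$ is biregular when $G$ is, has the same answer alphabet size as $G$, and satisfies $\valst(\tilde G) = \valst(G)$ since $i$ is independent of the predicate. Thus it suffices to invoke Theorem~\ref{thm:main_fortification} on this concatenation, for which I need $\tilde M$ and $\tilde P$ to inherit the $\lambda$-spectral-expansion property from $M$ and $P$.

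To verify this spectral inheritance I plan to exploit the $S_l$-symmetry of $\tilde M$ acting by permutation of labels: the normalized adjacency matrix $\tilde{\mathcal M}$ is equivariant and hence decomposes blockwise across the $S_l$-isotypic components of $\ell_2(\tilde X)$ and $\ell_2(\tilde X')$. On the trivial isotype (functions independent of $i$ and $r$), the restriction of $\tilde{\mathcal M}$ identifies after normalization with $\mathcal M$, contributing top singular value $1$ and second singular value $\lambda$. An alternative, more hands-on route that bypasses representation theory is to verify Proposition~\ref{prop:expander} for $\tilde M$ directly by decomposing an arbitrary test function $f(x',r)$ into its $r$-average $f_0(x')$ and an orthogonal remainder, bounding the first piece by $M$'s expansion and the second using orthogonality across uniform injections. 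The main obstacle is precisely this second step, the control of the non-trivial isotypes (equivalently, the orthogonal remainder), where the singular values are not inherited directly from $M$ and instead must be bounded using the combinatorial structure of random injections against $M$'s edge set. Once established, Theorem~\ref{thm:main_fortification} applied with $\lambda \leq \eps^2 \delta/56$ immediately yields the claimed $(\eps,\delta)$-weak fortification of $G'_{OF}$, with $\valst(G) = \valst(\tilde G)$ playing the role of the inner value.
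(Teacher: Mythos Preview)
Your high-level architecture matches the paper's: value preservation via the $l$-copy strategy in one direction and the induced-strategy bound in the other (your shared-randomness construction is precisely the unrolling of Proposition~\ref{prop:val_inequality_classical_quantum} applied to the concatenation $G'_{OF}=\tilde M\circ G^{\oplus l}\circ \tilde P$), followed by reduction to Theorem~\ref{thm:main_fortification} once the spectral expansion of $\tilde M$ is established.

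Where the paper is sharper is exactly at the step you flag as the ``main obstacle.'' Rather than carrying out an $S_l$-isotypic decomposition abstractly, Claim~\ref{claim:SVs_modified_fortified} computes the Gram matrix $C=A_{\tilde M}^\top A_{\tilde M}$ entrywise over the index set $(X\times[l])^2$ and observes that, after subtracting the diagonal part $\tfrac{1}{d}\Id$, it factors as a tensor product
\[
C-\tfrac{1}{d}\Id \;=\; \Big(A_M^\top A_M-\tfrac{1}{d}\Id\Big)\otimes \tilde J,\qquad \tilde J=\tfrac{1}{l-1}(J-\Id).
\]
Since $\tilde J$ has spectrum $\{1,-\tfrac{1}{l-1}\}$, the bound $\lambda_{\tilde M}\le\max\{\lambda_M,(d-1)^{-1/2}\}$ drops out immediately. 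This tensor factorization is precisely the concrete manifestation of the $S_l$-equivariance you invoke (the two tensor factors are your trivial and non-trivial isotypes on the $[l]$-coordinate), so your route would succeed; the paper's entrywise computation simply resolves your obstacle in a couple of lines and with no representation-theoretic overhead.
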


We stress that $G'_{OF}$ constructed above is itself a concatenated game with the inner game $G^{\oplus l}$, disjoint union of $l=\poly(\frac{1}{\eps^2\delta})$ copies of $G$. This means  the inner alphabet size of $G'_{OF}$ is precisely the same as $G$'s, and therefore there is fortunately no issue in terms of alphabet blow-up for applying Theorem \ref{thm:PR2_multiround} to $G'_{OF}$. So using $G'_{OF}$ instead of $G'$ in Theorem \ref{thm:PR2_multiround}, we can finally prove the analogue of (\ref{eq:PR_multiround_quantum_strong}) for $G'_{OF}$.

\paragraph{Parameters of gap amplification.}
We can now discuss the parameters of the gap amplification corollaries.  As in~\cite{mosh2014,BSVV}, the parameters are typically very good in terms of question sizes but much worse in terms of alphabet size. Here, we mostly focus our discussion to  gap amplification in the classical setting as the calculations in the quantum setting are similar. 

To understand the parameters, we need to only consider the soundness case. Suppose we are given a game $G$ with guarantee $\val(G)\leq 1-\tau$ and a target soundness value $\beta$. We choose $\eps= \tau/2$ and $m$ such that $(\val(G)+\eps)^m\leq \beta/2$. Hence, we have $m=\frac{\log(2/\beta)}{\log(1-\tau/2)}\leq \frac{2\log(2/\beta)}{\tau}$. We want 
\begin{equation}
\val({G'}^{\otimes m})\leq (\val(G)+\eps)	^m +\delta\cdot (m-1)|\Sigma_G|^{m-1} \leq \beta.
\end{equation}
 Hence, we just need to ensure  $\delta\cdot
|\Sigma_{G}|^{m-1}\leq \beta/2$. So we have $\delta=\frac{\beta}{(m-1)\cdot |\Sigma_{G}|^{m-1}}$. 

So what does the above mean in terms of the size of the final output of gap amplification ${G'}^{\otimes m}$. The question size is $|X|^m$ and $|Y|^m$ (since we have $|X'|=|X|$ and $|Y'|=|Y|$). Note that $m$ is essentially as small as we can hope for because even given a perfect parallel repetition theorem, we had to take $m\approx \frac{\log(1/\beta)}{\tau}$. Hence, the construction is essentially optimal in terms of question sizes. 

For the alphabet size, the situation is much worse. We have $|\Sigma_{G'}|=|\Sigma_G|^D$ where $D=O(\frac{\poly\log(1/\eps^2\delta)}{\eps^2\delta})$. This means (up to dominant factors) that $|\Sigma_{G'^{\otimes m}}| = |\Sigma_{G}|^{\frac{m^2|\Sigma_G|^{m-1}}{\beta}}$ which means that the alphabet is exponentially worse than basic parallel repetition which results in $|\Sigma_{G}|^{m}$. Note that however in typical settings where $|\Sigma_G|$ is constant and $\beta$ a small constant (or inverse logarithmic in size of $G$), this exponentially worse behavior of alphabet size does not cause a significant problem.

Next, let us consider the setting where  the completeness holds for classical players and soundness against  entangled players. In this case, we can just use Theorem \ref{thm:main_fortification} instead of Theorem \ref{thm:classical_fortification}, and hence all the calculations are precisely the same with $\eps$ and $\delta$ replaced with their squares.

Lastly, in the fully quantum case we need to use Theorem \ref{thm:fort_OF_main}. In this case, $m, \eps, \delta$ are chosen in precisely the same way. Alphabet size is also exactly the same as $G'_{OF}$ has the same alphabet size as $G'$. The only difference is that the question sizes in $G'_{OF}$ are slightly larger than $G'$: we have   $|X'|= |X|\cdot \poly(\frac{|\Sigma_G|^m}{\beta})$ and $|Y'|= |Y|\cdot \poly(\frac{|\Sigma_{G}|^m}{\beta})$. This is however arguably a minor blow-up since we typically expect that $|\Sigma_{G}|/\beta$ to be much smaller than $\textit{size}(G)=|X|\cdot |Y|$.

\section{Parallel Repetition Theorems}\label{sec:concat}
In this section we prove our main parallel repetition theorem. 
\begin{theorem*}[Theorem \ref{thm:PR2_multiround} restated]
	Let $G'$ be a concatenated game $(\eps,\delta)$-weakly fortified  against classical substrategies with inner game $G$. If $\delta\cdot (m-1)\cdot |\Sigma_{G}|^{m-1}\leq \eta $ then 
\begin{equation}\label{eq:classical_PR}
\val( {G'}^{\otimes m} )\leq (\val(G)+\eps)^m + \eta.
\end{equation}
Similarly, if $G'$ is $(\eps,\delta)$ weakly-fortified against entangled substrategies and  $\delta\cdot (m-1)\cdot |\Sigma_{G}|^{m-1}\leq \eta $ then
\begin{equation}\label{eq:quantum_PR}
	\valst( {G'}^{\otimes m} )\leq (\valst(G)+\eps)^m + \eta.
	\end{equation}
\end{theorem*}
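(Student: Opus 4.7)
The plan is to prove both~\eqref{eq:classical_PR} and~\eqref{eq:quantum_PR} simultaneously by induction on $m$, with the base case $m=1$ following directly from Proposition~\ref{prop:val_inequality_classical_quantum} (which gives $\val(G')=\val(G)$ classically and $\valst(G')\leq\valst(G)$ quantumly).

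For the inductive step, fix a complete strategy $\tau$ on $G'^{\otimes m}=G'\otimes G'^{\otimes(m-1)}$. First, apply the induced-strategy construction of Proposition~\ref{prop:val_inequality_classical_quantum} coordinatewise to the last $m-1$ coordinates, obtaining a strategy $\sigma$ on $G'\otimes G^{\otimes(m-1)}$ with the same value as $\tau$. The crucial feature of this reduction is that the ``other'' factor is now $G^{\otimes(m-1)}$, whose alphabet is $|\Sigma_G|^{m-1}$ rather than $|\Sigma_{G'}|^{m-1}$. Then apply Theorem~\ref{thm:PR1} to $\sigma$, taking the fortified factor to be $G'_2=G'$ and the arbitrary factor to be $G'_1=G^{\otimes(m-1)}$, giving
\[
\val(\sigma) \;\leq\; \bigl(\val(G')+\eps\bigr)\cdot \val\bigl(G^{\otimes(m-1)},F,G\bigr) \;+\; \delta\cdot|\Sigma_G|^{m-1},
\]
where $(F,G)$ denotes the strategy on $G^{\otimes(m-1)}$ produced inside the proof of Theorem~\ref{thm:PR1}, obtained by averaging $\sigma$ over the first-coordinate answers (and questions). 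The quantum version is identical with POVMs and the shared state $\ket{\psi}$.

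The observation that makes the induction close is that $(F,G)$ is itself an induced strategy. Let $\tau_{\textit{marg}}$ denote the complete strategy on $G'^{\otimes(m-1)}$ obtained from $\tau$ by summing out Alice's and Bob's first-coordinate answers and averaging over their first-coordinate questions. A direct unfolding of the definitions shows that $(F,G)$ coincides with the induced strategy on $G^{\otimes(m-1)}$ coming from $\tau_{\textit{marg}}$. Therefore Proposition~\ref{prop:val_inequality_classical_quantum}, applied in each of the $m-1$ coordinates, gives $\val(G^{\otimes(m-1)},F,G)=\val(G'^{\otimes(m-1)},\tau_{\textit{marg}})$ (and the analogous equality for $\valst$). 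Using the induction hypothesis on $\tau_{\textit{marg}}$, together with $\val(G')=\val(G)$ (resp.~$\valst(G')\leq\valst(G)$) and the harmless assumption $\val(G)+\eps\leq 1$, yields the recursion $E_m \leq E_{m-1} + \delta|\Sigma_G|^{m-1}$ for the additive error, hence $E_m\leq (m-1)\delta|\Sigma_G|^{m-1}\leq\eta$.

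The main obstacle is the quantum case: naively peeling off one $G'$ via Theorem~\ref{thm:PR1} leaves a factor $\valst(G^{\otimes(m-1)})$, and since the quantum analogue of~\eqref{eq:induced-val-class} fails (as discussed immediately after Proposition~\ref{prop:val_inequality_classical_quantum}), one cannot simply replace this by $\valst(G'^{\otimes(m-1)})$ and recurse. The resolution---recognizing the strategy $(F,G)$ produced by Theorem~\ref{thm:PR1} as an induced strategy coming from some $\tau_{\textit{marg}}$ on $G'^{\otimes(m-1)}$---relies exactly on the value-preserving direction of Proposition~\ref{prop:val_inequality_classical_quantum}, which does hold quantumly.
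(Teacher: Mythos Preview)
Your approach is essentially the same as the paper's: the paper packages the one-step reduction as Proposition~\ref{prop:main_PR_prop} (peel off one fortified factor, controlling the additive error via the \emph{inner}-game alphabet by working with induced substrategies) and then iterates, whereas you fold this directly into the induction by ``opening up'' Theorem~\ref{thm:PR1} after first inducing the other $m-1$ coordinates down to $G^{\otimes(m-1)}$. The key insight---induce to the inner game so the alphabet cost is $|\Sigma_G|^{m-1}$, apply fortification once, then lift the remaining factor back to $G'^{\otimes(m-1)}$ via the value-preserving direction of Proposition~\ref{prop:val_inequality_classical_quantum}---is identical.

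One small inaccuracy: the object $(F,G)$ that actually arises after applying fortification is not a single product strategy obtained by ``averaging over first-coordinate questions,'' because the residual expectation is over the \emph{correlated} pair $(x_1,y_1)\sim\mu$, not over independent marginals. Thus there is no single $\tau_{\textit{marg}}$ whose value equals that expression. The fix is exactly what the paper does in the last step of the proof of Proposition~\ref{prop:main_PR_prop}: for each fixed $(x_1,y_1)$, the pair $\bigl(\sigma(x_1,\cdot),\sigma(y_1,\cdot)\bigr)$ is the induced strategy of a bona fide (complete) strategy on $G'^{\otimes(m-1)}$, so by Proposition~\ref{prop:val_inequality_classical_quantum} its value is at most $\val(G'^{\otimes(m-1)})$ (resp.\ $\valst(G'^{\otimes(m-1)})$), and the same bound holds after averaging. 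With this pointwise bound in place your induction closes exactly as you describe.
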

The proof follows directly from the following proposition.
\begin{proposition}\label{prop:main_PR_prop}
Let $\{G'_i\}_{i=1}^t$ be a collection  of concatenated games with inner games $\{G_i\}_{i=1}^t$. Suppose that $G'_t$ is $(\eps, \delta)$ weakly fortified against classical substrategies. Then,
\begin{equation}\label{eq:sec_PR_1}
\val(G'_1\otimes G'_2\otimes \ldots \otimes G'_t)	\leq (\val(G_t)+\eps )\cdot \val(G'_1\otimes G'_2 \otimes \ldots \otimes  G'_{t-1})+ \delta \cdot \prod_{i=1}^{t-1} |\Sigma_{G_i}|.
\end{equation}
Similarly, if $G'_t$ is $(\eps, \delta)$ weakly fortified against quantum substrategies, then
\begin{equation}\label{eq:sec_PR_2}
\valst(G'_1\otimes G'_2\otimes \ldots \otimes G'_t)	\leq (\valst(G_t)+\eps )\cdot \valst(G'_1\otimes G'_2 \otimes \ldots \otimes  G'_{t-1})+ \delta \cdot \prod_{i=1}^{t-1} |\Sigma_{G_i}|.
\end{equation}
\end{proposition}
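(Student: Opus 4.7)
The plan is to prove both inequalities in parallel by a single conditioning argument, splitting the product strategy for $\otimes_i G'_i$ into a substrategy for $G'_t$ obtained by conditioning on the \emph{inner} answers $\vec a_{<t}, \vec b_{<t}$ (rather than the outer answers). The reason to condition on inner answers is precisely what will replace the prohibitively large outer alphabet $|\Sigma_{G'_i}|$ (which appeared in Theorem \ref{thm:PR1}) by the inner alphabet $|\Sigma_{G_i}|$ in the additive error.

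Concretely, for a complete strategy $(f',g')$ on $\otimes_i G'_i$, I would first use the identity $V_i(a'_i(x_i), b'_i(y_i), x_i, y_i) = \sum_{a_i, b_i}\mathbf{1}[a'_i(x_i)=a_i,\, b'_i(y_i)=b_i]\, V_i(a_i,b_i,x_i,y_i)$ to rewrite
\[
\val(\otimes_i G'_i, f', g') \;=\; \Ex\!\sum_{\vec a_{<t},\vec b_{<t}} \Big(\prod_{i<t} V_i(a_i,b_i,x_i,y_i)\Big)\cdot \val\!\big(G'_t,\, F[\cdot],\, H[\cdot]\big),
\]
where for each fixing of $(\vec x'_{<t},\vec x_{<t},\vec a_{<t})$ the function
$F[\cdot](x'_t,a'_t) := \sum_{\vec a'_{<t}:\, a'_i(x_i)=a_i\ \forall i<t} f'\big((\vec x'_{<t},x'_t),(\vec a'_{<t},a'_t)\big)$
is a substrategy for $G'_t$ (summing over $a'_t$ gives a partial sum of $f'$, hence $\le 1$), and $H[\cdot]$ is defined symmetrically from $g'$.

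Next I would apply weak fortification of $G'_t$ pointwise to obtain $\val(G'_t,F[\cdot],H[\cdot]) \le (\val(G_t)+\eps)\,\Ex_{x_t,y_t} F[\cdot](x_t)H[\cdot](y_t) + \delta$, and integrate. The multiplicative term reassembles: plugging the induced-inner marginal $F[\cdot](x_t)=\Ex_{x'_t\sim N(x_t)}\sum_{a'_t} F[\cdot](x'_t,a'_t)$ back into the expectation and summing the indicators $\mathbf{1}[a'_i(x_i)=a_i]$ against $V_i(a_i,\dots)$ recovers $V_i(a'_i(x_i),\dots)$, so the bound collapses to $(\val(G_t)+\eps)\cdot \val(\otimes_{i<t} G'_i,\tilde f,\tilde g)$, where $\tilde f(\vec x'_{<t},\vec a'_{<t}) := \Ex_{x_t,x'_t\sim N(x_t)}\sum_{a'_t} f'((\vec x'_{<t},x'_t),(\vec a'_{<t},a'_t))$ and $\tilde g$ is analogous; both are easily checked to be substrategies, giving the desired upper bound $(\val(G_t)+\eps)\cdot \val(\otimes_{i<t} G'_i)$. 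The additive term becomes $\delta\cdot \Ex\sum_{\vec a_{<t},\vec b_{<t}}\prod_{i<t} V_i$; using the trivial bound $\sum_{a_i} V_i(a_i,\cdot,x_i,y_i)\le |A_i|$ termwise, this is at most $\delta\prod_{i<t}|\Sigma_{G_i}|$, which is exactly the claimed error.

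The entangled case~\eqref{eq:sec_PR_2} goes through verbatim once functions are replaced by POVMs: define the conditioned operators $\hat A^{a'_t}_{\vec x',\vec x_{<t},\vec a_{<t}} := \sum_{\vec a'_{<t}:\, a'_i(x_i)=a_i}\! A^{(\vec a'_{<t},a'_t)}_{\vec x'}$ (and $\hat B$ analogously), which are manifestly PSD and sub-normalized, hence a substrategy for $G'_t$; weak fortification against entangled substrategies yields the same two terms, the reassembled $\tilde A^{\vec a'_{<t}}_{\vec x'_{<t}}:=\Ex_{x_t,x'_t}\sum_{a'_t}A^{(\vec a'_{<t},a'_t)}_{\vec x'}$ is a substrategy for $\otimes_{i<t} G'_i$, and the additive error is counted by the same $\sum_{\vec a_{<t},\vec b_{<t}}\prod V_i\le\prod|\Sigma_{G_i}|$ bound, which is alphabet-only and thus insensitive to the entangled structure. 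The main conceptual obstacle I expect is simply spotting that one should condition on inner (rather than outer) answers; once that is done, the classical and quantum arguments are structurally identical and only require routine bookkeeping to check positivity/sub-normalization of the induced objects.
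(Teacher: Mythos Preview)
Your approach is essentially the paper's: condition on \emph{inner} answers $\vec a_{<t},\vec b_{<t}$ to obtain a substrategy for $G'_t$, apply weak fortification pointwise, then handle the multiplicative and additive terms. The key insight is correctly identified, and the additive-error bookkeeping is right.

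There is one genuine (though easily repaired) gap in your reassembly of the multiplicative term. After fortification you are left with
\[
\Ex_{\text{outer vars}}\,\Big(\textstyle\prod_{i<t}V_i\Big)\cdot \Ex_{(x_t,y_t)\sim\mu_t}\,p(x_t)\,q(y_t),
\]
where $p(x_t)=\Ex_{x'_t\sim N(x_t)}\sum_{a'_t}F[\cdot](x'_t,a'_t)$ and $q$ is analogous. You then define $\tilde f$ by averaging $p$ over the \emph{marginal} of $x_t$ (and $\tilde g$ over the marginal of $y_t$) and claim the whole expression equals $\val(\otimes_{i<t}G'_i,\tilde f,\tilde g)$. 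But $(x_t,y_t)$ are correlated under $\mu_t$, so in general $\Ex_{(x_t,y_t)\sim\mu_t}p(x_t)q(y_t)\neq\Ex_{x_t}p(x_t)\cdot\Ex_{y_t}q(y_t)$, and there is no inequality in the needed direction either. Thus the collapse to a single product strategy $(\tilde f,\tilde g)$ is not valid. The same remark applies verbatim to your $\tilde A,\tilde B$ in the entangled case.

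The paper circumvents this by not averaging at all: for each \emph{fixed} $(x_t,y_t)$, the pair $\big(f'(\cdot,x'_t,\cdot),\,g'(\cdot,y'_t,\cdot)\big)$, with the $t$-th answer summed out and $x'_t\sim N(x_t)$ averaged, is a valid (indeed complete) strategy for $\otimes_{i<t}G'_i$, so the inner expression is $\le\val(\otimes_{i<t}G'_i)$ pointwise in $(x_t,y_t)$; averaging over $(x_t,y_t)\sim\mu_t$ then preserves the bound. Replacing your $\tilde f,\tilde g$ step with this pointwise observation completes your argument.
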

The key to proving Proposition~\ref{prop:main_PR_prop} is to work with the induced strategies. This allows us to get an additive error depending just on the alphabet size of the inner game. In the proof, we use the usual notation where a strategy missing an (answer) argument indicates summation over that variable. For example,
\[f(x_1, a_1, \ldots, x_{t-1},a_{t-1}, x_t)= \sum_{a_t} f(x_1, a_1, \ldots, x_{t-1},a_{t-1}, x_t, a_t).\]

\begin{proof}
We only prove \eqref{eq:sec_PR_1} as the proof of  \eqref{eq:sec_PR_2} follows the same structure. Also for simplicity, we focus on the case of two-player games as the proof of the multiplayer case is a straightforward extension.

Consider any strategies $f:X'_1\times A'_1 \times  \ldots \times  X'_t \times A'_t\rightarrow [0,1]$, $g:Y'_1\times B'_1 \times  \ldots \times  Y'_t \times B'_t\rightarrow [0,1]$. To clarify notation we will denote tuples $(z_1,\ldots,z_{t-1})$ as $\mathbf{z}_{<t}$. With this notation,  $\val(G'_1\otimes \ldots \otimes G'_t, f,g)$ is precisely 
\begin{equation}\label{eq:sec_PR_3}
\Ex_{(\mathbf{x}_{\leq t},\mathbf{y}_{\leq t})} \, \, \Ex_{\mathbf{x'}_{\leq t}}\Ex_{\mathbf{y'}_{\leq t} } \sum_{\mathbf{a'}_{\leq t},\mathbf{b'}_{\leq t}} \, \, \prod_{i=1}^t \, \,  V(a'_i(x_i), b'_i(y_i), x_i, y_i) \, f(\mathbf{x'}_{\leq t},\mathbf{a'}_{\leq t}) \cdot g(\mathbf{y'}_{\leq t},\mathbf{b'}_{\leq t}),
\end{equation}
where the expectations are according to $(x_i,y_i)\sim \mu_i$ and $x'_i\sim N(x_i)$ and $y'_i\sim N(y_i)$ for all $i=1,\ldots, t$. As usual let
\begin{equation}
	 f(\mathbf{x}_{< t},\mathbf{a}_{< t}, x'_t, a'_t)= \Ex_{\mathbf{x'}_{< t}\sim N(\mathbf{x}_{< t})} \, \sum_{a'_i(x_i)=a_i,\,i<t} f(\mathbf{x'}_{< t},\mathbf{a'}_{< t}, x'_t, a'_t).
\end{equation}
Using this notation, we can rewrite \eqref{eq:sec_PR_3} as 
\begin{equation}\label{eq:sec_PR_3b}
\Ex_{(\mathbf{x}_{< t}, \mathbf{y}_{< t})} \sum_{\mathbf{a}_{< t},\mathbf{b}_{< t}} \prod_{i=1}^{t-1} V(a_i,b_i,x_i,y_i)  \, S(\mathbf{x}_{< t},\mathbf{y}_{< t},\mathbf{a}_{< t},\mathbf{b}_{< t}),
\end{equation}
where $S(\mathbf{x}_{< t},\mathbf{y}_{< t},\mathbf{a}_{< t},\mathbf{b}_{< t})$ is given by 
\begin{equation}\label{eq:sec_PR_4}
\Ex_{(x_t,y_t)} \Ex_{x'_t}\Ex_{y'_t} \sum_{a'_t, b'_t} V(a'_t(x_t), b'_t(y_t),x_t,y_t) \,  f(\mathbf{x}_{< t},\mathbf{a}_{< t}, x'_t, a'_t)\cdot g(\mathbf{y}_{< t},\mathbf{b}_{< t}, y'_t, b'_t). 
\end{equation}
Consider the following substrategy $G'_t$: fix the first $2(t-1)$ arguments of $f$ to $(\mathbf{x}_{< t},\mathbf{a}_{< t})$ and the first $2(t-1)$ arguments of $g$ to $(\mathbf{y}_{< t},\mathbf{b}_{< t})$. Then~\eqref{eq:sec_PR_4} is precisely the value of this substrategy in $G'_t$. Since $G'_t$ is $(\eps, \delta)$ weakly fortified, it follows that 
\begin{equation}
 \eqref{eq:sec_PR_4} \leq (\val(G_t)+\eps)\cdot \Ex_{(x_t, y_t)} f(\mathbf{x}_{< t},\mathbf{a}_{< t}, x_t)\cdot g(\mathbf{y}_{< t},\mathbf{b}_{< t}, y_t) +\delta.
\end{equation}
Plugging this expression back into~\eqref{eq:sec_PR_3b}, $\val(G'_1\otimes \ldots \otimes G'_t, f,g)$ is bounded by 
\begin{align*}
(\val(G_t)+\eps) \Ex_{(\mathbf{x}_{\leq t},\mathbf{y}_{\leq t})} \sum_{\mathbf{a}_{< t},\mathbf{b}_{< t}} \prod_{i=1}^{t-1} V(a_i,b_i,x_i,y_i)  \,   f(\mathbf{x}_{< t},\mathbf{a}_{< t}, x_t)  \cdot    g(\mathbf{y}_{< t},\mathbf{b}_{< t}, y_t) 
+ \delta \cdot \prod_{i=1}^{t-1} |\Sigma_{G_i}|.
\end{align*}
To conclude we observe that
 \begin{equation} \label{eq:sec_PR_5}
  	\Ex_{(\mathbf{x}_{\leq t},\mathbf{y}_{\leq t})} \sum_{\mathbf{a}_{< t},\mathbf{b}_{< t}} \prod_{i=1}^{t-1} V(a_i,b_i,x_i,y_i)   \,   f(\mathbf{x}_{< t},\mathbf{a}_{< t}, x_t) \cdot  g(\mathbf{y}_{< t},\mathbf{b}_{< t}, y_t)
  \end{equation}
 is at most $\val(G'_1\otimes \ldots \otimes G'_{t-1})$, as for any fixed $(x_t, y_t)$ the functions $f(\cdot,x_t):X'_1\times A'_1\times \ldots \times X'_{t-1}\times A'_{t-1}\rightarrow [0,1]$ and $g(\cdot,y_t):Y'_1\times B'_1\times \ldots \times Y'_{t-1}\times B'_{t-1}\rightarrow [0,1]$ are valid strategies in $G'_1\otimes \ldots \otimes G'_{t-1}$.
\end{proof}

\begin{remark} 
Theorem \ref{thm:PR1} immediately follows from Proposition \ref{prop:main_PR_prop} by taking $t=2$ and considering the trivial concatenation $G'_1= G_1$, $G'_2= G_2$.  
\end{remark}

Theorem \ref{thm:PR2_multiround} follows easily.

\begin{proof}[Proof of Theorem \ref{thm:PR2_multiround}]
	We prove \eqref{eq:classical_PR} as  the proof of \eqref{eq:quantum_PR} is similar. 
	
The proof is by induction on $m$. The case $m=1$ is clear. By the induction hypothesis we have 
\[ \val({G'}^{\otimes (m-1)})\leq (\val(G)+\eps)^{m-1}+ \delta \cdot (m-2) |\Sigma_{G}|^{m-2}.\]
 Note that we can assume  $\val(G)+\eps<1$ otherwise \eqref{eq:classical_PR} holds trivially. Applying Proposition \ref{prop:main_PR_prop} we see that
 \begin{align*}
 	\val({G'}^{\otimes m}) &\leq (\val(G)+\eps)\cdot \val({G'}^{\otimes (m-1)})+ \delta\cdot |\Sigma_{G}|^{m-1} \\ 
 	&\leq (\val(G)+\eps)^{m}+ \delta\cdot (\val(G)+\eps)\cdot (m-2) |\Sigma_{G}|^{m-2} +  \delta\cdot |\Sigma_{G}|^{m-1} \\
 	&\leq (\val(G)+\eps)^{m} +\delta \cdot (m-1) \cdot |\Sigma_G|^{m-1}.
 \end{align*}
\end{proof}


\section{Classical Fortification}\label{sec:classical}

In this section we prove our main theorem regarding the fortification of  classical games. Beside providing a short and self-contained treatment of the main result of \cite{mosh2014,BSVV}, it serves as preparation for the analysis of Section \ref{sec:quantum_generalization}.

\begin{theorem*}[Theorem \ref{thm:classical_fortification} restated]
Let $G$ be a biregular game, $M$ and $P$ two bipartite $\lambda$-spectral expanders. If $\lambda\leq \frac{\eps}{2}\sqrt{\frac{\delta}{2}}$, then  the concatenated game $G'=(M\circ G\circ P)$ is $(\eps,\delta)$ weakly fortified against classical substrategies.
\end{theorem*}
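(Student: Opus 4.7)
The plan is to reduce the value of the outer substrategy to the value of the induced substrategy on the inner game $G$ via Proposition \ref{prop:val_inequality_classical_quantum}, use the spectral expansion of $M$ and $P$ to show the induced marginals are concentrated around their global means $\alpha := \Ex_{x'}f(x')$ and $\beta := \Ex_{y'}g(y')$, and finally pass from an estimate in terms of $\alpha \beta$ to one in terms of $R := \Ex_{(x,y)\sim \mu} f(x)\, g(y)$ (where $f, g$ on the right denote the induced marginals, per the abuse of notation in the statement) by a short case analysis.

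For clarity I denote the induced substrategy on $G$ by $(\bar f, \bar g)$, and factor it as $\bar f(x,a) = \bar f(x)\, p(a\mid x)$, $\bar g(y,b) = \bar g(y)\, q(b\mid y)$, where $(p,q)$ (defined arbitrarily on fibers where the marginal vanishes) is a complete classical strategy for $G$. Writing $W(x,y) = \sum_{a,b} V(a,b,x,y)\, p(a\mid x)\, q(b\mid y) \in [0,1]$ for its pointwise winning probability, Proposition \ref{prop:val_inequality_classical_quantum} yields
\begin{equation*}
\val(G', f, g) \;=\; \val(G, \bar f, \bar g) \;=\; \Ex_{(x,y)\sim \mu} \bar f(x)\, \bar g(y)\, W(x, y),
\end{equation*}
and $(p, q)$ being a full strategy for $G$ gives $\Ex_\mu W(x,y) \leq \val(G)$.

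Next, since $G$ is biregular the marginal of $\mu$ on $X$ is uniform, so Proposition \ref{prop:expander} applied to $f: X' \to [0,1]$ (and using $f(x')^2 \leq f(x')$ since $f \in [0,1]$) gives $\Ex_x (\bar f(x) - \alpha)^2 \leq \lambda^2 \alpha$, and analogously for $\bar g$. Substituting $\alpha$ for $\bar f(x)$ and then $\beta$ for $\bar g(y)$ in the displayed identity, each replacement error can be controlled by Cauchy-Schwarz together with $\bar g, W \in [0,1]$ (so that $\bar g(y)^2 W(x,y)^2 \leq \bar g(y)$), yielding
\begin{equation*}
\val(G', f, g) \;\leq\; \alpha \beta\, \val(G) \,+\, 2\lambda \sqrt{\alpha \beta}.
\end{equation*}

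Finally, to trade $\alpha\beta$ for $R$, note that $|R - \alpha\beta| = |\mathrm{Cov}_\mu(\bar f, \bar g)| \leq \sqrt{\mathrm{Var}_\mu(\bar f)\,\mathrm{Var}_\mu(\bar g)} \leq \lambda^2 \sqrt{\alpha\beta}$, which gives $\val(G', f, g) \leq R\, \val(G) + 3 \lambda \sqrt{\alpha \beta}$. A short case split on $\sqrt{\alpha \beta}$ completes the argument: if $\sqrt{\alpha\beta}$ is small (so $3\lambda\sqrt{\alpha\beta} \leq \delta$) the residual fits directly into $\delta$; otherwise the hypothesis on $\lambda$ ensures $R \geq \alpha\beta/2$, and the bound $\lambda^2 \leq \eps^2 \delta/8$ coming from $\lambda \leq \tfrac{\eps}{2}\sqrt{\delta/2}$ is precisely what is needed to absorb the remaining $\lambda \sqrt{\alpha\beta}$ into $\eps R + \delta$. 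The main technical obstacle is exactly this last balancing step: the constants in the hypothesis on $\lambda$ are tuned so that the extra $\lambda \sqrt{\alpha \beta}$ term always either fits into $\delta$ (when $\alpha\beta$ is small) or into $\eps R$ (when $\alpha\beta$ is large).
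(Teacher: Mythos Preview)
Your argument is correct and follows essentially the same route as the paper's: reduce to the induced substrategy on $G$, use spectral expansion to replace the fluctuating weights $\bar f(x)\bar g(y)$ by a constant, and finish with a short case split. The only organizational difference is the choice of pivot: the paper centers the argument around the \emph{coupled} average $\gamma=\Ex_{(x,y)\sim\mu}\bar f(x)\bar g(y)=R$ (bounding $\Ex|\bar f(x)\bar g(y)-\gamma|$ via an auxiliary two-variable expansion claim, Claim~\ref{cor:exp-averages}), whereas you first center around the product of marginal means $\alpha\beta$ and only at the end convert to $R$ via $|R-\alpha\beta|\le\lambda^2\sqrt{\alpha\beta}$. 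Your route is marginally more elementary in that it only invokes the one-variable expander bound (Proposition~\ref{prop:expander}) and Cauchy--Schwarz, without a separate two-variable lemma; the paper's route has the slight advantage that its case split is on $\gamma$ itself, so the small case $\gamma\le\delta$ is disposed of by the trivial bound $\val(G',f,g)\le\gamma$ with no arithmetic. Your final balancing step is only sketched, but the numbers do work out under the stated hypothesis $\lambda^2\le\eps^2\delta/8$ (indeed with room to spare in the $\eps$-dependence).
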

We note that it follows from~\cite[Appendix C]{BSVV} that the dependence $\lambda$ and $\delta$ in Theorem \ref{thm:classical_fortification} is up to constant factors optimal. On the other hand, the tightness of dependence of $\eps$ and $\delta$ does not seem to follow from \cite{BSVV} lower bound (however, $\delta$ is by far the more significant of the two parameters).

\subsection{Proof of Theorem \ref{thm:classical_fortification}}

We start with a simple claim whose proof we will defer to the end of the subsection.
\begin{claim}\label{cor:exp-averages}
Let  $M = (X' \times X, E)$ and $N=(Y'\times Y,F)$ be two biregular bipartite graphs that are $\lambda$-spectral expanders. Let $\mu$ be a distribution on $X\times Y$ such that both marginals of $\mu$ are uniform. Let $f:X'\to\R$ and $g:Y'\to R$ be any functions, and denote $f:X\to \R$ and $g:Y\to \R$ the functions $f:x\mapsto\Ex_{x'\sim N(x)}f(x)$, $g:y\mapsto \Ex_{y'\sim N(y)} g(y)$ respectively. Then
$$ \Ex_{(x_1,y_1)\sim \mu} \Big| f(x_1)g(y_1) - \Ex_{(x_2,y_2)\sim \mu} f(x_2)g(y_2)\Big| \,\leq\,2\sqrt{2}\lambda\Big(\Ex_{x'\sim X'} |f(x')|^2 \Big)^{1/2}\Big(\Ex_{y'\sim Y'} |g(y')|^2 \Big)^{1/2}$$ 
and
$$  \Big| \Ex_{x_1\sim X}f(x)\Ex_{y_1\sim Y} g(y_1) - \Ex_{(x_2,y_2)\sim \mu} f(x_2)g(y_2)\Big| \,\leq\,2\lambda^2\Big(\Ex_{x'\sim X'} |f(x')|^2 \Big)^{1/2}\Big(\Ex_{y'\sim Y'} |g(y')|^2 \Big)^{1/2}.$$ 
\end{claim}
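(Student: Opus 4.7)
The main tool is Proposition \ref{prop:expander}, applied to $M$ and $N$ separately. Because both graphs are biregular, the induced distributions $\mu'$ on $X'$ and on $Y'$ are uniform, so $\bar f = \Ex_{x' \sim X'}f(x') = \Ex_{x \sim X}f(x)$ and analogously for $g$. The proposition then yields
$$\Ex_{x \sim X}(f(x) - \bar f)^2 \leq \lambda^2\, \Ex_{x' \sim X'}(f(x') - \bar f)^2 \leq \lambda^2 \|f\|_{L^2(X')}^2,$$
together with the analogous bound for $\tilde g := g - \bar g$ on $Y$.

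Writing $f = \bar f + \tilde f$ and $g = \bar g + \tilde g$ with $\tilde f,\tilde g$ of zero mean, I expand
$$f(x)g(y) = \bar f\bar g + \bar f\,\tilde g(y) + \tilde f(x)\,\bar g + \tilde f(x)\tilde g(y).$$
The uniform-marginal assumption on $\mu$ kills the two middle terms in expectation, giving $\Ex_\mu fg - \bar f\bar g = \Ex_\mu \tilde f \tilde g$. For the second inequality, I would then bound $|\Ex_\mu \tilde f\tilde g|$ directly by Cauchy--Schwarz in $L^2(\mu)$; the uniform marginals give $\|\tilde f\|_{L^2(\mu)} = \|\tilde f\|_{L^2(X)} \leq \lambda\|f\|_2$ and similarly for $\tilde g$, yielding $|\Ex_\mu\tilde f\tilde g| \leq \lambda^2\|f\|_2\|g\|_2$, which is even cleaner than the claimed $2\lambda^2$ bound.

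For the first inequality, I would use the identity
$$f(x)g(y) - \Ex_\mu fg = \tilde f(x)g(y) + \bar f\,\tilde g(y) - \Ex_\mu \tilde f\tilde g,$$
apply the triangle inequality, and bound each term by Cauchy--Schwarz in $L^2(\mu)$. The key ingredients are (i) the expander estimates $\|\tilde f\|_{L^2(X)} \leq \lambda\|f\|_2$ and $\|\tilde g\|_{L^2(Y)} \leq \lambda\|g\|_2$, (ii) the Jensen bound $\|g\|_{L^2(Y)} \leq \|g\|_{L^2(Y')}$ for the averaged function, and (iii) the trivial bound $|\bar f| \leq \|f\|_2$. These together produce a bound of the form $O(\lambda)\|f\|_2\|g\|_2$.

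The principal obstacle is matching the precise constant $2\sqrt 2\,\lambda$: the straightforward triangle-inequality decomposition I just sketched gives $(2\lambda + \lambda^2)\|f\|_2\|g\|_2$, which is only a constant factor off. To sharpen this to $2\sqrt 2\lambda$ I would instead bound $\|h\|_{L^1(\mu)} \leq \|h\|_{L^2(\mu)}$ and expand the resulting variance $\Ex_\mu (fg - \Ex_\mu fg)^2$, invoking the expander estimates on the surviving quadratic terms. I expect this step to be purely algebraic rather than a source of new difficulty, and the qualitative bound $O(\lambda)\|f\|_2\|g\|_2$ already suffices to drive the downstream fortification argument of Theorem \ref{thm:classical_fortification}.
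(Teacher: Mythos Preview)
Your approach is correct and uses the same essential ingredients as the paper (Cauchy--Schwarz plus Proposition~\ref{prop:expander}), but the decomposition differs. The paper proves the first inequality by a \emph{symmetrization} trick: it introduces an independent copy $(x_2,y_2)\sim\mu$, writes
\[
f(x_1)g(y_1)-f(x_2)g(y_2) = (f(x_1)-f(x_2))g(y_1) + f(x_2)(g(y_1)-g(y_2)),
\]
and applies Cauchy--Schwarz to each piece, using that $\Ex_{x_1,x_2}(f(x_1)-f(x_2))^2 = 2\operatorname{Var}_X(f)$ before invoking the expander bound. This is where their factor $\sqrt 2$ per variable (hence $2\sqrt 2$) arises. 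Your approach instead centers $f,g$ around their means and uses the uniform-marginal property to kill the linear terms; both are standard and equally valid.

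One clarification: you needn't worry about sharpening to $2\sqrt 2\lambda$. Your bound $(2\lambda + \lambda^2)\|f\|_2\|g\|_2$ is already \emph{at least as good} as $2\sqrt 2\lambda\|f\|_2\|g\|_2$ whenever $\lambda \leq 2\sqrt 2 - 2 \approx 0.83$, which covers the entire regime of interest. Similarly, your constant $\lambda^2$ for the second inequality genuinely beats the paper's $2\lambda^2$ (the paper picks up an extra factor of $2$ from the same symmetrization). So your proposed $L^2$-variance refinement is unnecessary; the argument you already sketched is complete and slightly tighter than the paper's.
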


We prove a slightly stronger statement which implies  Theorem \ref{thm:classical_fortification}. Let $f,g$ be any substrategies for $G$, and let $\gamma = \Ex_{(x,y)\sim \mu} f(x)g(y)$. We claim that 
\begin{equation}\label{eq:thm11-1}
\val(G',f,g) \leq  \val(G) \gamma+ 2\sqrt{2} \,\lambda \sqrt{\gamma} + 4\,\lambda^2.
\end{equation}
To deduce the bound claimed in Theorem~\ref{thm:classical_fortification} from~\eqref{eq:thm11-1} we distinguish two cases. Either $\gamma \leq \delta$, in which case using the trivial estimate $\val(G',f,g) \leq \gamma$ the bound immediately follows. Or $\gamma > \delta$, in which case 
\begin{align*}
 \val(G) \gamma+ 2\sqrt{2} \lambda \sqrt{\gamma} + 4\lambda^2 &\leq \gamma(\val(G) + 2\sqrt{2} \lambda \delta^{-1/2}) + 4 \lambda^2 \\
&\leq \gamma(\val(G) + \eps) + \delta
\end{align*}
given the relation between $\eps,\delta$ and $\lambda$ expressed in the theorem.

It remains to prove~\eqref{eq:thm11-1}. Fix substrategies $f$ and $g$. We have
\begin{align*}
\val(G',f,g)&= 	\Ex_{(x,y)\sim \mu} \sum_{V(a,b,x,y)=1} f(x,a)\cdot g(y,b)\\
&= \Ex_{(x,y)\sim \mu} f(x)g(y)  \sum_{V(a,b,x,y)=1} \frac{f(x,a)}{f(x)} \cdot \frac{g(y,b)}{g(y)},
\end{align*}
where we adopt the convention that $0/0 = 0$. Using the triangle inequality, 
\begin{align}
\val(G',f,g)&\leq 	
 \gamma \Ex_{(x,y)\sim \mu}  \sum_{V(a,b,x,y)=1} \frac{f(x,a)}{f(x)} \cdot \frac{g(y,b)}{g(y)} + \Ex_{(x,y)\sim \mu} \big|f(x)g(y)  - \gamma\big|\notag\\
&\leq \gamma\val(G) +  \Ex_{(x,y)\sim \mu} \big|f(x)g(y)  - \gamma\big|,\label{eq:thm11-2}
\end{align}
where the second inequality follows since $(x,a)\mapsto f(x,a)/f(x)$ and $(y,b)\mapsto g(y,b)/g(y)$ form a valid pair of strategies for $G$. It remains to estimate  the second term above.  
Applying the first inequality in Claim~\ref{cor:exp-averages}, 
\begin{align*}
 \Ex_{(x,y)\sim \mu} \big|f(x)g(y)  - \gamma\big| &\leq 2\sqrt{2} \lambda \Big( \Ex_{x'\sim X'} |f(x')|^2\Big)^{1/2}\Big( \Ex_{y'\sim Y'} |g(y')|^2\Big)^{1/2}\\
&\leq 2\sqrt{2} \lambda \Big( \Ex_{x'\sim X'} f(x') \,\Ex_{y'\sim Y'} g(y')\Big)^{1/2}\\
&\leq 2\sqrt{2} \lambda \sqrt{\gamma+2\lambda^2}\\
&\leq 2\sqrt{2} \lambda (\sqrt{\gamma}+ \sqrt{2} \lambda) \\
&= 2\sqrt{2} \lambda \sqrt{\gamma} + 4\lambda^2, 
\end{align*}
where in the second inequality we used $0\leq f(x'),g(y')\leq 1$ for all $x',y'$ and the third uses the second inequality in Claim~\ref{cor:exp-averages}. Together with~\eqref{eq:thm11-2} this proves~\eqref{eq:thm11-1}.

Finally, we prove Claim \ref{cor:exp-averages}.
\begin{proof}[Proof of Claim \ref{cor:exp-averages}]
For the first inequality, write
\begin{align*}
 \Ex_{(x_1,y_1)\sim \mu} &\big| f(x_1)g(y_1) - \Ex_{(x_2,y_2)\sim \mu} f(x_2)g(y_2)\big|\\
 &\leq  \Ex_{(x_1,y_1),(x_2,y_2)\sim \mu} \big(| f(x_1)-f(x_2)||g(y_1)| + |f(x_2)||g(y_1)-g(y_2)|\big)\\
&\leq \Big( \Ex_{x_1,x_2\sim X} |f(x_1)-f(x_2)|^2 \Big)^{1/2} \Big(\Ex_{y_1\sim Y} |g(y_1)|^2 \Big)^{1/2} \\
&\hskip3cm+\Big(\Ex_{x_2\sim X} |f(x_2)|^2 \Big)^{1/2}\Big( \Ex_{y_1,y_2\sim Y} |g(y_1)-g(y_2)|^2 \Big)^{1/2} \\
&\leq \lambda \Big( \Ex_{x'_1,x'_2\sim X'} |f(x'_1)-f(x'_2)|^2 \Big)^{1/2}   \Big(\Ex_{y'_1\sim Y'} |g(y'_1)|^2 \Big)^{1/2} \\
&\hskip3cm+\lambda\Big(\Ex_{x'_2\sim X'} |f(x'_2)|^2 \Big)^{1/2}\Big( \Ex_{y'_1,y'_2\sim Y'} |g(y'_1)-g(y'_2)|^2 \Big)^{1/2},
\end{align*}
where the last inequality uses Proposition~\ref{prop:expander}.
Now note that $ \Ex_{x'_1,x'_2\sim X'} |f(x'_1)-f(x'_2)|^2\leq 2\Ex_{x'\sim X'} |f(x')|^2$. Applying a similar bound for $g$ gives us the first inequality. 
For the second, write 
\begin{align*}
\big| \Ex_{(x_2,y_2)\sim \mu} f(x_2)g(y_2)  &\Ex_{x_1\sim X}f(x)\Ex_{y_1\sim Y} g(y_1)\big| 
\\
&= \big| \Ex_{(x_2,y_2)\sim\mu, x_1\sim X,y_1\sim Y} (f(x_1)-f(x_2))(g(y_1)-g(y_2)) \big|\\
&\leq \Big( \Ex_{x_1,x_2\sim X} ( f(x_1)-f(x_2))^2 \Big)^{1/2}\Big( \Ex_{y_1,y_2\sim Y} ( g(y_1)-g(y_2))^2 \Big)^{1/2}\\
&\leq \lambda^2 \Big( \Ex_{x'_1,x'_2\sim X'} ( f(x_1)-f(x_2))^2 \Big)^{1/2}\Big( \Ex_{y'_1,y'_2\sim Y'} ( g(y'_1)-g(y'_2))^2 \Big)^{1/2}\\
&\leq 2\lambda^2 \Big(\Ex_{x'\sim X'} |f(x')|^2 \Big)^{1/2}\Big(\Ex_{y'\sim Y'} |g(y')|^2 \Big)^{1/2}.
\end{align*}
\end{proof}
\subsection{A simple multiplayer fortification}\label{sec:multiplaer_classical}
 The following is a simple fortification theorem for $k$-player games. Since  Theorem \ref{thm:PR2_multiround} applies equally well to the multiplayer setting, we get a hardness amplification result for classical multiplayer games.

\begin{theorem}\label{thm:multiplyaer_classical}
Let $G$ be a $k$-player game. Suppose $G'$ is given by composing each of the $k$ sides of $G$ by a $\lambda$-spectral expander where $\lambda\leq 2\delta/k$. Then $G'$ is a $(0, \delta)$ fortified game.\footnote{Although there is no $\eps$ dependence in the above, when applied to 2-player games the theorem is still weaker than Theorem \ref{thm:classical_fortification} because of the worse dependence on $\delta$ -- which is the more crucial parameter than $\eps$. }
\end{theorem}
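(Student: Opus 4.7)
My plan is to prove Theorem~\ref{thm:multiplyaer_classical} by generalizing the argument of Theorem~\ref{thm:classical_fortification} to $k$ players, replacing the two-variable expansion used there with a $k$-fold telescoping decomposition. Fix substrategies $f_1,\dots,f_k$ for $G'$, let $\phi_i(x_i):=\sum_{a_i} f_i(x_i,a_i)$ denote the induced marginals on the inner question sets, and set $\gamma := \Ex_{\mathbf{x}\sim\mu}\prod_{i=1}^k \phi_i(x_i)$. The case $\gamma\leq\delta$ is immediate from the trivial bound $\val(G',f_1,\dots,f_k)\leq \gamma$, so I would focus on $\gamma>\delta$.

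The first step mirrors the derivation leading to~\eqref{eq:thm11-2}: factorize each $f_i(x_i,a_i)=\phi_i(x_i)\,g_i(x_i,a_i)$ with $g_i$ a complete strategy (wherever $\phi_i>0$), use the multiplayer analogue of Proposition~\ref{prop:val_inequality_classical_quantum} to express $\val(G',f_1,\dots,f_k)$ as an expectation over the induced strategies, and combine the pointwise bound $\sum_{V=1}\prod_i g_i(x_i,a_i)\leq 1$ with the averaged bound $\val(G,g_1,\dots,g_k)\leq \val(G)$ to obtain
\[
\val(G',f_1,\dots,f_k)\;\leq\;\val(G)\cdot\gamma\;+\;\Ex_{\mathbf{x}\sim\mu}\Big|\prod_{i=1}^k \phi_i(x_i)-\gamma\Big|.
\]
It then suffices to prove the fluctuation term is at most $\delta$.

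For the fluctuation, I would introduce an independent copy $\mathbf{y}\sim\mu$, so that $\gamma=\Ex_{\mathbf{y}}\prod_i\phi_i(y_i)$, use Jensen's inequality, and telescope. Since $\phi_i\in[0,1]$,
\[
\Big|\prod_i\phi_i(x_i)-\prod_i\phi_i(y_i)\Big|\;\leq\;\sum_{j=1}^k |\phi_j(x_j)-\phi_j(y_j)|.
\]
Biregularity of $G$ makes the marginal of $\mu$ on each $X_j$ uniform, so each summand reduces to $\Ex_{x_j,y_j\sim X_j}|\phi_j(x_j)-\phi_j(y_j)|$, where $x_j,y_j$ are independent uniform samples. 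Cauchy--Schwarz bounds this by $\sqrt{2\,\mathrm{Var}_{X_j}(\phi_j)}$, and Proposition~\ref{prop:expander} together with $\mathrm{Var}_{X_j'}(\phi_j)\leq \bar\phi_j(1-\bar\phi_j)\leq 1/4$ (since $\phi_j\in[0,1]$) gives a per-coordinate bound of $\lambda/\sqrt{2}$. Summing yields an overall fluctuation of at most $k\lambda/\sqrt{2}$, which under the hypothesis $\lambda\leq 2\delta/k$ is $O(\delta)$; one can absorb the small constant factor by tightening the bound to $\lambda\leq \sqrt{2}\,\delta/k$ if needed, which yields $(0,\delta)$-fortification cleanly.

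The main conceptual point behind the telescoping trick is to sidestep the possibly arbitrary correlation across coordinates of $\mu$: biregularity guarantees only that each \emph{single-coordinate} marginal is uniform, but the joint distribution may be arbitrarily correlated. The bound $\phi_i\leq 1$ used in the telescoping step absorbs all cross-coordinate structure, permitting Proposition~\ref{prop:expander} to be applied independently on each side of the game, and is also what produces the linear (rather than sublinear) dependence of the additive error on the number of players $k$.
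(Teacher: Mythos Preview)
Your proposal is correct and follows essentially the same approach as the paper's proof: factor out $\prod_i \phi_i(x_i)$, apply the triangle inequality to get $\val(G)\cdot\gamma$ plus a fluctuation term, then bound the fluctuation by telescoping coordinate-by-coordinate (using $\phi_i\in[0,1]$), Cauchy--Schwarz, and Proposition~\ref{prop:expander}. The only cosmetic differences are that the paper telescopes against the product of means $\prod_i \overline{\phi_i}$ rather than an independent copy $\mathbf{y}\sim\mu$ (incurring a factor of $2$ instead of your $1/\sqrt{2}$), and that the paper does not split into cases on $\gamma$ --- your case split is harmless but unnecessary, since your fluctuation bound $k\lambda/\sqrt{2}$ does not depend on $\gamma$.
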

\begin{proof}
Consider a classical substrategy for $G'$ given by $f_i:X'_i \times A'_i\rightarrow \R^+$ for $i=1,2,\ldots,k$.  As usual, denote $f_i:X_i\times A_i\rightarrow \R^+$ the projection of $f_i$ to the inner game $G$. By definition,
\[ \val(G, \{f_i\}_{i=1}^k)=\Ex_{(x_1, \ldots, x_k)} \sum_{a_1, a_2, \ldots, a_k} V(a_1, \ldots, a_k, x_1, \ldots, x_k)\cdot f_1(x_1, a_1)\cdot f_x(x_2, a_2)\ldots f_k(x_k, a_k).\]
We can rewrite the above as
\[ \Ex_{(x_1, \ldots, x_k)} \prod_{i=1}^k f_i(x_i) \sum_{a_1,\ldots, a_k} V(a_1, \ldots, a_k, x_1, \ldots, x_k)\cdot \frac{f_1(x_1, a_1)\cdot f_x(x_2, a_2)\ldots f_k(x_k, a_k)}{f(x_1)\cdot f(x_2)\ldots \cdot f(x_k)}.\]

Let $\gamma=  \Ex_{(x_1, \ldots, x_k)} \prod_{i=1}^k f_i(x_i)$. Applying the triangle inequality,
\[ \val(G, \{f_i\}_{i=1}^k)\leq \gamma \cdot \val(G) + \Ex_{(x_1, \ldots, x_k)} |\prod_{i=1}^k f_i(x_i)-\gamma|.\]
To conclude it will suffice to show the second term above is at most $\delta$. Let $\overline{f_i} = \Ex_{x_i} f(x_i)$. Then
\begin{align*}
	 \Ex_{(x_1, \ldots, x_k)} \left|\prod_{i=1}^k f_i(x_i)-\gamma \right | &\leq  \Ex_{x_1, \ldots, x_k} \left|\prod_{i=1}^k f_i(x_i)-\prod_{i=1}^k \overline{f_i}\right| + \Ex_{(x_1, \ldots, x_k)} \left|\prod_{i=1}^k \overline{f_i} -\gamma \right|  \\
	  &= \Ex_{(x_1, \ldots, x_k)} \left|\prod_{i=1}^k f_i(x_i)-\prod_{i=1}^k \overline{f_i}\right | + \left  | \prod_{i=1}^k \overline{f_i} - \Ex_{x_1, \ldots, x_k} \prod_{i=1}^k f_i(x_i) \right|
	 \\ &\leq 2 \cdot \Ex_{(x_1, \ldots, x_k)}  \left |\prod_{i=1}^k f_i(x_i)-\prod_{i=1}^k \overline{f_i} \right | \\
	 & \leq 2\,\sum_{i=1}^k \Ex_{x_i} |f_i(x_i)- \overline{f_i}|,
\end{align*}
where the first equality is by definition of $\gamma$, the second inequality by convexity of $|\cdot|$, and the last follows from 
\[ \ |f_1(x_1)f_2(x_2)\ldots f_k(x_k)- \overline{f_1} \overline{f_2}\ldots \overline{f_k}|\leq \sum_{\ell=1}^k \left| \prod_{i=1}^{\ell-1}f_i(x_i) \cdot \prod_{i=\ell}^k  \overline{f_i} - \prod_{i=1}^{\ell} f_i(x_i) \cdot \prod_{i=\ell+1}^k \overline{f_i}\right | \leq \sum_{i=1}^k \Ex_{x_i} |f_i(x_i)- \overline{f_i}|. \]
Hence,
\[ \Ex_{x_1,\ldots,x_k} \left|\prod_{i=1}^k f_i(x_i)-\gamma \right| \leq 2 \sum_{i=1}^k \left( \Ex_{x_i} ( f_i(x_i)-\overline{f_i})^2 \right)^{1/2} \leq 2\lambda  \sum_{i=1}^k  \left(\Ex_{x'_i} ( f_i(x'_i)-\overline{f_i})^2\right)^{1/2} \leq 2\lambda k.\]
The desired result follows. 
\end{proof}


\section{Reducing Strong to Weak Fortification for  Entangled Games}\label{sec:modified_fortification}
In this section, we start working toward the problem of fortifying games in the entangled case. In particular, we show how Theorem \ref{thm:fort_OF_main} follows from Theorem \ref{thm:main_fortification}. Let $G=(X\times Y, A\times B, \mu,V)$ be a two-player game. 
\begin{definition}\label{def:disjoint_copy}
For a game $G$ and integer $l\in \N$ let $G^{\oplus l}$ denote the disjoint union of $l$ copies of $G$.	
\end{definition}
Suppose that $M$ and $P$ are regular bipartite graphs over $X'\times X$ and $Y'\times Y$, respectively. Suppose further that $M$ and $P$ are balanced, i.e.~$|X'|=|X|$ and $|Y'|=|Y|$. Let $d_M$ and $d_N$ denote the degree of vertices $M$ and $P$, respectively. (Note that since the graphs are balanced and regular, the left and right degrees are the same.) 

Following \cite{BSVV}, we assume that $M$ and $P$ are explicit bipartite almost-Ramanujan expanders, as provided e.g.~by \cite{bilu2006}, for which the second-largest singular values $\lambda_M$ and $\lambda_P$ of $A_M$ and $A_P$ (the normalized adjacency matrices) respectively satisfy
\begin{equation}
\lambda_M= O \left (\frac{\poly(\log d_M) }{\sqrt{d_M}} \right), \qquad \lambda_{P}= O\left (\frac{\poly(\log d_P)}{\sqrt{d_P}} \right).
\end{equation}
Note that  if $d_M, d_P=\wtilde{\Omega}(\frac{1}{\eps^2\delta})$ then Theorem \ref{thm:main_fortification} implies that $G'=(M\circ G\circ P)$  is $(\eps,\delta)$ weakly-fortified. Next we recall the definition of $G'_{OF}$ from the introduction.

\paragraph{Ordered fortification.} Let $G$, $M$, $P$ and $G'=(M\circ G\circ P)$ be as above. let $l= \max \{ d_M,d_P \}$.  In $G'_{OF_l}$ (or simply $G'_{OF}$ where $l=\max\{d_M, d_P\}$) the referee samples questions $(x,y)$ as in $G$ and selects two random neighbors $x'\in X'$ and $y'\in Y'$ of $x$ and $y$ in $M$ and $P$ respectively. Then the referee selects two random injective maps $r_{x'}:N(x')\rightarrow [l]$ and $s_{y'}:N(y')\rightarrow [l]$ under the condition $r_{x'}(x)= s_{y'}(y)$. Alice's question then is the pair $(x', r_{x'})$ and Bob's is the pair $(y', s_{y'})$. Alice outputs an answer tuple $a':N(x')\rightarrow A$ and Bob  $b':N(y')\rightarrow B$. The players win if $V(a'(x), b'(y), x,y)=1$. 

\begin{remark}
Note that $G'_{OF}$ has exactly the same answer alphabet size as $G'$, the question sizes $|X'_{OF}|$ and $|Y'_{OF}|$ are larger than in $G'$. This blow-up can be mitigated as follows. It turns out that in Definition \ref{def:modified_fortified} the use of the complete set $S_{(d, l)}$ is unnecessary. More precisely, from the proof of the main claim of this section, Claim~\ref{claim:SVs_modified_fortified} below, it will be clear that the only condition required is that the permutations be chosen from a pairwise independent subset of $S_{(d,l)}$. Selecting the smallest possible such subset lets us reduce the blow-up in the size of the question sets from a multiplicative $D!$ down to $\poly(D)=\poly(\frac{1}{\eps^2 \delta})$. We omit the details.
\end{remark}


Although it may not be immediately apparent, it is possible to view $G'_{OF}$ as a concatenated game.  Let $G^{\oplus l}$ be as in Definition \ref{def:disjoint_copy}. Note that $G^{\oplus l}$ has exactly the same classical and entangled value as $G$. Let $S_{(d_M, l)}$  denote the set of all injective maps from $[d_M]\rightarrow [l]$. Fix maps $u_{x'}: N(x')\rightarrow [d_M]$ and $v_{y'}:N(y')\rightarrow [d_M]$ ordering the neighborhoods of each $x',y'$ in an arbitrary way.


\begin{definition}\label{def:modified_fortified}
Let $M$ be a regular bipartite graph over $X'\times X$ as above	. We define $\wtilde{M}$ as a bipartite graph over $X'_{OF}:=X'\times S_{(d_M, l)}$ and $X_{OF}:=X\times [l]$ where 
\[ (x',\pi)\sim_{\tilde{M}} (x, i) \qquad \Longleftrightarrow \qquad  \pi(u_{x'}(x))=i.\]
We define $\wtilde{P}$ from $P$ in a similar way.
\end{definition}
Note that here $\pi\circ u_{x'}$ exactly corresponds to $r_{x'}:N(x')\rightarrow [l]$ map from the original definition of $G'_{OF}$. Hence, we obtain the following alternative characterization of $G'_{OF}$. 
\begin{proposition}\label{prop:modified_frotified_concat}
The game $G'_{OF}$ constructed above is a concatenated game given by  
\[ G'_{OF}= (\wtilde{M} \circ G^{\oplus l} \circ \wtilde{P}).\]	
\end{proposition}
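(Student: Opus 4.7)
The plan is to unfold Definition~\ref{def:concat_intro} with inner game $G^{\oplus l}$ and outer graphs $\wtilde M, \wtilde P$, and verify directly, by matching distributions and checking bookkeeping, that the resulting game coincides with $G'_{OF}$ from Definition~\ref{def:G_OF_intro}. Throughout one maintains the identification that the disjoint-copy index $i \in [l]$ of $G^{\oplus l}$ corresponds to the common value $r_{x'}(x) = s_{y'}(y)$ in the ordered fortification, and that a permutation $\pi \in S_{(d_M, l)}$ on Alice's side corresponds to an injection $r_{x'} = \pi \circ u_{x'} : N(x') \to [l]$ (here $u_{x'}$ is a bijection because $M$ is $d_M$-regular and balanced), and analogously on Bob's side via $s_{y'} = \sigma \circ v_{y'}$.

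The main thing to verify is that the question distribution of the concatenation matches that of $G'_{OF}$. Unfolding: the referee samples $i \in [l]$ uniformly and $(x,y) \sim \mu$ independently to get inner questions $(x,i), (y,i)$, then picks uniformly a neighbor $(x',\pi)$ of $(x,i)$ in $\wtilde M$ and a neighbor $(y',\sigma)$ of $(y,i)$ in $\wtilde P$. By Definition~\ref{def:modified_fortified}, $(x',\pi)$ is a neighbor of $(x,i)$ iff $x' \in N(x)$ in $M$ and $\pi(u_{x'}(x)) = i$. A direct count shows that for any fixed $j \in [d_M]$ and $i \in [l]$, the number of $\pi \in S_{(d_M, l)}$ with $\pi(j) = i$ is $(l-1)!/(l-d_M)!$, which is independent of the choice of $j$ and $i$; consequently, given $(x,i)$, the marginal of $x'$ is uniform on $N(x)$, and given $(x,x',i)$, the permutation $\pi$ is uniform on the fiber $\{\pi : \pi(u_{x'}(x)) = i\}$. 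Translating via the identifications above, this reproduces exactly the $G'_{OF}$-sampling: $x' \sim N(x)$ and $y' \sim N(y)$ are independent uniform given $(x,y)$, the pair $(r_{x'}, s_{y'})$ is uniform on the set of injective pairs satisfying $r_{x'}(x) = s_{y'}(y)$, and the marginal of this common value is uniform on $[l]$ (the role played by $i$ in the concatenation picture).

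The remaining ingredients are routine. Alice's outer question $(x',\pi) \in X' \times S_{(d_M, l)}$ is identified with the $G'_{OF}$ question $(x', r_{x'})$ via $r_{x'} = \pi \circ u_{x'}$, and her answer $a' : N((x',\pi)) \to A$ is equivalent to a labeling $N(x') \to A$ because each $x \in N(x')$ in $M$ determines a unique index $i = \pi(u_{x'}(x))$ with $(x,i) \in N((x',\pi))$; the same applies to Bob. Finally, the $G^{\oplus l}$-predicate on inner questions $(x,i), (y,i)$ accepts iff the copy-indices coincide (which holds identically on the support of the sampled distribution) and the $G$-predicate is satisfied on the underlying questions, so it collapses to $V(a'(x), b'(y), x, y) = 1$, matching Definition~\ref{def:G_OF_intro}. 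No conceptual obstacle arises; the only step of substance is the permutation count, which is precisely what guarantees that no bias is introduced when passing between the two viewpoints.
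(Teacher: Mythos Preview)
Your proof is correct and follows exactly the route the paper intends: the paper does not give an explicit proof of this proposition, merely noting the identification $r_{x'}=\pi\circ u_{x'}$ (and analogously $s_{y'}=\sigma\circ v_{y'}$) and declaring the result, so your detailed unfolding of Definition~\ref{def:concat_intro} and the permutation count that certifies the marginals match is precisely the verification the paper leaves to the reader.
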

Next, we show that ordered fortification preserves the entangled value (the classical value is also preserved but that is not important here). 
\begin{proposition}\label{prop:value_preserve_entangled}
	We have $\valst(G'_{OF})= \valst(G)$.
\end{proposition}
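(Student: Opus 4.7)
The plan is to establish the two inequalities separately. The upper bound $\valst(G'_{OF})\leq \valst(G)$ is essentially immediate: by Proposition~\ref{prop:modified_frotified_concat}, $G'_{OF}$ is a concatenated game with inner game $G^{\oplus l}$. Since a disjoint union of copies of the same game has the same entangled value as a single copy (any strategy for the inner game can act identically on each branch of the disjoint union, and conversely any branch gives a strategy for $G$), $\valst(G^{\oplus l})=\valst(G)$. The inequality $\valst(G'_{OF})\leq \valst(G^{\oplus l})$ then follows from the general induced-strategy estimate~\eqref{eq:induced-val-lower} in Proposition~\ref{prop:val_inequality_classical_quantum}.

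For the lower bound $\valst(G'_{OF})\geq \valst(G)$, I will construct an explicit entangled strategy for $G'_{OF}$ from any near-optimal strategy for $G$. Fix $\eta>0$ and let $(|\psi\rangle,\{A_x^a\},\{B_y^b\})$ achieve value at least $\valst(G)-\eta$ in $G$. The players in $G'_{OF}$ share $l$ copies of the state, $|\Psi\rangle = |\psi\rangle^{\otimes l}$, together with the corresponding $l$-fold tensor product of each register. On input $(x',r_{x'})$, Alice produces her answer tuple $a':N(x')\to A$ as follows: for each $x^*\in N(x')$ she applies the POVM $\{A_{x^*}^a\}_{a\in A}$ to the $r_{x'}(x^*)$-th copy of $|\psi\rangle$, and sets $a'(x^*)$ to be the outcome obtained. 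Because $r_{x'}$ is \emph{injective}, these $|N(x')|$ measurements act on pairwise distinct tensor factors, hence commute, and can be carried out jointly on $|\Psi\rangle$. Bob proceeds analogously using $\{B_{y^*}^b\}$ on the $s_{y'}(y^*)$-th copy for each $y^*\in N(y')$.

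The crucial point is that the referee's correlation condition $r_{x'}(x)=s_{y'}(y)$ forces Alice's measurement $\{A_x^a\}$ (at the true question $x$) and Bob's measurement $\{B_y^b\}$ (at the true question $y$) to be performed on the \emph{same} copy of $|\psi\rangle$. All other measurements act on disjoint copies, which by the tensor structure of $|\Psi\rangle$ are uncorrelated with this shared copy. Consequently, conditioned on any fixing of $(x,y,x',y',r_{x'},s_{y'})$ consistent with the referee's distribution, the joint distribution of $(a'(x),b'(y))$ coincides exactly with the distribution of outcomes $(a,b)$ produced by the original strategy $(|\psi\rangle,\{A_x^a\},\{B_y^b\})$ on questions $(x,y)$ in $G$. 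Averaging the predicate $V(a'(x),b'(y),x,y)$ over the referee's randomness and over the quantum measurements therefore yields
\[ \valst(G'_{OF},|\Psi\rangle,\ldots) \;=\; \Ex_{(x,y)\sim\mu}\sum_{a,b}V(a,b,x,y)\,\langle\psi|A_x^a\otimes B_y^b|\psi\rangle \;\geq\; \valst(G)-\eta. \]
Letting $\eta\to 0$ completes the proof. The only conceptual step is the commutation observation made possible by the injectivity of $r_{x'}$ and $s_{y'}$ together with the matching condition $r_{x'}(x)=s_{y'}(y)$; everything else is linearity of expectation and the tensor product structure.
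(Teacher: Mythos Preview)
Your proof is correct and follows essentially the same approach as the paper: both directions are handled exactly as the authors do, namely the upper bound via Proposition~\ref{prop:modified_frotified_concat} together with~\eqref{eq:induced-val-lower}, and the lower bound via the $l$-copy strategy routing each question $x^*\in N(x')$ to the $r_{x'}(x^*)$-th copy of $|\psi\rangle$. If anything, you are slightly more careful than the paper in explicitly invoking injectivity to justify commutation and in using an $\eta$-approximation rather than assuming the supremum is attained.
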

\begin{proof}
In one direction we have $\valst(G'_{OF})\leq \valst(G^{\oplus l})= \valst(G)$ where we used Propositions \ref{prop:val_inequality_classical_quantum} and \ref{prop:modified_frotified_concat}. For the other direction, consider any entangled strategy $(\ket{\psi},\{A_x^a\},\{B_y^b\})$  for $G$. We construct a strategy for $G'_{\oplus}$ that achieves the same value. The provers share $l$ copies of the state $\ket{\psi}$, and each copy is assigned a unique label $i \in [l]$. Alice and Bob receive questions $(x',r_{x'})$ and $(y',s_{y'})$, respectively. For each $x\in N(x')$, Alice applies $\{A_{x}^a\}$ to the $r_{x'}(x)$-th copy of $|\psi\rangle$. Bob applies a similar strategy.
 
  Since by construction the ``true questions'' $x^*$ and $y^*$ are given the same label, the distribution of answers obtained for $x^*$ and $y^*$ is identical to the distribution of answers obtained while playing $G$ using $(\ket{\psi},\{A_x^a\},\{B_y^b\})$, hence achieving the same winning probability.
\end{proof}


The main technical step in reducing Theorem \ref{thm:fort_OF_main} to Theorem \ref{thm:main_fortification} is an analysis of the singular values of $\wtilde{M}, \wtilde{N}$ in terms of the singular values of $M$ and $N$. We prove the following.

\begin{claim}\label{claim:SVs_modified_fortified}
Let $M$ be a bipartite graph over $X' \times X$ as above and let $\lambda_M$ denote the second largest singular value of $M$.  Let $\tilde{M}$ be as in Definition \ref{def:modified_fortified}. Then, 

\[ \lambda_{\tilde{M}} \leq \max\left\{\lambda_M, \frac{1}{\sqrt{d_M-1}}\right\}.\]

%
\end{claim}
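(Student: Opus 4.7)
My plan is to diagonalize $\tilde{\mathcal M}^T \tilde{\mathcal M}$ explicitly via a clean Kronecker decomposition and then bound its second-largest eigenvalue by case analysis. Observe first that $\tilde M$ is biregular: each $(x',\pi)\in X'_{OF}$ has $d_M$ neighbors, each $(x,i)\in X_{OF}$ has $d_M\cdot(l-1)!/(l-d_M)!$ neighbors, and the marginals of the stationary distribution on both sides are uniform, so $\tilde{\mathcal M}$ has a constant entry on its support and its singular values coincide (up to a global scaling) with those of the unnormalized signed adjacency.

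The key step is to compute $(\tilde{\mathcal M}^T\tilde{\mathcal M})[(x,i),(x'',i'')]$ by counting pairs $(x',\pi)$ which are common $\tilde M$-neighbors of $(x,i)$ and $(x'',i'')$. Two short injectivity observations handle every case: (i) when $x=x''$ the map $u_{x'}$ sends both arguments to the same index, so the two conditions $\pi(u_{x'}(x))=i$ and $\pi(u_{x'}(x''))=i''$ force $i=i''$; (ii) when $x\neq x''$, injectivity of $u_{x'}$ gives $j_1:=u_{x'}(x)\neq u_{x'}(x'')=:j_2$, and a uniformly random injection $\pi$ conditioned on $\pi(j_1)=i$ makes $\pi(j_2)$ uniform on $[l]\setminus\{i\}$, contributing a factor $\tfrac{1}{l-1}\mathbf{1}[i\neq i'']$. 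Assembling the four cases and using biregularity of $M$ should give
\[
\tilde{\mathcal M}^T\tilde{\mathcal M} \;=\; \tfrac{1}{d_M}\,I_X\otimes I_{[l]} \;+\; \bigl(P_M-\tfrac{1}{d_M}I_X\bigr)\,\otimes\,\tfrac{1}{l-1}\bigl(J_{[l]}-I_{[l]}\bigr),
\]
where $P_M:=\mathcal M^T\mathcal M$ is the two-step walk operator on $X$ induced by $M$ and $J_{[l]}$ is the $l\times l$ all-ones matrix. A useful consistency check is that $P_M-\tfrac{1}{d_M}I_X$ vanishes on the diagonal in the biregular case, which is exactly what forces the $x=x''$, $i\neq i''$ entry to be $0$.

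Both factors on the right are symmetric, so tensor products $u\otimes v$ of their eigenvectors form a complete eigenbasis of $\tilde{\mathcal M}^T\tilde{\mathcal M}$: if $P_M u=\mu u$ and $\tfrac{1}{l-1}(J_{[l]}-I_{[l]})v=\nu v$, the corresponding eigenvalue is $\tfrac{1}{d_M}+(\mu-\tfrac{1}{d_M})\nu$. The spectrum of $P_M$ is $\{\sigma_i(\mathcal M)^2\}$, with top value $1$ on $\mathbf{1}_X$ and second-largest $\lambda_M^2$; the spectrum of $\tfrac{1}{l-1}(J_{[l]}-I_{[l]})$ consists of $1$ (on $\mathbf{1}_{[l]}$) and $-\tfrac{1}{l-1}$ with multiplicity $l-1$.

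The last step is to read off the second-largest eigenvalue by splitting on $\nu$. For $\nu=1$ the eigenvalue equals $\mu$, so removing the top eigenvalue $1$ leaves the largest at $\lambda_M^2$. For $\nu=-\tfrac{1}{l-1}$ the eigenvalue is $\tfrac{1}{d_M}-\tfrac{\mu-1/d_M}{l-1}$, a decreasing function of $\mu\in[0,1]$; its maximum is at most $\tfrac{l}{d_M(l-1)}\leq\tfrac{1}{d_M-1}$, using $l\geq d_M$ (which is forced by the very existence of injective maps $[d_M]\to[l]$). Taking square roots then yields the stated bound $\lambda_{\tilde M}\leq\max\{\lambda_M,\,1/\sqrt{d_M-1}\}$. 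The only genuinely nontrivial step is identifying the Kronecker decomposition; once it is in place, everything else is a direct spectral computation.
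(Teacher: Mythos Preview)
Your proposal is correct and follows essentially the same approach as the paper: both arguments compute the Gram matrix $\tilde{\mathcal M}^T\tilde{\mathcal M}$, establish the identical Kronecker decomposition $C-\tfrac{1}{d_M}I=(P_M-\tfrac{1}{d_M}I)\otimes\tfrac{1}{l-1}(J-I)$ (the paper writes this as $\tilde C=\tilde B\otimes\tilde J$), and then read off the second-largest eigenvalue by the same case split on the $\tilde J$-eigenvalue. Your bound $\tfrac{l}{d_M(l-1)}$ in the $\nu=-\tfrac{1}{l-1}$ case is exactly the paper's $\tfrac{1}{d(l-1)}+\tfrac{1}{d}$, and the final inequality using $l\ge d_M$ is the same.
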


Since in our case $\lambda_M=O\left(\poly(\log d_M)/d_M\right)$, Claim~\ref{claim:SVs_modified_fortified} implies that $\lambda_{\tilde{M}}$ satisfies the same bound. Also note that a similar statement of course applies to  $\tilde{P}$ and $\lambda_{\tilde{P}}$.  So we see that Theorem \ref{thm:main_fortification}, Propositions \ref{prop:value_preserve_entangled} and \ref{prop:modified_frotified_concat}, and Claim~\ref{claim:SVs_modified_fortified} together imply Theorem \ref{thm:fort_OF_main}; it remains to prove the latter.

\begin{proof}[Proof of Claim \ref{claim:SVs_modified_fortified}]
Recall that by assumption $M$ is a  regular balanced bipartite graph. Let $d:=d_M$ the degree of vertices in $M$. 
The normalized adjacency matrix of $\tilde{M}$ is given by 
\begin{equation}
A_{\tilde{M}}((x',\pi) ,(x,i)) = \begin{cases} \frac{1}{d}\cdot\sqrt{\frac{(l-d)!}{ (l-1)!}} \qquad  & (x', \pi) \sim_{\tilde{M}} (x,i)\\
 0 \qquad  &  (x', \pi)\not\sim_{\tilde{M}}(x,i)
 \end{cases}.
\end{equation}
We relate the second largest singular value $\lambda_M$ of $A_M$ and the second largest singular value $\lambda_{\tilde{M}}$ of $\tilde{M}$ by relating the eigenvalues of $B= A^\top_{M}A_M$ and $C= A^\top_{\tilde{M}}A_{\tilde{M}}$. 
We can explicitly compute the entries of $B$ and $C$. For $B$,
\begin{equation}  B(x_1, x_2)= \frac{ |\{x'\in X'\, : \, \{x_1, x_2\}\subset N(x')\}|}{ d^2},\end{equation}
and in particular $B(x,x)=\frac{1}{d}$ for all $x\in X$. 
To compute entries of $C$, first note that when $x_1\neq x_2$ and $i\neq j$ we have
\begin{equation}
C((x_1, i), (x_2, j))= \frac{| \{x' \in X' \, : \{x_1, x_2\}\subset N(x')\}|\cdot (l-d)! }{d^2 (l-1)!}\cdot \frac{(l-2)!}{(l-d)!}= \frac{B(x_1, x_2)	}{l-1}. 
\end{equation}
Finally, observe the following special cases:
\begin{itemize} 
\item $C((x,i), (x,i))=\frac{1}{d}$. 
\item $C((x_1, i), (x_2, i))=0$ if $x_1\neq x_2$. 
\item $C((x,i), (x,j))=0$ when $i\neq j$. 
	\end{itemize}
	
	Let $\wtilde{B}= B- \frac{1}{d} \Id$ and $\wtilde{C}= C-\frac{1}{d} \Id$. Let $\wtilde{J}= \frac{1}{l-1} (J - \Id)$ be the $l \times l$ matrix that is $(l-1)^{-1}$ in the off-diagonal entries, and $0$ along the diagonal. Then it is easy to see that
	\begin{equation}
 \wtilde{C}= \wtilde{B}\otimes \wtilde{J}.
	\end{equation}
The matrix $\wtilde{J}$ has a single eigenvalue equal to $1$ and $l-1$ eigenvalues equal to $-\frac{1}{l-1}$, and $\wtilde{B}$ has a single eigenvalue equal to $1-1/d$ and the remaining are in the range $[-\frac{1}{d}, \lambda_M^2-\frac{1}{d}]$. It follows that the top eigenvalue of $C=\wtilde{C}+\frac{1}{d} \Id$ is $1$ (as expected) and the next one satisfies
	\[ \lambda_{\tilde{M}} \leq \max\left \{ \lambda_M, \sqrt{\frac{1}{d(l-1)} +\frac{1}{d}} \right\},\]
	which is bounded by  $ \max\left\{\lambda_M, \frac{1}{\sqrt{d-1}}\right\}$ since
 $l\geq d$. 
\end{proof}

\section{Weak Fortification of Entangled Games}\label{sec:quantum_generalization}
 Our goal in this section is to prove  the following.
	\begin{theorem*}[Theorem \ref{thm:main_fortification} restated]
 	Let $G'=(M\circ G\circ P)$ be a concatenated game obtained by concatenating two sides of a game $G$ with some $\lambda$-spectral expanders $M$ and $P$. If $\lambda\leq \frac{\eps^2\delta}{56}$, then  $G'$ is $(\eps,\delta)$ weakly-fortified against entangled substrategies. 
 \end{theorem*}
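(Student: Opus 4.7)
The plan is to mirror the proof of Theorem~\ref{thm:classical_fortification} while carefully handling operator non-commutativity. Starting from an entangled substrategy $(|\psi\rangle, \{A_{x'}^{a'}\}, \{B_{y'}^{b'}\})$ for $G'$, I would form the induced POVM elements $A_x^a = \mathbb{E}_{x'\sim N(x)}\sum_{a':a'(x)=a}A_{x'}^{a'}$ and $B_y^b$ analogously, together with partial sums $A_x = \sum_a A_x^a$, $B_y = \sum_b B_y^b$, and averages $\bar A := \mathbb{E}_x A_x$, $\bar B := \mathbb{E}_y B_y$. I would then define conditional POVMs $\hat A_x^a := A_x^{-1/2}A_x^a A_x^{-1/2}$ (completed on $\ker A_x$) and $\hat B_y^b$ analogously, so that $\{\hat A_x^a\},\{\hat B_y^b\}$ form a legitimate entangled strategy for the inner game $G$. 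Writing $\rho:=|\psi\rangle\langle\psi|$, $\sigma_{xy} := (A_x^{1/2}\otimes B_y^{1/2})\rho(A_x^{1/2}\otimes B_y^{1/2})$ and $\hat W_{xy} := \sum_{a,b}V(a,b,x,y)\hat A_x^a\otimes \hat B_y^b$, one has
\[
\valst(G',|\psi\rangle,\{A_{x'}^{a'}\},\{B_{y'}^{b'}\}) \;=\; \mathbb{E}_{(x,y)\sim\mu}\Tr[\sigma_{xy}\hat W_{xy}].
\]

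The main goal is then to replace the $(x,y)$-dependent subnormalized state $\sigma_{xy}$ by a single reference state $\bar\sigma := (\bar A^{1/2}\otimes\bar B^{1/2})\rho(\bar A^{1/2}\otimes\bar B^{1/2})$. Once this substitution is accomplished, the rest follows the classical template: since $(\hat A_x^a,\hat B_y^b)$ together with the normalized state $\bar\sigma/\Tr[\bar\sigma]$ is a bona fide entangled strategy for $G$, one has $\Tr[\bar\sigma\cdot \mathbb{E}_{(x,y)}\hat W_{xy}] \leq \valst(G)\Tr[\bar\sigma]$. Moreover, $\Tr[\bar\sigma] = \langle\psi|\bar A\otimes\bar B|\psi\rangle$ differs from $\gamma := \mathbb{E}_{(x,y)}\langle\psi|A_x\otimes B_y|\psi\rangle$ by at most $O(\lambda^2)$, obtained by applying Proposition~\ref{prop:expander} to the scalar function $(x',y')\mapsto\langle\psi|A_{x'}\otimes B_{y'}|\psi\rangle$ together with biregularity of $G$; this is precisely the quantum analogue of the second inequality in Claim~\ref{cor:exp-averages}.

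The key technical obstacle is controlling the fluctuation $\mathbb{E}_{(x,y)}|\Tr[(\sigma_{xy}-\bar\sigma)\hat W_{xy}]|$, which after using $0\leq\hat W_{xy}\leq I$ reduces to an operator-valued expander-mixing estimate bounding $\mathbb{E}_{(x,y)}\|\sigma_{xy}-\bar\sigma\|_1$. The difficulty is that the classical telescoping $f(x)g(y)-\bar f\bar g = (f(x)-\bar f)g(y)+\bar f(g(y)-\bar g)$ does not cleanly lift: the square roots $A_x^{1/2}$ and $B_y^{1/2}$ are not additive under averaging, and they act on the entangled state $|\psi\rangle$ in a non-factorizable way. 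The approach would be a two-step replacement, swapping $A_x^{1/2}\leftrightarrow\bar A^{1/2}$ first and $B_y^{1/2}\leftrightarrow\bar B^{1/2}$ second, bounding each incurred error by pairing with auxiliary vectors so as to reduce to the scalar expander inequality of Proposition~\ref{prop:expander}. Operator monotonicity of $t\mapsto \sqrt t$, together with the bounds $A_{x'},B_{y'}\leq I$, will be used to estimate the resulting second moments in terms of $\gamma$.

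Combining these ingredients will yield an inequality of the shape $\valst(G',|\psi\rangle,\ldots) \leq \valst(G)\,\gamma + \mathrm{err}(\lambda,\gamma)$, where $\mathrm{err}(\lambda,\gamma)$ decays fast enough in $\lambda$ to make the threshold $\lambda\leq\eps^2\delta/56$ suffice. As in the classical proof, splitting into cases $\gamma\leq\delta$ (handled trivially using $\valst(G',\ldots)\leq\gamma\leq\delta$) and $\gamma>\delta$ (where $\sqrt\gamma\leq\gamma/\sqrt\delta$ lets the error be absorbed into $\eps\gamma+\delta$) will then yield the stated weak fortification. The fact that the quantum threshold $\lambda\leq\eps^2\delta/56$ is quadratically smaller than the classical $\lambda\leq\tfrac{\eps}{2}\sqrt{\delta/2}$ traces back to the square-root manipulations above: passing from operator errors $\|A_x-\bar A\|$ to errors in $A_x^{1/2}$ typically incurs a square-root loss by operator monotonicity, which effectively squares the expansion parameter needed.
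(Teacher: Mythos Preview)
Your high-level strategy --- rescale the induced POVMs to $\hat A_x^a = A_x^{-1/2}A_x^a A_x^{-1/2}$, compare the ``tilted'' state $\sigma_{xy}$ to a single reference state $\bar\sigma$, and split into the small/large $\gamma$ regimes --- is exactly the scaffold the paper uses. The gap is in the one step you flag as the ``key technical obstacle'': controlling the difference that arises when swapping $A_x^{1/2}$ for $\bar A^{1/2}$ (and likewise for $B$). You propose to ``pair with auxiliary vectors so as to reduce to the scalar expander inequality of Proposition~\ref{prop:expander}'' and to invoke operator monotonicity of $\sqrt{t}$. But the expander property only tells you that $A_x=\Ex_{x'\sim N(x)}A_{x'}$ is close to $\bar A$; it says nothing directly about $A_x^{1/2}$, since $A_x^{1/2}\neq\Ex_{x'\sim N(x)}A_{x'}^{1/2}$. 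Operator monotonicity alone does not convert a $\rho$-weighted second-moment bound on $A_x-\bar A$ into one on $A_x^{1/2}-\bar A^{1/2}$: the obvious Powers--St{\o}rmer-type route $\|A_x^{1/2}-\bar A^{1/2}\|_F^2\le\|A_x-\bar A\|_{tr}$ loses the $\rho$-weighting and becomes dimension-dependent, and the operator-norm bound $\|A_x^{1/2}-\bar A^{1/2}\|_\infty\le\|A_x-\bar A\|_\infty^{1/2}$ is of no use since the expander estimate (Claim~\ref{claim:expansion}) gives no operator-norm control.

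What the paper does to close this gap is rather specific and not hinted at in your outline. First it passes via Choi--Jamio{\l}kowski to a single space, so that $\langle\psi|A_x\otimes B_y|\psi\rangle=\Tr(A_x\rho^{1/2}B_y^T\rho^{1/2})$ and products like $A_x^{1/2}\rho^{1/2}B_y^{1/2}$ make sense. Second, it introduces unitaries $U_x$ so that $\Lambda_x:=U_xA_x^{1/2}\rho^{1/4}$ is \emph{positive semidefinite}; crucially then $\Lambda_x^2=\rho^{1/4}A_x\rho^{1/4}$, which is $x'$-linear, so the matrix expander Claim~\ref{claim:expansion} applies to $\Lambda_x^2-\Lambda^2$. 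Third, it invokes the Kittaneh inequality (Lemma~\ref{lemma:kit}), $\Tr((\Lambda_x-\Lambda)^4)\le\Tr((\Lambda_x^2-\Lambda^2)^2)$, to pass from control on squares back to control on the square-root quantities themselves --- this is the correct ``square-root loss'' mechanism, not operator monotonicity. With these three ingredients the key Proposition~\ref{prop:technical_bound} ($\Ex_{x,y}\|K_{xy}-K\|_F^2\le 12\lambda/\delta$ for the \emph{normalized} vectors $K_{xy}=\Lambda_x\Gamma_y/\|\Lambda_x\Gamma_y\|_F$) goes through, and the rest of your plan is essentially what the paper does. So your proposal is missing precisely the matrix-analytic trick (unitary-to-PSD plus Kittaneh) that makes the square-root replacement quantitative in the $\rho$-weighted norm; without it the argument does not close.
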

We need some basic matrix analytic facts.

 \subsection{Basic lemmas}
\paragraph{Choi-Jamiolkowski isomorphism.} We make use of the correspondence between bipartite states $|\psi\rangle\in \mathcal H_1\otimes \mathcal H_2$ and linear operators $L:\mathcal H_2^*\rightarrow \mathcal H_1$ given by the Choi-Jamiolkowski isomorphism. Explicitly, let $|\psi\rangle \in \C^d\otimes \C^d$ be a quantum state and consider a Schmidt basis for $|\psi\rangle$ so we have $|\psi\rangle= \sum_{i=1}^d \sqrt{\lambda_i} |i\rangle |i\rangle$ where $\lambda_i\in \R^{\geq 0}$, up to a local change of basis. Set 
\begin{equation}
\rho := \sum_{i=1}^{d} \lambda_i |i\rangle\langle i|.
\end{equation}
\begin{proposition}\label{prop:choi}
Let $Z,W$ be two linear operators acting on $\C^d$ and let $|\psi\rangle$ and $\rho$ be as above. Then,
\[
	\langle \psi|Z\otimes W|\psi\rangle= \Tr(Z\rhohalf W^T \rhohalf).  
\]	
\end{proposition}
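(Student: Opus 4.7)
The plan is to prove the identity by direct computation in the Schmidt basis. Since the statement is basis-dependent (the transpose $W^T$ is taken with respect to the Schmidt basis of $|\psi\rangle$, which also defines $\rho$), the natural approach is to expand both sides in the basis $\{|i\rangle\}_{i=1}^d$ fixed by the Schmidt decomposition and verify that the resulting sums match term by term.

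First I would unfold the left-hand side. Writing $|\psi\rangle = \sum_i \sqrt{\lambda_i}\,|i\rangle|i\rangle$ and expanding gives
\[
\langle\psi|Z\otimes W|\psi\rangle \;=\; \sum_{i,j} \sqrt{\lambda_i\lambda_j}\,\langle i|Z|j\rangle\,\langle i|W|j\rangle.
\]
The key observation is that $\langle i|W|j\rangle = \langle j|W^T|i\rangle$, which lets me rewrite this as $\sum_{i,j}\sqrt{\lambda_i\lambda_j}\,Z_{ij}\,(W^T)_{ji}$.

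Next I would unfold the right-hand side. Using $\rho^{1/2} = \sum_i \sqrt{\lambda_i}\,|i\rangle\langle i|$ and inserting resolutions of the identity,
\[
\Tr\bigl(Z\rho^{1/2}W^T\rho^{1/2}\bigr) \;=\; \sum_{i,j} \langle i|Z|j\rangle\,\sqrt{\lambda_j}\,\langle j|W^T|i\rangle\,\sqrt{\lambda_i},
\]
which manifestly equals the rewritten left-hand side. Matching the two completes the proof.

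There is no real obstacle here: the proposition is a standard consequence of the Choi-Jamiolkowski correspondence and reduces to bookkeeping in the Schmidt basis. The only minor point to be careful about is the convention for the transpose $W^T$ (it must be taken in the Schmidt basis of $|\psi\rangle$, consistent with how $\rho$ is defined), and to note that the identity is independent of the particular choice of Schmidt basis when there are degeneracies in $\rho$, since the two sides transform compatibly under a unitary change of basis on the ``second'' factor.
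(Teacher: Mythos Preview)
Your proposal is correct and follows exactly the paper's approach: both sides are expanded directly in the Schmidt basis and shown to equal $\sum_{i,j}\sqrt{\lambda_i\lambda_j}\,Z_{ij}W_{ij}$. Your version is simply a more detailed spelling-out of the one-line proof in the paper.
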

\begin{proof}
Both expressions evaluate to $\sum_{i,j=1}^{d} \sqrt{\lambda_i \lambda_j} \, Z_{ij}\cdot W_{ij}$. 
\end{proof}
For a density matrix $\rho$ and a matrix $A$ for convenience we sometimes denote $\Tr(A\rho)$  by $\Tr_\rho(A)$. 
\paragraph{Matrix norms and inequalities.} 
The Frobenius norm of a matrix $A\in \C^{n\times m}$ is defined as $\|A\|_F = \sqrt{\Tr(AA^\dagger)}$. The trace norm is defined as $\|A\|_{tr} = \Tr\sqrt{AA^\dagger}$. 
The following analogue of Proposition~\ref{prop:expander} will be used repeatedly in our argument.

\begin{claim}\label{claim:expansion}
Let $M$ be a bipartite $\lambda$-spectral expander on vertex set $X'\cup X$. Let $\{A_{x'}\}_{x'\in X'}$ and $\rho$ be positive semidefinite matrices. For all $x \in X$, define $A_x=\Ex_{x' \sim N(x)} A_{x'}$ and define $A = \Ex_{x \sim \mu} A_x$. Then  
\begin{equation}\label{eq:quantum_smoothing}
\Ex_{x\sim \mu } \Tr_\rho((A_x-A)^2)\leq 2 \lambda^2\cdot \Ex_{x'\sim  \mu'} \Tr_\rho(A_{x'}^2).
\end{equation}
\end{claim}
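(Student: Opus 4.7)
The plan is to reduce the matrix statement to the scalar expander mixing inequality of Proposition~\ref{prop:expander} by ``absorbing'' the density matrix $\rho$ as a square root, thereby reinterpreting each trace expression as a Frobenius norm of a matrix-valued function on $X'$.

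First, I would define the square-rooted operators $B_{x'} := A_{x'}\rho^{1/2}$ and, by linearity of matrix multiplication, $B_x := A_x\rho^{1/2} = \Ex_{x'\sim N(x)} B_{x'}$ and $B := A\rho^{1/2} = \Ex_{x\sim\mu} B_x$. Since $A_{x'}$ is Hermitian (as a PSD matrix) and $A_x - A$ is Hermitian, cyclicity of trace yields
\begin{align*}
\Tr_\rho(A_{x'}^2) &= \Tr\bigl(\rho^{1/2}A_{x'}A_{x'}\rho^{1/2}\bigr) = \|B_{x'}\|_F^2, \\
\Tr_\rho((A_x - A)^2) &= \|(A_x - A)\rho^{1/2}\|_F^2 = \|B_x - B\|_F^2,
\end{align*}
so the claim reduces to proving $\Ex_{x\sim\mu}\|B_x - B\|_F^2 \leq 2\lambda^2 \,\Ex_{x'\sim\mu'}\|B_{x'}\|_F^2$.

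Second, I would expand the Frobenius norm entry-wise: $\|B_x - B\|_F^2 = \sum_{i,j}|B_x(i,j) - B(i,j)|^2$. For each fixed $(i,j)$, the complex-valued scalar function $f_{ij}(x') := B_{x'}(i,j)$ has neighborhood average $\Ex_{x'\sim N(x)} f_{ij}(x') = B_x(i,j)$ and overall mean $\bar{f}_{ij} = \Ex_{x'\sim\mu'} f_{ij}(x') = B(i,j)$. Applying Proposition~\ref{prop:expander} to each $f_{ij}$ (its proof extends verbatim to complex-valued functions by splitting into real and imaginary parts) and summing over $(i,j)$ gives
\begin{equation*}
\Ex_{x\sim\mu}\|B_x - B\|_F^2 \leq \lambda^2\,\Ex_{x'\sim\mu'}\|B_{x'} - B\|_F^2.
\end{equation*}

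To conclude, I would apply the Pythagorean identity $\Ex_{x'\sim\mu'}\|B_{x'} - B\|_F^2 = \Ex_{x'\sim\mu'}\|B_{x'}\|_F^2 - \|B\|_F^2 \leq \Ex_{x'\sim\mu'}\|B_{x'}\|_F^2$, which holds because $\Ex_{x'\sim\mu'} B_{x'} = B$ (so that $B$ is the $L^2(\mu')$-orthogonal projection of the family $\{B_{x'}\}$ onto the constants). This yields the stated inequality, in fact with the better factor $\lambda^2$ in place of $2\lambda^2$. I do not anticipate a substantive obstacle: once the Frobenius-norm rewriting is in place, the scalar expander bound transfers to the matrix setting essentially for free by entry-wise application.
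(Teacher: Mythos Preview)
Your argument is correct and follows essentially the same route as the paper: define $S_{x'}=\rho^{1/2}A_{x'}$ (you use $B_{x'}=A_{x'}\rho^{1/2}$, which is equivalent), rewrite both traces as Frobenius norms, and apply the scalar expander bound of Proposition~\ref{prop:expander} entry-wise. Your use of the Pythagorean identity $\Ex_{x'}\|B_{x'}-B\|_F^2 = \Ex_{x'}\|B_{x'}\|_F^2 - \|B\|_F^2$ in the last step is slightly cleaner than the paper's crude bound and indeed yields the sharper constant $\lambda^2$ rather than $2\lambda^2$.
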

\begin{proof}
Define $S_{x'}= \rhohalf A_{x'}$, $S_x=\rhohalf A_{x}$,  and $S=\Ex_{x}S_x= \rhohalf A$.
Using that $M$ is a bipartite $\lambda$-spectral expander, for any fixed entry $(i,j)$
\begin{equation}
\Ex_{x} |(S_x)_{ij}- S_{ij}|^2\leq \lambda^2\cdot \Ex_{x'} |(S_{x'})_{ij}- S_{ij}|^2\leq 2 \lambda^2\cdot \Ex_{x'} |(S_{x'})_{ij}|^2
\end{equation}
Summing over all entries,
\begin{equation}
	\Ex_{x} \sum_{i,j}  |(S_x)_{ij}- S_{ij}|^2 = \Ex_{x} \| S_x -S \|_F^2 \leq 2 \lambda^2 \Ex_{x'} \sum_{i,j} |(S_{x'})_{ij}|^2 = 2 \lambda^2 \Ex_{x'} \| S_{x'} \|_F^2.
\end{equation}
Observing that $\Tr_\rho((A_x-A)^2) = \|S_x-S\|_{F}^2$ and $\| S_{x'} \|_F^2 = \Tr_\rho(A_{x'}^2)$, we obtain the desired result.
\end{proof}
If $A$ has singular value decomposition $A=UJ V^\dagger$ its pseudo-inverse is $A^{-1}=VJ^{-1}U^\dagger$, where $J^{-1}$ is obtained from $J$ by taking the reciprocal of non-zero diagonal entries. A simple consequence of the singular value decomposition is  the following:
 
\begin{fact}\label{fact:PSDness}
 Let $A$ be an $n\times n$ matrix. Then there exists a unitary matrix $\mathcal U$ such that $\mathcal U A$ is positive semi-definite.
 \end{fact}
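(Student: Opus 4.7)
The natural approach is to read this off directly from the singular value decomposition, which is introduced in the paragraph immediately preceding the statement. Write $A = U J V^\dagger$ with $U, V$ unitary and $J$ diagonal with non-negative entries. The plan is then to set
\[ \mathcal U := V U^\dagger, \]
so that $\mathcal U$ is unitary (being a product of two unitaries), and to compute
\[ \mathcal U A = V U^\dagger \, U J V^\dagger = V J V^\dagger. \]

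This last expression is manifestly the spectral decomposition of a positive semi-definite matrix: $V$ is unitary and $J$ is diagonal with non-negative entries, so $V J V^\dagger \geq 0$. This completes the argument. Equivalently, one may invoke the polar decomposition $A = W P$ with $W$ unitary and $P \geq 0$, and take $\mathcal U = W^\dagger$; but since the paper has already introduced the SVD in the adjacent paragraph, the direct SVD-based route is the cleanest presentation. There is no real obstacle here: the statement is essentially a reformulation of existence of the polar decomposition, and the only thing one must verify is that no square-ness or invertibility assumption is being implicitly used — which is fine, since the SVD exists for arbitrary (possibly singular) square matrices and the diagonal entries of $J$ are simply non-negative, yielding a PSD rather than positive definite output.
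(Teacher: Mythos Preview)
Your proof is correct and matches the paper's approach exactly: the paper also writes the SVD $A = UJV^\dagger$ and chooses $\mathcal U = VU^\dagger$.
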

 \begin{proof}
 Write the SVD as $A=U J V^\dagger$, and choose $\mathcal U= VU^\dagger$.
 \end{proof}
We make frequent use of the matrix Cauchy-Schwarz inequality. 
\begin{proposition}\label{fact:cor_cs}
For any two matrices $S,T$ we have 
\[ \Tr(ST^\dagger)\leq \Tr(SS^\dagger)^{1/2}\cdot \Tr(TT^\dagger)^{1/2}= \|S\|_F\|T\|_F.	\]
If $S$ and $T$ are Hermitian, 
\[ \Tr(STST)\leq \Tr(S^2T^2).\]	
\end{proposition}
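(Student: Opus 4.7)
The plan is to prove each inequality separately using elementary linear-algebra techniques, without invoking any machinery beyond the definition of trace and Hermiticity.

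For the first inequality, I would recognize $\langle A, B\rangle_{HS} := \Tr(AB^\dagger)$ as the Hilbert--Schmidt inner product on the space of $n\times m$ matrices. This really is an inner product: it is linear in the first slot, conjugate-linear in the second, and $\langle A, A\rangle_{HS}=\Tr(AA^\dagger)=\sum_{i,j}|A_{ij}|^2\geq 0$ with equality iff $A=0$. The stated inequality $\Tr(ST^\dagger)\leq \|S\|_F\|T\|_F$ is then the Cauchy--Schwarz inequality for this inner product, whose one-line proof uses the nonnegativity of $\langle S-cT, S-cT\rangle_{HS}$ for an appropriate scalar $c$. The equality $\Tr(SS^\dagger)^{1/2}\Tr(TT^\dagger)^{1/2}=\|S\|_F\|T\|_F$ is just the definition of the Frobenius norm.

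For the second inequality, when $S$ and $T$ are Hermitian, the key observation is that the commutator $[S,T]=ST-TS$ is anti-Hermitian, i.e.\ $[S,T]^\dagger=T^\dagger S^\dagger - S^\dagger T^\dagger = TS-ST=-[S,T]$. Therefore the matrix $-[S,T]^2 = [S,T]\,[S,T]^\dagger$ is positive semidefinite, so its trace is nonnegative:
\begin{equation*}
0 \leq \Tr\bigl([S,T]\,[S,T]^\dagger\bigr) = -\Tr\bigl((ST-TS)^2\bigr).
\end{equation*}
Expanding the square inside the trace and using cyclicity of trace to identify $\Tr(STST)=\Tr(TSTS)$ and $\Tr(ST^2S)=\Tr(TS^2T)=\Tr(S^2T^2)$, the right-hand side becomes $-2\Tr(STST)+2\Tr(S^2T^2)$. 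Rearranging yields $\Tr(STST)\leq \Tr(S^2T^2)$, as desired.

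Neither step presents a real obstacle; both are classical and the entire proof fits in a few lines. The only small point to be careful about is the Hermiticity hypothesis in the second inequality, which is exactly what ensures anti-Hermiticity of the commutator and hence nonnegativity of $-[S,T]^2$; without it one cannot conclude that $\Tr((ST-TS)^2)\leq 0$.
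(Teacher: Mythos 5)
Your proof is correct. The paper states this proposition without proof (it is invoked as a standard fact), and your two arguments — Cauchy--Schwarz for the Hilbert--Schmidt inner product, and nonnegativity of $\Tr\bigl([S,T][S,T]^\dagger\bigr)=-\Tr\bigl((ST-TS)^2\bigr)$ for Hermitian $S,T$ followed by expansion and cyclicity — are exactly the standard ones, so there is nothing to reconcile. The only cosmetic remark is that for general complex matrices $\Tr(ST^\dagger)$ need not be real; your Cauchy--Schwarz argument actually yields the stronger bound $|\Tr(ST^\dagger)|\leq \|S\|_F\|T\|_F$, which is how the paper implicitly uses it.
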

Finally, we need a variant of Powers-St{\o}rmer inequality due to Kittaneh~\cite{kit86}. This also played a role in the analysis of~\cite{DinurSV14}.
\begin{lemma}[\cite{kit86}]\label{lemma:kit}
	Let  $X,Y$ be positive semidefinite matrices. Then
	\[ \Tr\left ( (X-Y)^4\right) \leq \Tr\left( (X^2- Y^2)^2\right). \]
\end{lemma}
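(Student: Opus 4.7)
The plan is to expand both sides as traces of monomials in $X$ and $Y$, reducing the statement to a clean trace inequality that will follow from two applications of the matrix Cauchy--Schwarz inequality (Proposition~\ref{fact:cor_cs}).

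First I will set $R = X^2+Y^2$ and $S = XY+YX$, so that $(X-Y)^2 = R-S$. Expanding $\Tr((X-Y)^4) = \Tr(R^2)-2\Tr(RS)+\Tr(S^2)$ and comparing with $\Tr((X^2-Y^2)^2) = \Tr(X^4)+\Tr(Y^4)-2\Tr(X^2Y^2)$, the target inequality is equivalent to
$$4\Tr(X^3Y) + 4\Tr(XY^3) \,\geq\, 6\Tr(X^2Y^2) + 2\Tr(XYXY).$$

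The heart of the argument is the estimate $\Tr(X^3Y) + \Tr(XY^3) \geq 2\Tr(X^2Y^2)$. For this I will use that $X,Y\geq 0$ admit square roots, and note $\Tr(X^3Y) = \|Y^{1/2}X^{3/2}\|_F^2$, $\Tr(XY^3) = \|Y^{3/2}X^{1/2}\|_F^2$, while
$$\Tr(X^2Y^2) \,=\, \Tr\bigl(Y^{1/2}X^2Y^{3/2}\bigr) \,=\, \Tr\bigl((Y^{1/2}X^{3/2})(X^{1/2}Y^{3/2})\bigr).$$
Applying Proposition~\ref{fact:cor_cs} with the choice $S = Y^{1/2}X^{3/2}$ and $T^\dagger = X^{1/2}Y^{3/2}$, so that $\|S\|_F^2 = \Tr(X^3Y)$ and $\|T\|_F^2 = \Tr(XY^3)$, yields $\Tr(X^2Y^2)^2 \leq \Tr(X^3Y)\cdot\Tr(XY^3)$; AM--GM on the two non-negative quantities on the right then produces the desired lower bound. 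Separately, the Hermitian form of Proposition~\ref{fact:cor_cs} applied with $S = X$, $T = Y$ gives $\Tr(XYXY) \leq \Tr(X^2Y^2)$. Combining both,
$$4\Tr(X^3Y) + 4\Tr(XY^3) \,\geq\, 8\Tr(X^2Y^2) \,\geq\, 6\Tr(X^2Y^2) + 2\Tr(XYXY),$$
which is exactly the reduced inequality.

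The principal obstacle is identifying the correct Cauchy--Schwarz decomposition: one needs $\Tr(X^2Y^2)$ to appear as the Hilbert--Schmidt inner product of two matrices whose squared Frobenius norms are exactly the target quartic traces $\Tr(X^3Y)$ and $\Tr(XY^3)$. The appearance of the fractional powers $X^{1/2}, Y^{1/2}$ is the one point where the PSD assumption $X,Y\geq 0$ is essential; beyond that, the argument is routine trace algebra relying solely on inequalities already available in the paper.
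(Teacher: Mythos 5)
Your proof is correct. The paper does not actually prove Lemma~\ref{lemma:kit}; it simply cites Kittaneh's paper, so there is no in-text argument to compare against. Your expansion of both sides is right: writing everything in terms of the cyclic traces $\Tr(X^4),\Tr(Y^4),\Tr(X^3Y),\Tr(XY^3),\Tr(X^2Y^2),\Tr(XYXY)$ does reduce the claim to $4\Tr(X^3Y)+4\Tr(XY^3)\geq 6\Tr(X^2Y^2)+2\Tr(XYXY)$, and both of your ingredients check out: the factorization $\Tr(X^2Y^2)=\Tr\bigl((Y^{1/2}X^{3/2})(X^{1/2}Y^{3/2})\bigr)$ together with Cauchy--Schwarz and AM--GM gives $\Tr(X^3Y)+\Tr(XY^3)\geq 2\Tr(X^2Y^2)$ (all three quantities are nonnegative, being squared Frobenius norms, so AM--GM is legitimate), and the Hermitian Cauchy--Schwarz gives $\Tr(XYXY)\leq\Tr(X^2Y^2)$. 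This is a self-contained, elementary alternative to the cited source: Kittaneh's result is the stronger statement that $\bigl\| |A-B|^2 \bigr\| \leq \|A^2-B^2\|$ for every unitarily invariant norm, proved by operator-theoretic methods, whereas your direct trace computation exploits that the Frobenius case is a polynomial identity in cyclic words and only needs the two inner-product inequalities already recorded in Proposition~\ref{fact:cor_cs}. That trade-off --- losing generality in the norm but gaining a two-line verification from tools internal to the paper --- is a perfectly good one here, since only the Frobenius case is used in the proof of Proposition~\ref{prop:technical_bound}.
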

\subsection{Proof of Theorem \ref{thm:main_fortification}}
At a high level, the proof of Theorem~\ref{thm:main_fortification} follows the same outline as the classical proof of Section~\ref{sec:classical}. 

Consider a substrategy $\{A_{x'}^{a'}\}_{(x',a')\in X'\times A'}$, $\{B_{y'}^{b'}\}_{(y',b')\in Y'\times B'}$ for $G'$. Define $A_x = \Ex_{x \sim N(x')} A_{x'}$ and $B_y = \Ex_{y  \sim N(y')} B_{y'}$.\footnote{In what follows, we assume without loss of generality that all $A_{x'}$ and $B_{y'}$ are invertible. Note that proving Theorem \ref{thm:main_fortification} for this subset of substrategies suffices. This follows by a limiting argument because of the continuity  of \eqref{eq:weakly_fortified_quantum} in $A_{x'}$ and $B_{y'}$.} Define $A = \Ex_{x \sim  \mu_X} A_x$ and $B = \Ex_{y \sim \mu_Y} B_y$.
To prove Theorem \ref{thm:main_fortification} we must analyze the following expression:
\begin{align*} 
\valst(G',\{A_{x'}^{a'}\},\{B_{y'}^{b'}\}) &= \Ex_{(x,y)\sim \mu} \Ex_{x' \sim N(x), y'\sim  N(y)} \sum_{a',b'}V(a'_{x'}(x),b'_{y'}(y),x,y)\cdot  \Tr(A_{x'}^{a'}\rhohalf B_{y'}^{b'}\rhohalf) \\
&= \Ex_{(x,y)\sim \mu} \sum_{a,b} V(a,b,x,y) \cdot \Tr(A_x^a \rhohalf B_y^b \rhohalf) \\
&= \Ex_{(x,y)\sim \mu } \Tr(A_x\rhohalf B_y\rhohalf) \cdot \sum_{V(a,b,x,y)=1}\, \frac{\Tr(A_x^{a}\rhohalf B_y^b\rhohalf)}{\Tr(A_x\rhohalf B_y\rhohalf )},
\end{align*}
where $A_x^a$ and $B_y^b$ are defined as in~\eqref{eq:projected_operators}, and we use the convention that $0/0 = 0$.
Our analysis splits into two cases. First let us consider the \emph{small case}. This is handled by the following proposition.

\begin{proposition}\label{prop:small_case}
Suppose $\Tr(\rhohalf A\rhohalf B)<\delta/2$. Then $
	\valst(G',\{A_{x'}^{a'}\},\{B_{y'}^{b'}\})< \delta.$
\end{proposition}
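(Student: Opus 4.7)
The plan is to bound $\valst$ directly by showing that, up to a small expander-controlled fluctuation, the quantity $\Ex_{(x,y)\sim\mu}\Tr(A_x\rhohalf B_y\rhohalf)$ is essentially equal to $\Tr(\rhohalf A\rhohalf B)$ (the two expressions are equal by cyclicity of the trace), which is small by hypothesis. First, dropping the predicate $V$ in the expression derived just above the proposition gives the crude upper bound $\valst(G',\{A_{x'}^{a'}\},\{B_{y'}^{b'}\})\leq \Ex_{(x,y)\sim\mu}\Tr(A_x\rhohalf B_y\rhohalf)$. Writing $A_x = A + (A_x-A)$ and $B_y = B + (B_y-B)$ and expanding, I would use biregularity of $G$ (so the marginals of $\mu$ on $X$ and $Y$ are uniform, and hence $\Ex_{x\sim\mu_X}A_x = A$ and $\Ex_{y\sim\mu_Y}B_y = B$) to kill the two cross terms under $\Ex_{(x,y)\sim\mu}$, leaving
\[\Ex_{(x,y)\sim\mu}\Tr(A_x\rhohalf B_y\rhohalf) \;=\; \Tr(A\rhohalf B\rhohalf) \;+\; \Ex_{(x,y)\sim\mu}\Tr\bigl((A_x-A)\rhohalf (B_y-B)\rhohalf\bigr).\]

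Next, I would control the fluctuation via a Hilbert--Schmidt Cauchy--Schwarz: for any Hermitian $U,V$,
\[|\Tr(U\rhohalf V\rhohalf)| \;=\; |\langle \rhohalf U, V\rhohalf\rangle_{\mathrm{HS}}| \;\leq\; \|\rhohalf U\|_F\cdot \|V\rhohalf\|_F \;=\; \sqrt{\Tr_\rho(U^2)\,\Tr_\rho(V^2)}.\]
Applying this pointwise to $U=A_x-A$ and $V=B_y-B$, then Cauchy--Schwarz on the expectation $\Ex_{(x,y)\sim\mu}$, and using biregularity once more to decouple the $x$ and $y$ marginals, yields
\[\Ex_{(x,y)\sim\mu}\bigl|\Tr\bigl((A_x-A)\rhohalf(B_y-B)\rhohalf\bigr)\bigr| \;\leq\; \sqrt{\Ex_{x\sim\mu_X}\Tr_\rho\bigl((A_x-A)^2\bigr)}\cdot\sqrt{\Ex_{y\sim\mu_Y}\Tr_\rho\bigl((B_y-B)^2\bigr)}.\]
The two expectations on the right are now in precisely the form handled by Claim~\ref{claim:expansion}, giving an upper bound of $2\lambda^2\,\Ex_{x'}\Tr_\rho(A_{x'}^2)$ on the first (and analogously for the second). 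Finally, since $0\leq A_{x'}\leq \Id$ implies $A_{x'}^2\leq A_{x'}$ in the PSD order, we have $\Ex_{x'}\Tr_\rho(A_{x'}^2)\leq\Ex_{x'}\Tr_\rho(A_{x'})=\Tr_\rho(A)\leq 1$, and symmetrically for $B_{y'}$; so the fluctuation term is at most $2\lambda^2$ in absolute value.

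Combining the two estimates, $\valst(G',\{A_{x'}^{a'}\},\{B_{y'}^{b'}\}) \leq \Tr(\rhohalf A\rhohalf B) + 2\lambda^2 < \delta/2 + 2\lambda^2 < \delta$, where the last inequality uses $\lambda\leq \eps^2\delta/56\leq \delta/56$, which gives $2\lambda^2\leq \delta/2$ with room to spare. The only slightly delicate point of the argument is picking the Hilbert--Schmidt Cauchy--Schwarz in exactly the form that outputs $\Tr_\rho(\cdot^2)$ rather than a more awkward ``sandwiched'' variant such as $\Tr(\cdot\,\rhohalf\cdot\,\rhohalf)$, since the former matches the conclusion of Claim~\ref{claim:expansion} without further manipulation; everything else is routine linear-algebraic bookkeeping and an application of biregularity.
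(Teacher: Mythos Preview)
Your proof is correct and follows essentially the same approach as the paper: drop the predicate to bound $\valst$ by $\Ex_{(x,y)\sim\mu}\Tr(A_x\rhohalf B_y\rhohalf)$, expand around the means to isolate the single ``fluctuation'' term $\Ex_{(x,y)}\Tr((A_x-A)\rhohalf(B_y-B)\rhohalf)$, and control it via Cauchy--Schwarz together with Claim~\ref{claim:expansion}. The paper's write-up is terser (it states the final bound as $\delta/2+\lambda^2$ rather than your $\delta/2+2\lambda^2$, a harmless discrepancy in constants), but the structure and the key ingredients are identical.
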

\begin{proof}
First of all we have
\[
	 \valst(G',\{A_{x'}^{a'}\},\{B_{y'}^{b'}\}) = \Ex_{x, y} \sum_{V(a,b,x,y)=1} \Tr(A_x^a \rhohalf B_y^b\rhohalf) 
	 \leq  \Ex_{x,y} \Tr(A_x\rhohalf B_y\rhohalf).\]
Subtracting  $\Tr(A\rhohalf B\rhohalf)$,
\[ \Ex_{x,y} \Tr(A_x\rhohalf B_y\rhohalf)
-\Tr(A\rhohalf B\rhohalf)= \Ex_{x,y} \Tr((A_x-A) \rhohalf (B_y-B)\rhohalf).\]
By applying Cauchy-Schwarz to the latter expression and using Claim \ref{claim:expansion} it follows that
\[ \valst(G',\{A_{x'}^{a'}\},\{B_{y'}^{b'}\})\leq \delta/2+\lambda^2;\]
this is smaller than $\delta$ by the choice of $\lambda$.
\end{proof}

\paragraph{The large case.} In this case, the hypothesis of Proposition \ref{prop:small_case} is not satisfied and without loss of generality we assume that  
\begin{equation}\label{eq:large_case}
	\min\big\{\Tr_\rho(A),\, \Tr_\rho(B)\big\} \geq \Tr(A\rhohalf B\rhohalf)\geq \delta/2.
\end{equation}
Let
\begin{equation}
	\gamma := \Ex_{(x,y) \sim \mu} \Tr(A_x\rhohalf B_y\rhohalf).
\end{equation}
By the triangle inequality,
\begin{equation}\label{eq:two_bounds}
\valst(G',A_{x'},B_{y'})\leq \Ex_{(x,y) \sim \mu} |\Tr(A_x \rhohalf B_y \rhohalf ) -\gamma| + \gamma \cdot \Ex_{(x,y) \sim \mu} \sum_{V(a,b,x,y)=1} \frac{\Tr(A_x^a \rhohalf  B_y^b \rhohalf)}{\Tr(A_x\rhohalf B_y \rhohalf )}.
\end{equation}  
To bound the first term, we use the triangle inequality to get
\begin{equation}
|\Tr(A_x \rhohalf B_y \rhohalf ) -\gamma| \leq |\Tr(A_x\rhohalf B_y\rhohalf -A\rhohalf B\rhohalf )| + |\Tr(A\rhohalf B\rhohalf)-\gamma|.
\label{eq:bound-2a}
\end{equation}
The first term on the right-hand side of~\eqref{eq:bound-2a} can be bounded as
\begin{align}
\Ex_{(x,y) \sim \mu} |\Tr(A_x\rhohalf &B_y\rhohalf -A\rhohalf B\rhohalf )|\notag\\
&\leq \Ex_{(x,y) \sim \mu} |\Tr(A_x\rhohalf (B_y-B)\rhohalf) |+ \Ex_{x} |\Tr((A_x -A)\rhohalf B\rhohalf)| \notag \\
&\leq \Ex_{(x,y) \sim \mu}\left[\Tr_\rho(A_x^2)^{1/2}\cdot \Tr_\rho((B_y-B)^2)^{1/2}\right]+\Ex_{x}\left[\Tr_\rho(B^2)^{1/2}\cdot \Tr_\rho((A_x-A)^2)^{1/2}\right]\notag \\
&\leq \left(\Ex_{x}\Tr_\rho(A_x^2)\right)^{1/2}\cdot \left(\Ex_y \Tr_\rho((B_y-B)^2) \right)^{1/2}+ \Tr_\rho(B^2)^{1/2}\cdot \left(\Ex_x \Tr_\rho((A_x-A)^2) \right)^{1/2}  \notag\\
&\leq 4\cdot \lambda, 
\end{align}
where the first inequality is the triangle inequality, the next two follow from Cauchy-Schwarz, and the last from Claim~\ref{claim:expansion} and the trivial bounds $\Tr_\rho(A_x^2), \Tr_\rho(B^2)\leq \Tr(\rho)=1$. To bound the second term on the right-hand side of~\eqref{eq:bound-2a} we note that
\begin{align*}
	|\Tr(A\rhohalf B\rhohalf)-\gamma|&=|\Ex_{(x,y) \sim \mu} \Tr((A_x-A)\rhohalf(B_y-B)\rhohalf)|\\
	&\leq (\Ex_{x}\Tr_\rho[(A_x-A)^2])^{1/2}\cdot(\Ex_{y}\Tr_\rho[(B_y-B)^2])^{1/2}, 
	\end{align*}
and the latter is again bounded by $2 \lambda$ by Claim \ref{claim:expansion}. In total we have 
	\begin{equation}
		\Ex_{(x,y) \sim \mu} |\Tr(A_x \rhohalf B_y \rhohalf ) -\gamma|\leq 4\lambda+2 \lambda^2\leq \delta,
	\end{equation}
which provides an upper bound on the first term in the right-hand side of~(\ref{eq:two_bounds}). 

\medskip

To bound the second term term in the right-hand side of~(\ref{eq:two_bounds}) we use a strategy inspired in part by the parallel repetition theorem of \cite{DinurSV14}. Let $U_x, V_y, U, V$ be a family of unitaries such that the operators
\begin{equation}
\Lambda_x=U_x \sqrt{A_x} \rho^{1/4}, \quad  \Lambda=U\sqrt{A}\rho^{1/4}, \quad \Gamma_y=V_y\sqrt{B_y}\rho^{1/4}, \quad  \Gamma=V\sqrt{B}\rho^{1/4}
\end{equation}
are all positive semidefinite, which is possible by Fact~\ref{fact:PSDness}. Note that this in particular implies that $\Lambda_x=\Lambda_x^\dagger$ and hence
\begin{equation}\label{eq:square_of_x}
\Lambda_x^2= \Lambda_x^\dagger \Lambda_x= \rho^{1/4} \sqrt{A_x}U_x^\dagger U_x \sqrt{A_x}\rho^{1/4}= \rho^{1/4}A_x\rho^{1/4},
\end{equation}
and similarly $\Lambda^2=\rho^{1/4}A\rho^{1/4}$, $\Gamma^2=\rho^{1/4}B\rho^{1/4}$ and so on.

Define ``rescaled'' strategies by
\begin{equation}\label{eq:ops}
	\widehat{A_{x}^a}= U_xA_{x}^{-1/2} A_{x}^a A_{x}^{-1/2}U_x^\dagger, \quad \widehat{B_{y}^b}= V_yB_{y}^{-1/2} B_{y}^b B_{y}^{-1/2}V_y^\dagger,
\end{equation}
where $A_{x}^{-1}, B_y^{-1}$'s are the pseudo-inverses of $A_{x},B_y$.
Note that the operators~\eqref{eq:ops} satisfy $\widehat{A_x}=\sum_{a} \widehat{A_x^a}$, $\widehat{B_y}=\sum_b \widehat{B_y^b}\leq \Id$ as required. Let
\begin{equation}
K_{xy}= \frac{U_{x}A_{x}^{1/2}\rho^{1/2}B_y^{1/2}V_y^\dagger}{\sqrt{\Tr(A_x\rhohalf B_y\rhohalf)}}, \qquad K= \frac{UA^{1/2}\rho^{1/2}B^{1/2}V^\dagger}{\sqrt{\Tr(A\rhohalf B\rhohalf)}}.
\end{equation}
By definition of $\Lambda_x, \Gamma_y, X, Y$ we see that the above is equivalent to
\begin{equation}
K_{xy}= \frac{\Lambda_x\Gamma_y}{\sqrt{\Tr(\Lambda_x^2\Gamma_y^2)}}, \qquad K= \frac{\Lambda \Gamma}{\sqrt{\Tr(\Lambda^2\Gamma^2)}}.
\end{equation}
Now note the following identity 
\begin{equation}
\frac{\Tr(A_x^a\rhohalf B_y^b\rhohalf)}{\Tr(A_x\rhohalf B_y\rhohalf )}=  \Tr(\widehat{A_x^a}K_{xy} \widehat{B_y^b}K_{xy}^\dagger).
\end{equation}
So to finish the argument it suffices to estimate
\begin{equation}
\Ex_{(x,y) \sim \mu} \sum_{V(a,b,x,y)=1} \Tr(\widehat{A_x^a}K_{xy} \widehat{B_y^b}K_{xy}^\dagger).
\end{equation}
To this end note that since $\Tr(KK^\dagger)=1$ it follows from the definition of $\valst(G)$ that 
\begin{equation}\label{eq:main_term_goal}
\Ex_{(x,y) \sim \mu} \sum_{V(a,b,x,y)=1} \Tr(K \widehat{A_x^a} K^\dagger \widehat{B_y^b})\leq \valst(G).
\end{equation}
To conclude we use the following proposition.

\begin{proposition}\label{prop:technical_bound}
Let $K_{xy}$ and $K$ be as above. Then 
\begin{equation}
\Ex_{(x,y) \sim \mu} \|K_{xy}-K\|_{F}^2\leq \frac{12\lambda}{\delta}.	
\end{equation}
\end{proposition}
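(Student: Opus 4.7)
The plan is to reduce the normalized quantity $\|K_{xy}-K\|_F^2$ to an unnormalized version, then apply the triangle inequality, Hölder for Schatten norms, and Kittaneh's inequality (Lemma~\ref{lemma:kit}) together with the expansion property of $M$ and $P$ (cf.\ Claim~\ref{claim:expansion}).

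First I would observe that $\Tr(K_{xy}K_{xy}^\dagger)=\Tr(KK^\dagger)=1$, so both $K_{xy}$ and $K$ are unit vectors in the Frobenius inner product. Setting $\tilde K_{xy}:=\Lambda_x\Gamma_y$ and $\tilde K:=\Lambda\Gamma$, one has $\|\tilde K\|_F^2=\Tr(\Lambda^2\Gamma^2)=\Tr(A\rho^{1/2}B\rho^{1/2})\geq\delta/2$ by the large-case assumption~\eqref{eq:large_case}. A short identity for normalized vectors (together with the reverse triangle inequality $|\,\|\tilde K_{xy}\|_F-\|\tilde K\|_F|\leq\|\tilde K_{xy}-\tilde K\|_F$) then gives
\[
\|K_{xy}-K\|_F^2\;\leq\;\frac{4\,\|\tilde K_{xy}-\tilde K\|_F^2}{\|\tilde K\|_F^2}\;\leq\;\frac{8\,\|\Lambda_x\Gamma_y-\Lambda\Gamma\|_F^2}{\delta},
\]
so the goal reduces to proving $\Ex_{(x,y)\sim\mu}\|\Lambda_x\Gamma_y-\Lambda\Gamma\|_F^2=O(\lambda)$.

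Next I would telescope: $\Lambda_x\Gamma_y-\Lambda\Gamma=(\Lambda_x-\Lambda)\Gamma_y+\Lambda(\Gamma_y-\Gamma)$ and $(a+b)^2\leq 2a^2+2b^2$. The two resulting terms are handled symmetrically. For the first, Hölder's inequality for Schatten norms gives $\|(\Lambda_x-\Lambda)\Gamma_y\|_F^2\leq \|\Lambda_x-\Lambda\|_4^2\,\|\Gamma_y\|_4^2$, and Cauchy--Schwarz over $(x,y)\sim\mu$ (using biregularity of $G$ so that the marginals on $X$ and $Y$ are uniform) reduces matters to bounding $\Ex_x\|\Lambda_x-\Lambda\|_4^4$ and $\Ex_y\|\Gamma_y\|_4^4$.

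The key step, and also the main obstacle, is to control $\Ex_x\|\Lambda_x-\Lambda\|_4^4$ via the spectral expansion of $M$: one cannot directly take a square root inside the Frobenius norm since the PSD square root is not Lipschitz without a dimension loss. Here I would appeal to Kittaneh's inequality: $\|\Lambda_x-\Lambda\|_4^4\leq \|\Lambda_x^2-\Lambda^2\|_F^2$ for the PSD operators $\Lambda_x,\Lambda$. Since $\Lambda_x^2=\rho^{1/4}A_x\rho^{1/4}=\Ex_{x'\sim N(x)}\Lambda_{x'}^2$, the entry-wise spectral argument underlying the proof of Claim~\ref{claim:expansion}, now applied to $\{\Lambda_{x'}^2\}_{x'}$ in place of $\{A_{x'}\}$, yields $\Ex_x\|\Lambda_x^2-\Lambda^2\|_F^2\leq 2\lambda^2\,\Ex_{x'}\|\Lambda_{x'}^2\|_F^2$. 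Finally, $\|\Lambda_{x'}^2\|_F^2=\Tr(A_{x'}\rho^{1/2}A_{x'}\rho^{1/2})\leq\Tr(A_{x'}^2\rho)\leq 1$ by the Hermitian Cauchy--Schwarz in Proposition~\ref{fact:cor_cs} together with $A_{x'}\leq\Id$ and $\Tr\rho=1$. The same calculation bounds $\Ex_y\|\Gamma_y\|_4^4\leq 1$, $\|\Lambda\|_4^4\leq 1$, and by symmetry also handles the term $\|\Lambda(\Gamma_y-\Gamma)\|_F^2$. Combining everything gives $\Ex\|\Lambda_x\Gamma_y-\Lambda\Gamma\|_F^2=O(\lambda)$ and hence $\Ex\|K_{xy}-K\|_F^2=O(\lambda/\delta)$; more careful tracking of constants (in particular using the sharper bound $\|\hat u-\hat v\|^2\leq \|u-v\|^2/(\|u\|\|v\|)$ when $\|\tilde K_{xy}\|_F$ is comparable to $\|\tilde K\|_F$) recovers the explicit factor $12$.
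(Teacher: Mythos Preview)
Your proof is correct and follows essentially the same approach as the paper: the key ingredients---telescoping $\Lambda_x\Gamma_y-\Lambda\Gamma$, H\"older/Cauchy--Schwarz, Kittaneh's inequality (Lemma~\ref{lemma:kit}), and the spectral expansion from Claim~\ref{claim:expansion}---are identical. The only difference is presentational: the paper splits $K_{xy}-K$ into a ``normalizer-difference'' term and an ``unnormalized-difference'' term bounded separately, whereas your normalized-vector inequality $\|\hat u-\hat v\|^2\leq 4\|u-v\|^2/\|v\|^2$ collapses both into one step (at the cost of a slightly worse constant, though the paper's own accounting of the factor $12$ is itself loose).
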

Before proving the proposition let us see how it implies the desired bound on the second term of \eqref{eq:two_bounds}.
\begin{align}
|\Tr(K_{xy} \widehat{A_x^a} K_{xy}^\dagger& \widehat{B_y^b})-	\Tr(K \widehat{A_x^a} K^\dagger \widehat{B_y^b})| \notag\\
&\leq |\Tr( (K_{xy}-K) \widehat{A_x^a} K_{xy}^\dagger \widehat{B_y^b})|+ |\Tr(K\widehat{A_x^a} (K_{xy}^\dagger-K^\dagger)\widehat{B_y^b})	| \notag \\
&\leq \Tr( (K_{xy}-K)\widehat{A_x^a} (K_{xy}-K)^\dagger \widehat{B_y^b})^{1/2}\cdot \Tr( K_{xy}\widehat{A_x^a} K_{xy}^\dagger \widehat{B_y^b})^{1/2} \notag
\\  &\qquad+\Tr( (K_{xy}-K)\widehat{A_x^a} (K_{xy}-K)^\dagger \widehat{B_y^b})^{1/2} \cdot \Tr( K\widehat{A_x^a} K^\dagger \widehat{B_y^b})^{1/2}. \label{eq:prop_long_3}
\end{align}
Averaging with $\Ex_{(x,y) \sim \mu}\sum_{V(a,b,x,y)=1}$ and applying  Cauchy-Schwarz we see that~\eqref{eq:prop_long_3} is bounded by
\begin{align}
\notag	 \Big[\Ex_{(x,y) \sim \mu}\sum_{V(a,b,x,y)=1} \Tr( (K_{xy}-K)\widehat{A_x^a} (K_{xy}-K)^\dagger \widehat{B_y^b})\Big]^{1/2}\cdot &\Big[\big(\Ex_{(x,y) \sim \mu}\sum_{V(a,b,x,y)=1}\Tr( K_{xy}\widehat{A_x^a} K_{xy}^\dagger \widehat{B_y^b}) \big)^{1/2} \notag \\
	 &+\big(\Ex_{(x,y) \sim \mu}\sum_{V(a,b,x,y)=1}\Tr( K\widehat{A_x^a} K^\dagger \widehat{B_y^b})\big)^{1/2}\Big].\label{eq:prop_long_3b}
\end{align}
We claim that the second term in brackets is at most $2$. To see this note that replacing the sum from $\sum_{V(a,b,x,y)=1}$ to a $\sum_{a,b}$ only increase the term, and the claim follows from $\Tr(K_{xy}K_{xy}^\dagger)=\Tr(KK^\dagger)=1$. To bound the first term in~\eqref{eq:prop_long_3b}, we again relax the summation from  $\sum_{V(a,b,x,y)=1}$ to $\sum_{a,b}$. This is valid because all the operators of the form $(K_{xy}-K)\widehat{A_x^a} (K_{xy}-K)^\dagger, B_y \geq 0$ and hence all the additional terms introduced in the sum are nonnegative. The desired result follows because 
\begin{equation}
	\Ex_{(x,y) \sim \mu} \sum_{a,b} \Tr( (K_{xy}-K)\widehat{A_x^a} (K_{xy}-K)^\dagger \widehat{B_y^b})\leq \Ex_{(x,y) \sim \mu}\Tr((K_{x,y}-K)(K_{x,y}-K)^\dagger)=\Ex_{x,y} \|K_{x,y}-K\|_{F}^2,
\end{equation}
which is bounded by Proposition \ref{prop:technical_bound}. Combining all bounds, from~(\ref{eq:prop_long_3}) we get
\begin{align*}
	\Ex_{(x,y) \sim \mu} \sum_{V(a,b,x,y)=1} \frac{\Tr(A_x^a \rhohalf  B_y^b \rhohalf)}{\Tr(A_x\rhohalf B_y \rhohalf )} &=\Ex_{(x,y) \sim \mu} \sum_{V(a,b,x,y)=1} \Tr(\widehat{A_x^a}K_{xy} \widehat{B_y^b}K_{xy}^\dagger)  \\
	&\leq \Ex_{(x,y) \sim \mu}  \sum_{V(a,b,x,y)=1} \Tr(\widehat{A_x^a}K \widehat{B_y^b}K^\dagger) \\
	&\qquad+|\Tr(K_{xy} \widehat{A_x^a} K_{xy}^\dagger \widehat{B_y^b})-	\Tr(K \widehat{A_x^a} K^\dagger \widehat{B_y^b})|  \\
	&\leq \valst(G)+ 2\cdot \left(\Ex_{(x,y) \sim \mu} \|K_{xy}-K\|_{F}^2 \right)^{1/2} \\
	&\leq \valst(G)+ 2\sqrt{\frac{12\lambda}{\delta}}.
\end{align*}
The latter is bounded by $\eps$ by the choice of $\lambda$. It only remains to prove Proposition \ref{prop:technical_bound}.
\begin{proof}[Proof of Proposition \ref{prop:technical_bound}]
	We have 
	\begin{equation}\label{eq:main_prop_long}
	\|K_{xy}-K\|_{F} \leq \bigg\|\frac{\Lambda_x\Gamma_y}{\sqrt{\Tr(\Lambda_x^2\Gamma_y^2)}}- \frac{\Lambda\Gamma}{\sqrt{\Tr(\Lambda^2\Gamma^2)}} \bigg\|_{F} +  \bigg\|\frac{\Lambda_x\Gamma_y}{\sqrt{\Tr(\Lambda^2\Gamma^2)}}- \frac{\Lambda\Gamma}{\sqrt{\Tr(\Lambda^2\Gamma^2)}} \bigg\|_{F}.
\end{equation}
For the first term, 
\begin{align}
	\Ex_{(x,y) \sim \mu} \bigg\|\frac{\Lambda_x \Gamma_y}{\sqrt{\Tr(\Lambda_x^2\Gamma_y^2)}}- \frac{\Lambda_x\Gamma_y}{\sqrt{\Tr(\Lambda^2\Gamma^2)}} \bigg\|_{F}^2 &= \Ex_{(x,y)\sim \mu} \Tr(\Lambda_x^2\Gamma_y^2)\cdot \bigg(\frac{1}{\sqrt{\Tr(\Lambda_x^2\Gamma_y^2)}}- \frac{1}{\sqrt{\Tr(\Lambda^2\Gamma^2)}}\bigg)^2 \notag\\
	& =\frac{1}{\Tr(\Lambda^2\Gamma^2)} \Ex_{(x,y) \sim \mu} \bigg( \sqrt{\Tr(\Lambda_x^2\Gamma_y^2)}- \sqrt{\Tr(\Lambda^2\Gamma^2)}\bigg)^2  \notag \\
	 &\leq  \frac{1}{\Tr(\Lambda^2\Gamma^2)} \Ex_{(x,y) \sim \mu} |\Tr(\Lambda_x^2\Gamma_y^2)-\Tr(\Lambda^2\Gamma^2)| \notag \\
	 &\leq \frac{1}{\Tr(\Lambda^2\Gamma^2)} \Ex_{(x,y) \sim \mu} |\Tr( (\Lambda_x^2-\Lambda^2) \Gamma_y^2)| + |\Tr(\Lambda^2(\Gamma_y^2-Y^2))|   \notag \\
	 & \leq\frac{1}{\Tr(\Lambda^2\Gamma^2)} \left|\Ex_{x}[\Tr( (\Lambda_x^2-\Lambda^2)^2)]\right|^{1/2}\cdot \left|\Ex_{y}[\Tr(\Gamma_y^4)] \right|^{1/2} \label{eq:first_term_prop_long1}\\  
	 &+ \label{eq:first_term_prop_long2} \frac{1}{\Tr(\Lambda^2\Gamma^2)} \left|\Ex_{x}[\Tr( (\Gamma_y^2-\Gamma^2)^2)]\right|^{1/2}\cdot \Tr(\Lambda^4)^{1/2}, 
\end{align}
where  the last step follows from two applications of Cauchy-Schwarz. Rewriting the above in terms of $A_x, B_y$ and $\rho$ using~(\ref{eq:square_of_x}) and its analogues we see that the term in~(\ref{eq:first_term_prop_long1}) equals
\begin{equation}
	\frac{1}{\Tr(A\rhohalf B\rhohalf)}\left|\Ex_{x}[\Tr((A_x-A)\rho^{1/2} (A_x-A)\rho^{1/2})]\right|^{1/2}\cdot \left|\Ex_{y}[\Tr(B_y\rhohalf B_y \rhohalf)\right|^{1/2} 
\end{equation}
Bounding  the last term $\Tr(B_y\rhohalf B_y \rhohalf)$ by $1$ and the first term by $2\lambda$ (which follows by applying Fact \ref{fact:cor_cs} and Claim \ref{claim:expansion}) and doing the same analysis for (\ref{eq:first_term_prop_long2}) we see that
\begin{equation}
	\Ex_{(x,y) \sim \mu} \bigg\|\frac{\Lambda_x \Gamma_y}{\sqrt{\Tr(\Lambda_x^2\Gamma_y^2)}}- \frac{\Lambda_x\Gamma_y}{\sqrt{\Tr(\Lambda^2\Gamma^2)}} \bigg\|_{F}^2 \leq \frac{8\lambda}{\delta}.
\end{equation}
To bound the second term in (\ref{eq:main_prop_long}) we argue as follows:
\begin{align} \notag
\| \Lambda_x \Gamma_y - \Lambda \Gamma\|_F^2 &\leq 2\cdot\|(\Lambda_x-\Lambda)\Gamma_y\|_F^2+2\cdot \|(\Gamma_y- \Gamma)X\|_F^2 \\
&= 2\cdot \Tr(\Gamma_y^2 (\Lambda_x-\Lambda)^2) + 2\cdot \Tr((\Gamma_y-\Gamma)^2 X^2)  \\
&\leq 2\cdot \Tr(\Gamma_y^4)^{1/2} \cdot \Tr( (\Lambda_x-\Lambda)^4)^{1/2}  +2 \cdot \Tr(\Lambda^4)^{1/2}  \cdot \Tr((\Gamma_y-\Gamma)^4)^{1/2}  \\
&\leq 2\cdot \Tr(\Gamma_y^4)^{1/2}\cdot \Tr[(\Lambda_x^2-\Lambda^2)^2]^{1/2} +2 \cdot \Tr(\Lambda^4)^{1/2}\cdot \Tr[ (\Gamma_y^2-\Gamma^2)^2]^{1/2}, 
\end{align}
where in the last step we used Lemma~\ref{lemma:kit}. Using the same bound on the above terms as in the above we see that 
\begin{equation}
\Ex_{(x,y) \sim \mu} \| \Lambda_x \Gamma_y - \Lambda \Gamma\|_F^2 \leq 8\lambda.
\end{equation}
Since in the large case $\Tr(A\rhohalf B\rhohalf)\geq \frac{\delta}{2}$ the result follows. 
\end{proof}

\section{Discussion and open problems}\label{sec:open_problems}
An obvious open problem is to extend our results to the case of multiplayer entangled games. This is likely to be achievable by combining the ideas of Sections \ref{sec:multiplaer_classical} and~\ref{sec:quantum_generalization}. However, some subtleties arise with respect to the use of the Choi-Jamiolkowski isomorphism, and we leave this for future work. 

An important (and somewhat surprising) message of our work is that there is a modified form of game concatenation with no adverse effect on the entangled value. This is notable because ordinary concatenation may appear to be not very well-behaved with respect to quantum strategies: we typically do not expect that entangled players would be able to answer a number of questions from a game $G$ simultaneously, while preserving the same question/answer statistics as in $G$, as players' measurement operators associated with different questions generally do not commute. 

The concatenation and composition of games play an important role in the classical setting in the context of PCPs \cite{arora_safra,dinur2007pcp} and multiprover interactive  proof systems \cite{babai1990}. It remains to be seen whether ideas related to our ordered fortification can be useful in lifting some of these techniques to the quantum world.

\bibliography{fortbi}
\appendix

\section{Biregularization}\label{app:bireg}

In this section, we prove Lemmas \ref{lem:biregular} and \ref{lem:biregular_graphical}. We start by the second lemma on the graphical games and then derive the general case by reduction. 

\paragraph{Graphical games.}  Suppose $G$ is a graphical game: there is a set of edges $E\subseteq X\times Y$ such that $\mu(x,y)=\frac{1}{|E|}$ for all $(x,y)\in E$. In this case we have 
\begin{equation}
\mu(x)=\frac{|N(x)|}{|E|}, \qquad \mu(y)= \frac{|N(y)|}{|E|}, \qquad \forall x\in X,\, y\in Y.	
\end{equation}
Let $d_x:= |N(x)|$ denote the degree of $x$, and set $S_x$ be the set $\{x\}\times [d_x]$. Define
\[\Xint = \bigcup_{x\sim X} S_x.\]
Note that $|\Xint | =|E| \leq |X| |Y|$. Define $M_{int}((x,i), x)=\frac{1}{d_x}$ for $i\in\{1,\ldots, d_x\}$, and $0$ otherwise. Construct $\Yint $ and $P_{int}$ similarly. 
\begin{proposition}
Let $G_{int}= M_{int} \circ G \circ P_{int}$. Then marginal of $\mu_{int}$ induced on $\Xint $ and $\Yint $ is uniform. Moreover, we have 
\begin{equation}
	 \val(G_{int})=\val(G), \qquad \val^*(G_{int})= \val^*(G). 
\end{equation}
\end{proposition}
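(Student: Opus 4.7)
My plan is to verify the two claims essentially by direct computation, exploiting the fact that in this construction each vertex $x' = (x,i) \in X_{\textit{int}}$ has exactly one neighbor in $X$, namely $x$, and similarly for $Y_{\textit{int}}$. This means that in the concatenated game $G_{\textit{int}} = M_{\textit{int}} \circ G \circ P_{\textit{int}}$, the ``assignments'' $a' : N(x') \to A$ and $b' : N(y') \to B$ that the players must provide are really single answers $a \in A$ and $b \in B$, so strategies for $G_{\textit{int}}$ are in one-to-one correspondence with strategies for $G$ that depend only on the first coordinate of their input.

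\emph{Step 1 (uniform marginals).} From Definition~\ref{def:concat_intro}, the distribution of $G_{\textit{int}}$ is
\[
\mu_{\textit{int}}((x,i),(y,j)) \;=\; \mu(x,y) \cdot \frac{1}{d_x} \cdot \frac{1}{d_y},
\]
whenever $(x,y) \in E$. Marginalizing over $(y,j)$ and using $\mu(x,y) = 1/|E|$ together with $\sum_{y : (x,y)\in E} 1 = d_x$, one obtains $\mu_{\textit{int}}((x,i)) = 1/|E| = 1/|X_{\textit{int}}|$; the computation for $Y_{\textit{int}}$ is symmetric. This gives biregularity.

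\emph{Step 2 (value identities).} The upper bounds $\val(G_{\textit{int}}) \leq \val(G)$ and $\valst(G_{\textit{int}}) \leq \valst(G)$ follow immediately from Proposition~\ref{prop:val_inequality_classical_quantum}, which already shows $\val(G') \leq \val(G)$ and $\valst(G') \leq \valst(G)$ for any concatenated game. For the reverse direction, I will take an optimal (classical or quantum) strategy for $G$ and lift it to $G_{\textit{int}}$ by setting $A'_{(x,i)} = A_x$ and $B'_{(y,j)} = B_y$ for all $i, j$ (and analogously in the classical case). Because $|N((x,i))| = |N((y,j))| = 1$, no ``simultaneous labelling'' issue of the kind discussed in Section~\ref{sec:concat_prelim} arises: the player's response $a'$ is a single element of $A$, and with the lifted strategy the winning probability reduces exactly to $\Ex_{(x,y) \sim \mu} \sum_{a,b} V(a,b,x,y) \langle \psi| A_x^a \otimes B_y^b |\psi\rangle$, which equals $\valst(G)$ (and the analogous computation yields $\val(G)$ classically). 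Combined with the upper bounds from Step~2 this gives equality in both the classical and quantum cases.

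There is no real obstacle here; the only thing to be careful about is that the quantum lower bound requires us to note that the ``ordinary concatenation doesn't preserve $\valst$'' phenomenon is absent in this special case precisely because each outer question has a single inner neighbour, sidestepping the noncommutativity obstruction that motivated ordered fortification.
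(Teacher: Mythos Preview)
Your proposal is correct and follows essentially the same approach as the paper's proof: the paper also computes $\mu_{\textit{int}}((x,i)) = 1/|E|$ directly, invokes the general concatenation facts (Proposition~\ref{prop:val_inequality_classical_quantum}) for $\val(G_{\textit{int}}) = \val(G)$ and $\valst(G_{\textit{int}}) \leq \valst(G)$, and establishes $\valst(G_{\textit{int}}) \geq \valst(G)$ by lifting an optimal strategy via $A_{(x,i)} = A_x$, $B_{(y,j)} = B_y$. Your added observation that $|N((x,i))| = 1$ is exactly why the non-commutativity obstruction is absent here is a helpful elaboration but not a departure from the paper's argument.
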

\begin{proof}
 It is easy to see that for all $\Xint =(x,i) \in \Xint $ we have $\mu_{int}(x_{int} )= \frac{1}{|E|}$ and similarly for all $y_{int}  \in \Yint $. The claims $\val(G_{int})=\val(G)$ and $\val^*(G_{int}) \leq \val^*(G)$ are true for all concatenated games in general. The final claim $\val^*(G_{int})\geq \val^*(G)$ follows by considering the strategy $A_{(x,i)}= A_x$, $B_{(y,j)}=B_y$  which achieves the same value as $(A_x, B_y)$ in $G$. 
\end{proof}

\paragraph{General case.} Although graphical games  include many games considered in applications, it would nevertheless still be nice to extend the above construction to all games. We do not know how to do this exactly, but we can achieve an approximate variant.

The idea is essentially to approximate a general game by a graphical game. More formally, let $\tau \in (0,1)$ be an error parameter and $q$ an integer such that $\frac{|E|}{\tau}\leq q\leq  \frac{2|E|}{\tau}$. We have 
\begin{equation}
\frac{\tau}{2|E|} \leq \frac{1}{q} \leq \frac{\tau}{|E|}. 	
\end{equation}

We would like to define a game  $\tilde{G}$ in which all probabilities in the underlying distribution $\tilde{\mu}(x,y)$ are fractions with denominator $q$. Let $\tilde{X}=X\cup \{x_{nul}\}$ and $\tilde{Y}= Y\cup \{y_{nul}\}$. For every $(x,y)\in X\times Y$ set
\begin{equation}
\tilde{\mu}(x,y)= \frac{ \lfloor q\cdot \mu(x,y) \rfloor}{q}.
\end{equation}
Finally let $\tilde{\mu}(x_{nul}, y_{nul})$ such that $\tilde{\mu}$ is a proper probability distribution (i.e.~by transferring the excess probabilities to $(x_{nul}, y_{nul})$) and put an arbitrary winnable predicate on $(x_{nul}, y_{nul})$. 

\begin{proposition}
The game $\tilde{G}$ is a graphical game with $q$ (possibly parallel) edges. Moreover, we have 
\begin{equation}
\val(G) \leq \val(\tilde{G})\leq \val(G)+\tau, \qquad \val^{*}(G)\leq \val^* (\tilde{G})\leq \val^* (G)+ \tau. 
\end{equation}
\end{proposition}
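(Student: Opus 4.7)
The plan is to verify the three claims --- that $\tilde G$ is graphical with $q$ parallel edges, and that $\val$ and $\val^*$ are preserved up to an additive $\tau$ --- through a direct bookkeeping argument built around the residual weight
\[
r\;:=\;\tilde\mu(x_{nul},y_{nul})\;=\;\sum_{(x,y)\in X\times Y}\bigl(\mu(x,y)-\tilde\mu(x,y)\bigr).
\]
First I would check that $\tilde G$ is a graphical game with $q$ edges: each $\tilde\mu(x,y)=k_{xy}/q$ with $k_{xy}=\lfloor q\mu(x,y)\rfloor\in\Z_{\geq 0}$ for $(x,y)\in X\times Y$, and the residual $\tilde\mu(x_{nul},y_{nul})=(q-\sum k_{xy})/q$ is itself a nonnegative multiple of $1/q$ (nonnegative because $\sum\lfloor q\mu(x,y)\rfloor\leq q\sum\mu(x,y)=q$), so $\tilde\mu$ is supported on $q$ edges of weight $1/q$. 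To control $r$, I would use $0\leq\mu(x,y)-\tilde\mu(x,y)\leq 1/q$ and the fact that at most $|E|$ summands are nonzero to get $r\leq|E|/q\leq\tau$, by the choice $q\geq|E|/\tau$.

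For the classical value bounds I would use a strategy-translation argument. Fix winning answers $(a_{nul},b_{nul})$ at the null pair. Given an optimal strategy $(p^*,q^*)$ for $G$, extend it to $\tilde G$ by setting the answers on the null inputs to $a_{nul},b_{nul}$; using $\tilde\mu\leq\mu$ pointwise on $X\times Y$ and $V\in\{0,1\}$,
\[
\val(\tilde G)\;\geq\;\sum_{X\times Y}\tilde\mu(x,y)V(p^*(x),q^*(y),x,y)+r\;\geq\;\val(G)-r+r\;=\;\val(G).
\]
Conversely, any strategy $(\tilde p,\tilde q)$ for $\tilde G$ restricts to a strategy for $G$, and
\[
\val(\tilde G)\;=\;\sum_{X\times Y}\tilde\mu(x,y)V(\tilde p(x),\tilde q(y),x,y)+r\;\leq\;\sum_{X\times Y}\mu(x,y)V(\tilde p(x),\tilde q(y),x,y)+r\;\leq\;\val(G)+\tau.
\]

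The entangled case is handled by the same translation, with the null POVMs set to $A_{x_{nul}}^{a_{nul}}=B_{y_{nul}}^{b_{nul}}=\Id$ (and all other elements zero) acting on the same shared state $\ket\psi$; these are valid measurement operators, and the null pair contributes exactly $r$ since $\langle\psi|\Id\otimes\Id|\psi\rangle=1$. In the other direction, one drops the POVMs at $x_{nul},y_{nul}$ from an optimal $\tilde G$-strategy and repeats the same calculation with $V(p(x),q(y),x,y)$ replaced by $\sum_{a,b}V(a,b,x,y)\langle\psi|A_x^a\otimes B_y^b|\psi\rangle\in[0,1]$. There is no real obstacle here; the entire proof is a careful accounting of the rounding error $r$, whose magnitude is precisely what the choice of $q$ was tuned to control.
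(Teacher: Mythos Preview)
Your proof is correct and follows essentially the same approach as the paper's: both identify the residual mass $r=\sum_{(x,y)}(\mu(x,y)-\tilde\mu(x,y))\leq |E|/q\leq\tau$, extend or restrict strategies between $G$ and $\tilde G$ using the pointwise inequality $\tilde\mu\leq\mu$ on $X\times Y$, and use that the null pair can always be won. The only cosmetic difference is that the paper phrases the value comparison through the \emph{losing} probability $1-\val$ (which lets the null pair contribute zero automatically), while you track the winning probability directly and carry the term $r$ explicitly; the arithmetic is the same. One small wording slip: in the upper-bound line you write ``any strategy $(\tilde p,\tilde q)$'' but then set $\val(\tilde G)=\sum_{X\times Y}\tilde\mu\,V+r$, which is only an equality for an \emph{optimal} strategy that wins at the null pair---this is clearly what you intend, and the argument goes through.
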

A few remarks are in order: firstly, since the previous construction for graphical games applies equally well in the presence of multiples edges, we can combine it with the above preprocessing to prove Lemma  \ref{lem:biregular}. Secondly, note that the operation $G\rightarrow \tilde{G}$ is value-increasing and hence preserves perfect completeness. Thirdly, note that the right scale for the error parameter $\tau$ is $\frac{c-s}{2}$ where $c-s$ is the completeness-soundness gap.  
\begin{proof}
By construction all $\tilde{\mu}(x,y)$ are integer multiples of $\frac{1}{q}$. This ensures that the same is true for $\tilde{\mu}(x_{nul}, y_{nul})$. Since $\mu(x,y)\geq \tilde{\mu}(x,y)$ for all $(x,y)$, for any strategy $(f,g)$ for $G$ we have 
\[ 1-\val(G,f,g)= \Ex_{(x,y)\sim \mu} \sum_{V(a,b,x,y)=0} f(x,a)\cdot g(y,b)\geq 1-\val(\tilde{G}, f,g),\]
which shows that $\val(G)\leq \val(\tilde{G})$. For the other direction, consider an optimal strategy $(f,g)$ for $\tilde{G}$ (which necessarily always wins on $(x_{nul}, y_{nul})$). We have,
\begin{align*}
1-\val(G) & \leq 1-\val(G,f,g)= 	\Ex_{(x,y)\sim \mu} \sum_{V(a,b,x,y)=0} f(x,a)\cdot g(y,b) \\
& \leq \sum_{x,y} \tilde{\mu}(x,y) \sum_{V(a,b,x,y)=0} f(x,a)\cdot g(y,b)+  \sum_{(x,y)\in E} \left(\mu(x,y)- \tilde{\mu}(x,y)\right) \\
&\leq 1-\val(\tilde{G}) +\tau
\end{align*}
The quantum case is similar.
\end{proof}

\end{document}